\documentclass[a4paper,10pt]{article}
\usepackage{amsthm}
\newtheorem{lemma}{Lemma}

\newtheorem{cor}{Corollary}
\newtheorem{theorem}{Theorem}
\usepackage{graphicx}
\usepackage{graphicx,amssymb}
\usepackage{hyperref}
\usepackage{amsmath}
\usepackage{epsfig, color, graphicx}
\usepackage{enumerate}
\usepackage[margin=1.0in]{geometry}
\usepackage{amssymb}

\title{Point visibility graph recognition is NP-hard 
\thanks  
{ A part of the work was done when the author visited Carleton University under
DFAIT Commonwealth Scholarship of the Government of Canada. 
}
}
\author{
}
\begin{document}

\date{\vspace{-5ex}}
\maketitle

\begin{center}
 \mbox{\begin{minipage} [b] {3in}
 
\centerline{Bodhayan Roy}
\vspace{5mm}
\centerline{ School of Technology and Computer Science}
\centerline{ Tata Institute of Fundamental Research}
\centerline{Mumbai 400005, India}
\centerline{bodhayan@tifr.res.in}
\end{minipage}}
\end{center}
\begin{abstract}
Given a 3-SAT formula, a graph can be constructed in polynomial time such that the graph is a point visibility
graph if and only if the 3-SAT formula is satisfiable.
This reduction establishes that the problem of recognition of point visibility graphs is NP-hard.
\end{abstract}
\section{Introduction}
 The visibility graph is a fundamental structure studied in the field of computational geometry  
and geometric graph theory \cite{bcko-cgaa-08, g-vap-07}. 
 Some of the early applications of visibility graphs included computing 
Euclidean shortest paths in the presence of obstacles \cite{lw-apcf-79} and decomposing 
two-dimensional shapes into clusters \cite{sh-ddsg-79}. 
Here, we consider problems from visibility graph theory.
$\\$  $\\$
Let $P=\{ p_1, p_2, ..., p_n \}$ be a set of $n$ points in the plane. 
We say that two points $p_i$ and $p_j$ of $P$ 
are \emph{visible} to each other if the line 
segment $p_ip_j$ does not contain any other point of $P$. In other words,
$p_i$  and $p_j$ are visible to each other if $P \cap  {p_ip_j}=\{p_i, p_j\}$. If two points are not visible, 
they are called  \emph{ invisible } to each other.  
If a point $p_k \in P$ lies on the segment $p_ip_j$ connecting two points $p_i$ and $p_j$ in $P$, 
we say that $p_k$ blocks the visibility between $p_i$ and $p_j$, and
$p_k$ is called a \emph{ blocker} in $P$.  
$\\$ $\\$
The 
\emph{ point visibility graph} (denoted as PVG)
 of $P$ is defined 
by associating a vertex $v_i$ with each point $p_i$ of $P$ 
and
an undirected edge 
$(v_i, v_j)$
of the PVG if  $p_i$ and $p_j$
are visible to each other. Observe that if no three points of 
$P$ are collinear, then the PVG is a complete graph as
each pair of points in $P$ is visible since there is no blocker in $P$. 
Point visibility graphs have been studied in the context of connectivity \cite{viscon-wood-2012},
chromatic number and clique number \cite{kpw-ocnv-2005, p-vgps-2008}.
For review and open problems on point visibility graphs, see 
Ghosh and Goswami \cite{prob-ghosh}.
$\\ \\$
Given a point set $P$, the PVG of $P$ can be computed in polynomial time. Using the result of
 Chazelle et al. \cite{cgl-pgd-85} or Edelsbrunner et al. 
\cite{Edelsbrunner:1986:CAL},  this can be achieved in $O(n^2)$  
time.
Consider the opposite problem:  given a graph $G$, determine if there is a set of points $P$ 
whose point visibility graph is $G$. This problem is called the point visibility graph 
\emph{recognition} problem \cite{prob-ghosh}.
Identifying the set of properties satisfied by all visibility graphs is called the 
point visibility graph \emph{characterization} problem.
The problem of 
actually drawing one such set of points $P$ whose point visibility graph is the given graph $G$, 
is called the point visibility graph \emph{reconstruction} problem. Such a point set itself is called a 
\emph{visibility embedding} of $G$.
$\\ \\$
Ghosh and Roy \cite{recogpvg-2014}
presented a complete characterization for 
planar point visibility graphs,
 which leads to a linear time recognition and reconstruction algorithm.
For recognizing arbitrary point visibility graphs, they 
presented three necessary conditions, and 
gave a polynomial time algorithm for testing the first necessary condition.
However, it is not clear whether the other two necessary conditions can be checked in polynomial time.
If a 
set of necessary and sufficient conditions for recognizing point visibility graphs can be found such that they can be tested 
in polynomial time, then the recognition problem
lies in P. 
So, it is necessary to investigate the complexity issues of recognizing point visibility graphs.
This problem is known to be in PSPACE,
which is the only upper bound known on the complexity of the problem \cite{recogpvg-2014, prob-ghosh}. On the other
hand, problems of minimum vertex cover, maximum independent set, and maximum
clique of point visibility graphs are shown to be NP-hard   \cite{recogpvg-2014, prob-ghosh}.
%
%
%
%
$\\ \\$
In this paper,
we show that the recognition problem for $PVGs$ is NP-hard.
In Section \ref{secsgg}, we develop a \emph{slanted grid graph} (denoted as $SGG$) that has a unique visibility embedding. 
The embedding of the $SGG$ contains a gridlike structure.
In Section \ref{secmsgg}, we transform the slanted grid graph into a \emph{modified slanted grid graph} (denoted as $MSGG$), 
that also has a unique visibility embedding. 
The unique visibility embedding of the $MSGG$ also contains a gridlike structure, however, an area inside the grid is devoid of 
points. This area is later used to embed another graph inside the $MSGG$. 
In Section \ref{consmsgg} we describe the construction of the $MSGG$.
In Section \ref{unqemmsgg} we begin with lemmas on some less complex graphs and finally prove
that the $MSGG$ has a unique visibility embedding.
In Section \ref{sec3sat} we first introduce a \emph{3-SAT graph}, that has vertices and edges corresponding
to a given 3-SAT formula and its size   polynomial in the 
size of the given 3-SAT formula. 
We describe the construction of the 3-SAT graph in Section \ref{consalg}.
In Section \ref{subsecconsred}, we strategically add this graph to a large enough $MSGG$, and 
call the result a \emph{reduction graph}. The reduction graph inherits collinearity conditions from 
the $MSGG$, so that if it has a visibility embedding, then the configuration of its points belonging
to the  3-SAT graph corresponds   to a truth assignment of the given 3-SAT formula. In Section \ref{subsecconsred},
we prove that if the given 3-SAT formula is not satisfiable, then the reduction graph has no visibility embedding.
In Section \ref{3satred}, we prove the converse direction of the reduction, i.e., that if 
the given 3-SAT formula is satisfiable, then the reduction graph has a visibility embedding. This completes
the reduction.
In Section \ref{conclrem}, we conclude the paper with a few remarks.

\section{Slanted grid graphs} \label{secsgg}
In this section, we define 
a special type of $PVG$ called the
\emph{slanted grid graph} ($SGG$). Intuitively, an $SGG$ is a $PVG$
resembling a grid graph \cite{diestel} with two extra vertices so that in its visibility embedding, every line passes through
at least one of these two vertices. These two extra vertices are called \emph{vertices of convergence}.
$\\ \\ $
Let $G=(V,E)$ be 
the $PVG$ of a point set $P$.
Let $f:V \longrightarrow P$ be a bijection.
We say that the pair $\langle P,f \rangle$ is a \emph{visibility embedding} of $G$
if
$$P \cap p_ip_j = \{p_i,p_j\} \Longleftrightarrow (f^{-1}(p_i),f^{-1}(p_j))\in E \enspace.$$
Let $G_0=(V_0,E_0)$ be a PVG, and 
$\xi=\langle\{p_1,p_2,...,p_n\}, f\rangle$ and $\xi'=\langle\{p_1',p_2',...,p_n'\}, f'\rangle$ be two visibility embeddings of $G_0$.
A line passing through some points   of $\xi$ is simply referred to as a \emph{line} in $\xi$. 
Let $L$ be a line in $\xi$
and let $\langle p_{i_1}, p_{i_2},...,p_{i_{\ell}}\rangle$
be the sequence of all points in $\xi$ that lie on $L$ in this order.
We say that $L$ is \emph{preserved} in $\xi'$
if all the points in the sequence $L'=\langle f'(f^{-1}(p_{i_1})), f'(f^{-1}(p_{i_2})),...,f'(f^{-1}(p_{i_{\ell}}))\rangle$
lie on the same line, in the same order and no other point of $\xi'$ lies on $L'$.
%
 
$ \\ $
Let $m\in \mathbb{N}$ and $n\in \mathbb{N}$ be two numbers such that   $m \geq 3$ and $n\geq 3$.
Consider graph $G_0 = (V_0,E_0)$, where $V_0$ and $E_0$
are defined as follows.
\begin{eqnarray*}
V_0 &=& \{v_{i,j} \vert 1\leq i \leq n \textrm{ and }1\leq j \leq m\} \cup \{v_1,v_2\}\\
E_0 &=& \{(v_{i,j},v_{k,l}) \vert i\neq k \textrm{ and } j\neq l\} 
\cup \{(v_1,v_{1,j})\vert 1\leq j\leq m\} \cup \{(v_2,v_{i,1})\vert 1\leq i\leq n\}\cup\{(v_1,v_2)\}\\
 &\cup&  \{(v_{i,j},v_{i+1,j}) \vert 1 \leq i \leq n-1 \textrm{ and } 1 \leq j \leq m  \} 
 \cup \{(v_{i,j},v_{i,j+1}) \vert 1 \leq i \leq n  \textrm{ and } 1 \leq j \leq m-1  \} 
\end{eqnarray*}
  Observe that $\vert V_0 \vert = mn +2 $    and $\vert E_0 \vert = (mn - m - n + 3)mn + 1$.   
We call this graph a \emph{slanted grid graph} (SGG),
which is a $PVG$ with visibility embedding as explained below.
Consider a set of points 
$$P = \{p_{i,j} \vert 1\leq i \leq n \textrm{ and }1\leq j \leq m\} \cup \{p_1,p_2\}$$
and associate $v_1$ to $p_1$, $v_2$ to $p_2$ and $v_{i,j}$ to $p_{i,j}$ for $1\leq i\leq n$ and $1\leq j\leq m$.
A line passing through exactly two embedding points of $P$ is called an
 \emph{ordinary line}. A line passing through three or more embedding points of $P$ is called a
 \emph{non-ordinary line}.
Choose the coordinates of the points in $P$
in such a way that the non-ordinary lines in $P$ contain   
$\langle p_1,p_{1,j},p_{2,j},...,p_ {n,j}\rangle$ for $1\leq j\leq m$
and $\langle p_2,p_{i,1},p_{i,2},...,p_ {i,m}\rangle$ for $1\leq i\leq n$
(Figure \ref{sgg}).
Then $P$ is a visibility embedding of $G_0$, and points of $P$ are referred to as \emph{embedding points}.

 $ \\ $
In 
the following lemma,
we prove that this visibility embedding is actually unique,
up to the preservation of the lines.
\begin{figure}[h] 
\begin{center}
\centerline{\hbox{\psfig{figure=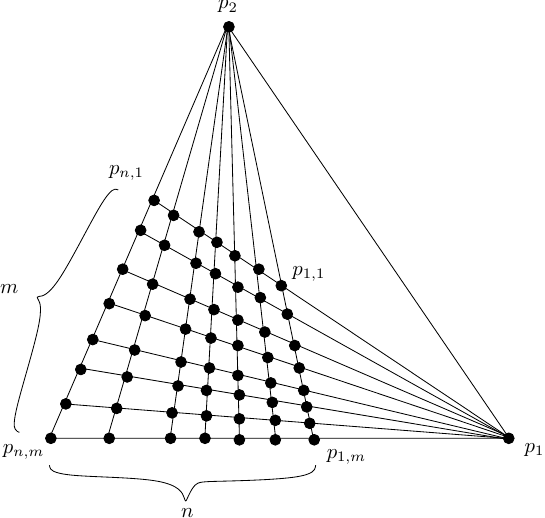,width=0.700\hsize}}}
\caption{Visibility embedding of a slanted grid graph.   The graph contains $mn +2$ vertices and 
$(mn-m-n+3)mn + 1$ edges.}
\label{sgg}
\end{center}
\end{figure}
\begin {lemma} \label{lem1}
$G_0$ has a unique visibility embedding, up to the preservation of lines (Figure \ref{sgg}).
\end {lemma}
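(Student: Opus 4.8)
The plan is to show that in any visibility embedding $\xi'=\langle P',f'\rangle$ of $G$, the collinearities and their orders are forced to be exactly those of the standard embedding $\xi$ of Figure~\ref{sgg}: the $m$ column lines through $p_1'$ and the $n$ row lines through $p_2'$, with no other three points collinear. Throughout I use the basic dictionary of a PVG: $f'(v)f'(w)$ is an edge iff the open segment between $p_v'$ and $p_w'$ contains no point of $P'$; equivalently, among all points of $P'$ lying in a fixed direction from a point $q$, exactly the nearest one is visible from $q$. Consequently the points other than $p_1'$ split into pencils of rays from $p_1'$, one ray per occupied direction, and the nearest point on each ray is a neighbour of $v_1$. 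Since $N(v_1)=\{v_2\}\cup\{v_{1,1},\dots,v_{1,m}\}$ has $m+1$ elements, there are exactly $m+1$ such rays, fronted by these vertices; dually, $p_2'$ determines $n+1$ rays fronted by $N(v_2)=\{v_1\}\cup\{v_{1,1},\dots,v_{n,1}\}$. Note that no first-row vertex can be hidden behind another front of $p_1'$ (each is visible from $p_1'$), so the $m$ first-row vertices front $m$ distinct rays; dually for the first column.

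First I would pin down the global combinatorics with a counting identity. A pair of vertices is a non-edge of $G$ precisely when the two points are non-consecutive on some maximal line of $\xi'$, and each blocked pair lies on a unique maximal line; hence $|\overline{E}|=\sum_{\ell}\binom{k_\ell-1}{2}$, summed over maximal lines $\ell$ with $k_\ell\ge 3$ points. A direct count in $G$ gives $|\overline{E}|=m\binom{n}{2}+n\binom{m}{2}$. On the other hand, two points on a common ray through $p_1'$ cannot also share a ray through $p_2'$ unless they lie on the line $p_1'p_2'$; excluding grid points from this line (using that $(v_1,v_2)$ is an edge, so $p_1',p_2'$ are mutually visible, together with the adjacencies of the first row and column to $v_2,v_1$), each of the $m$ first-row rays meets the $n$ first-column rays in at most one point. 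This makes the assignment of each grid point to its pair (column-ray, row-ray) injective, hence a bijection onto $[m]\times[n]$, so every column-ray carries exactly $n$ grid points and every row-ray exactly $m$. Feeding these sizes into the line-sum shows that the lines through $p_1'$ already contribute $m\binom{n}{2}$ and those through $p_2'$ contribute $n\binom{m}{2}$, exhausting $|\overline{E}|$. Therefore no maximal line of $\xi'$ with $\ge 3$ points avoids both $p_1'$ and $p_2'$; this is the \emph{convergence} property.

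With convergence in hand the remaining adjacencies become rigid. If two grid points lie on different rays of $p_1'$ and different rays of $p_2'$, then any blocker between them would create a $\ge 3$-point line missing both centres, which is impossible; hence such a pair is always visible. Thus, in the $[m]\times[n]$ coordinates given by (column-ray, row-ray), two grid points are adjacent exactly when they share a ray and are consecutive on it, or when they differ on both rays, matching the defining adjacency of $G$. It remains to identify the geometric ray-partition with the labelled rows and columns: the front of each column-ray is the unique first-row vertex on it, so the $m$ column-rays carry the $m$ first-row vertices one each, and the path induced on a column-ray has this front as an endpoint, which fixes the order along the ray. Matching each column-ray to the labelled column it must equal, and dually for rows, finally shows that every line $\langle p_1,p_{1,j},\dots,p_{n,j}\rangle$ and $\langle p_2,p_{i,1},\dots,p_{i,m}\rangle$ is preserved in $\xi'$ and that no further collinearity occurs, which is exactly uniqueness up to preservation of lines.

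The main obstacle I expect is the step forcing the \emph{correct} ray-assignment, i.e.\ that the geometric partition into pencils coincides with the labelled rows and columns rather than some other orthogonal family of induced paths. Locally one cannot prevent the vertex immediately behind a front from being a bulk vertex $v_{k,l}$ instead of the intended column-successor, since an induced path on three grid points need not be collinear in a row or column: for instance $v_{1,1},v_{2,2},v_{1,3}$ already induce a path. The counting identity above is the lever that defeats this, for by showing the pencils through $p_1'$ and $p_2'$ account for \emph{all} blocked pairs it forbids any stray collinearity and thereby rigidifies the assignment. A secondary technical point, which I would dispatch first, is excluding grid points from the line $p_1'p_2'$ so that the (column-ray, row-ray) coordinates are genuinely well defined in $[m]\times[n]$; this needs a short case analysis using the visibility of $v_1,v_2$ and their adjacencies to the first row and first column.
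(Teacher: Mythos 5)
Your route differs from the paper's: the paper simply asserts that the embedding points must sit on the $m\times n$ intersection points of the two pencils of rays and then propagates adjacencies row by row, whereas you try to \emph{derive} this via the double-counting identity $|\overline{E}|=\sum_{\ell}\binom{k_\ell-1}{2}$ and an equality analysis. That identity is correct, the count $|\overline{E}|=m\binom{n}{2}+n\binom{m}{2}$ is right, and the equality argument (every other distribution of points over lines through $p_1'$ and $p_2'$, including two column-rays being opposite, strictly increases the sum) is a genuinely nicer lever than the paper's bare ``Observe that''. The second half of your argument (fronts of rays are the neighbours of $v_1,v_2$, path structure fixes the order along each ray) matches the paper's.

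However, there is a load-bearing gap at the step you dismiss as ``a secondary technical point'': excluding grid points from the line $p_1'p_2'$. Everything downstream --- injectivity of the (column-ray, row-ray) assignment, hence the exact ray sizes $n$ and $m$, hence the exhaustion of $|\overline{E}|$ --- depends on it, and it does not follow from ``a short case analysis using the visibility of $v_1,v_2$ and their adjacencies''. A grid point sitting on the ray from $p_2'$ pointing directly away from $p_1'$ is locally consistent with everything you invoke: its nearest representative must be some $p_{k,1}'$ with $k\neq 1$, which is adjacent to $v_2$ (it fronts that ray) and non-adjacent to $v_1$ (blocked by $p_2'$), exactly as $G$ requires. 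Nor can your counting identity be bootstrapped to rule this out, because without the bijection the contribution of the pencil at $p_1'$ is only bounded below by convexity over at most $m+1$ lines sharing $mn+1$ points, and that bound falls strictly short of $m\binom{n}{2}$ (already for $m=n=3$). The exclusion can be proved, but it needs a separate capacity argument --- e.g., if the away-ray of $p_2'$ is occupied then only $n-1$ row-rays remain for the off-line grid points, each meeting each column-ray at most once, which forces at least $m$ grid points onto the line $p_1'p_2'$ and from there a contradiction with the non-edges such a long line would create. That argument is absent from your proposal, so as written the proof is incomplete at its pivotal step; to be fair, the paper's own proof hand-waves at exactly the same spot.
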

\begin{proof} 
%
Suppose that the  embedding points $\{ p_{1,1}, p_{2,1}, \ldots, p_{n,1} \}$ lie on both sides of $\overleftarrow{p_2p_1}$.
Let $p_{x,1}$ and $p_{y,1}$ be the  embedding points from $\{ p_{1,1}, p_{2,1}, \ldots, p_{n,1} \}$
such that $\overleftarrow{p_2p_{x,1}}$ and $\overleftarrow{p_2p_{y,1}}$ make the smallest angles with $\overleftarrow{p_2p_1}$
to its left and right respectively. Now, only one  embedding point among $p_{x,1}$ and $p_{y,1}$ (say, $p_{x,1}$) can be $p_{1,1}$.   
So, some point $p_{a,b}$ must block $p_{y,1}$ from $p_1$. But then $\overleftarrow{p_2p_{a,b}}$ forms a smaller angle 
than $\overleftarrow{p_2p_{y,1}}$ with $\overleftarrow{p_2p_1}$ on the same side, a contradiction. 
Therefore, embedding points of $\{ p_{1,1}, p_{2,1}, \ldots, p_{n,1} \}$, and the remaining embedding points of
$P$ must lie on the same side of $\overleftarrow{p_2p_1}$.
$\\ \\$
Since $p_1$ and $p_2$ are mutually visible, no other embedding point can lie on $\overline{p_1p_2}$.
Consider an embedding point of $\{ p_{1,1}, p_{2,1}, \ldots, p_{n,1} \}$ lying on $\overleftarrow{p_1p_2}$.
Since the consecutive  embedding points of $\{p_1, p_{1,1}, p_{2,1}, \ldots, p_{n,1} \}$ are all visible to each other 
and all of them are visible from $p_2$, and $n \geq 3$, they must either be collinear or form a reflex chain facing $p_2$.
In either of these two cases, none of the embedding points of $\{ p_{1,1}, p_{2,1}, \ldots, p_{n,1} \}$ lies on $\overleftrightarrow{p_1p_2}$.
Since $m \geq 3$, an analogous reasoning shows that none of the embedding points of 
 $\{ p_{1,1}, p_{1,2}, \ldots, p_{1,m} \}$ lies on $\overleftrightarrow{p_1p_2}$.
 Hence, no remaining point of $P$ lies on $\overleftrightarrow{p_1p_2}$. 
$\\ \\$
There are $m+1$ and $n+1$ vertices in $G_0$ adjacent to $v_1$ and $v_2$  respectively. 
So there are $m+1$ and $n+1$ rays originating from $p_1$ and $p_2$ in a visibility embedding of $G_0$ respectively.
Consider the line passing through $p_1$ and $p_2$. 
We
leave aside the two rays from $p_1$ and $p_2$ because we have proved that no remaining point of $P$ lies on $\overleftrightarrow{p_1p_2}$.
   Observe that embedding points can only be placed on the possible intersection points of the remaining rays.
There are at most $m \times n$ intersection points formed by the remaining rays. 
However, since there are $m \times n$ remaining vertices of $G_0$, there are exactly 
$m \times n$ intersection points formed by the remaining rays.  
We call this configuration a \emph{slanted grid}.
  Observe that neither $p_1$ nor $p_2$ can be placed inside the convex hull of the other points,
 because otherwise 
 either not all embedding points lie on the same side of $\overleftarrow{p_2p_1}$, or some other embedding point 
 lies on $\overleftrightarrow{p_1p_2}$, a contradiction.
Wlog we assume that $p_1$ is placed
to the right of all other points, and $p_2$ is placed above all the other points (see Figure \ref{sgg}).
For convenience, we refer to the rays from $p_1$ as \emph{horizontal} and the rays from $p_2$ 
as \emph{vertical}.
Hence, the embedding points can only permute their positions
on the intersection points of rays.

$ \\$
Since $\{v_{1,1}, v_{2,1}, \ldots, v_{n,1} \}$ are adjacent to $v_2$, 
the embedding points $\{p_{1,1}, p_{2,1}, \ldots, p_{n,1} \}$ must occur on the topmost horizontal ray from $p_1$.
Since $v_{1,1}$ is adjacent to $v_1$, $p_{1,1}$ must be embedded to the left of $p_1$ with no other embedding point 
on $p_{1,1}p_1$. For $1 \leq i \leq n-1$,  $v_{i+1,1}$ is adjacent to $v_{i,1}$, and therefore,
$p_{i+1,1}$ must be embedded to the left of $p_{i,1}$ with no other embedding point 
on $p_{i+1,1}p_{i,1}$. Hence, the topmost horizontal line is preserved.
Since the vertices $\{v_{1,2}, v_{2,2}, \ldots, v_{n,2} \}$ are the only vertices other than $v_2$
adjacent to all the vertices $\{v_{1,1}, v_{2,1}, \ldots, v_{n,1} \}$,
  $\{p_{1,2}, p_{2,2}, \ldots, p_{n,2} \}$ must occur on the second-topmost horizontal ray from $p_1$.
By applying the previous reasoning, the second-topmost horizontal line is also preserved.
Similar arguments hold for other horizontal and vertical rays.
\end{proof}
 \begin{figure}[h] 
\begin{center} 
\centerline{\hbox{\psfig{figure=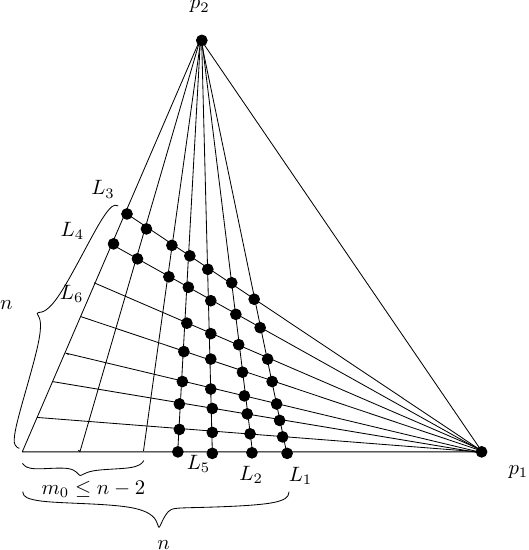,width=0.700\hsize}}}
\caption{An embedding of the $SGG$ after deletion of vertices.}
\label{dsgg}
\end{center}
\end{figure}
\begin{figure}[h] 
\begin{center} 
\centerline{\hbox{\psfig{figure=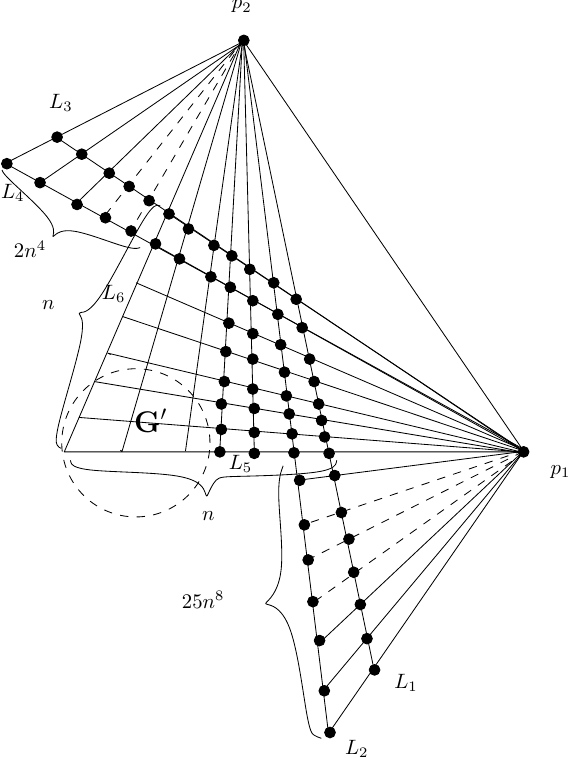,width=0.700\hsize}}}
\caption{Visibility embedding of a modified slanted grid graph. $G'$ is not a part of the MSGG.  }
\label{msgg}
\end{center}
\end{figure}
\section{Modified slanted grid graphs} \label{secmsgg}
In a visibility embedding of an $SGG$, every intersection point contains an embedding point. However,
if we delete some embedding points   (Figure \ref{dsgg}), then it is not clear whether the visibility graph of the remaining point set 
has a unique visibility embedding.
In order to ensure a   unique embedding after deletion, some new vertices are added to $G_0$ which facilitate 
a unique visibility embedding of the resulting graph $G$. In general, the vertices are added such that in every embedding there 
are four lines passing through a large number of embedding points, that enforce certain collinearity conditions 
and result in a unique visibility embedding up to the preservation of lines.
\subsection{Construction of a modified slanted grid graph } \label{consmsgg}
 Consider the unique embedding $ \xi$ of a slanted grid graph  $G_0$ with $n \times n$ embedding points
 and the two embedding points of convergence.
 We construct a modified slanted grid graph (denoted as $MSGG$) $G(V,E)$
 by adding some vertices to $G_0$ and then deleting some other vertices from it.
 We now describe the modification of $G_0$. Let $G_0$ be the $SGG$ on $(n \times n) + 2$ vertices defined in the previous section.
 Consider a positive intger $m_0$,   $m_0 \leq n-2$.
 Note that the removal of vertices from $V$ also implies the removal of their incident edges from $E$.
 We make the following modifications in $G_0$ to construct $G$.
 \begin{eqnarray*}
V &=& (V_0 \setminus \{ v_{i,j} \vert   n-m_0+1 \leq i \leq n  \textrm{ and }     3\leq j \leq n      \} )  
\cup \{v_{i,j} \vert n+1 \leq i \leq 2n^4-n \textrm{ and } 1\leq j \leq 2\}  \\
& &  \cup \{v_{i,j} \vert 1\leq i \leq 2 \textrm{ and } n+1 \leq j \leq 25n^8 + n\}                 \\
E &=& E_0 \cup \{ (v_{i,j},v_{i-1,j}) \vert  n+1 \leq i \leq 2n^4 +n \textrm{ and } 1 \leq j \leq 2 \}\\
& & \cup \{  (v_{i,1}, v_{j,2}) \vert  1 \leq i, j \leq 2n^4 +n       \} \\
& & \cup \{  (v_{i,j}, v_{k,l}) \vert  n+1 \leq i \leq 2n^4 + n \textrm{ and } 1 \leq j \leq 2 
\textrm{ and }  1 \leq k \leq n-m_0          \textrm{ and }         3 \leq l \leq n           \} \\ 
& & \cup \{ (v_{i,j-1},v_{i,j}) \vert   1 \leq i \leq 2  \textrm{ and } n + 1 \leq j \leq 25n^8 + n \} \\
& & \cup \{  (v_{1,i}, v_{2,j}) \vert  1 \leq i, j \leq 25n^8 + n      \} \\
& & \cup \{  (v_{i,j}, v_{k,l}) \vert  1 \leq i \leq 2 \textrm{ and }  n+1 \leq j  \leq 25n^8 +n
\textrm{ and }  3 \leq k \leq n-m_0           \textrm{ and }         1 \leq l \leq n           \} \\ 
& & \cup \{  (v_{i,j}, v_{k,l}) \vert  1 \leq j,k \leq 2 \textrm{ and }  3 \leq i   \leq 2n^4 + n
\textrm{ and }    3 \leq l \leq 25n^8 + n          \} \\ 
\end{eqnarray*}
Intuitively the $m_0(n-2)$ vertices that are deleted correspond to an $m_0 \times (n-2)$   grid in the visibility embedding of $G$.
  Since $m_0 \leq n-2$, the vertices of $L_1$ and $L_2$ are not deleted. 
  If $m_0^2$ vertices are later added to the graph with suitable adjacency relationships,
then they can be forced to be embedded on particular horizontal lines (see the proof of Lemma \ref{canemb}).
$\\ \\$
Now we construct a visibility embedding of the modified $G$, from the initial unique visibility
embedding of $G$ in Figure \ref{sgg}.
 Let $L_1$ and $L_2$ be the rightmost and second-rightmost lines of the visibility-embedding of an $MSGG$ (Figure \ref{msgg}).
 Let $L_3$ and $L_4$ be the topmost and second-topmost lines of the visibility-embedding of an $MSGG$.
 The
 bottommost horizontal line and leftmost vertical line are labelled $L_5$ and $L_6$, respectively.
As shown in Figure \ref{msgg}, the two points of convergence are above and to the 
right of the embedding. As before, $p_{i,j}$ is the embedding point corresponding to the vertex $v_{i,j}$.
%
%
\begin{enumerate}
 
\item \label{secstep} Delete the $(n-2) \times m_0$ bottom-left subgrid of $G$ (See Figure \ref{dsgg}).
 At a later stage, we embed a gadget in the space thus created.
\item \label{thrstep} To the left of $L_6$, place $2n^4$ embedding points on $L_4$, and 
 $2n^4$ embedding points on $L_3$ (See Figure \ref{msgg}).
These embedding points must be placed in such a way that 
$(a)$ each embedding point added to $L_3$ blocks an embedding point on $L_4$ from $p_2$,
$(b)$ each embedding point added on $L_3$ sees every embedding point of $G$ not on $L_3$, and
$(c)$ each embedding point added on $L_4$ sees every embedding point of $G$ not on $L_4$.
To achieve this, the embedding points on $L_4$ and $L_3$ 
are added by considering the intersections of $L_3$ and $L_4$ with lines containing the edges of $G$.
Such intersections are at most twice the number of edges in $G$. Each new embedding point can be placed on $L_4$ and its 
blocker on $L_3$ by avoiding these intersections.
  Add all edges between embedding points that are visible to each other.

\item Below $L_5$, place $25n^8$ embedding points on $L_2$, and
%
$25n^8$ embedding points on $L_1$ (See Figure \ref{msgg}).
These embedding points must be placed in such a way that $(a)$ each embedding point added to $L_1$ blocks
an embedding point on $L_2$ from $p_1$,
$(b)$ each embedding point added on $L_1$ sees every embedding point of $G$ not on $L_1$, and
$(c)$ each embedding point added on $L_2$ sees every embedding point of $G$ not on $L_2$.
To achieve this, the embedding points are added by following the method in step \ref{thrstep}.
%
 Add all edges between embedding points that are visible to each other.
%
\end{enumerate}
Henceforth $G$ is referred to as the \emph{modified slanted grid graph}, denoted as $MSGG$.
Observe that 
 for all pairs of embedding points $p_{i,j}$ and $p_{k,l}$ where $i \neq k$ and $j \neq l$,
$p_{i,j}$ and $p_{k,l}$ are mutually visible.

\subsection{Unique visibility embedding of MSGG} \label{unqemmsgg}
Here we prove that every $MSGG$ has
only one visibility embedding, unique up to the preservation of lines.
We start with a few properties.
 \begin{figure}[h] 
\begin{center} 
\centerline{\hbox{\psfig{figure=,width=0.600\hsize}}}
\caption{(a) The point $p_i$ is on $L$ and hence sees at most $l+2$ embedding points. (a) The point
$p_i$ is not on $L$ and hence sees at least $k$ embedding points.}
\label{figsmallemma}
\end{center}
\end{figure}
\begin{lemma}\label{smallemma}
Let $H_1$ be a $PVG$ with visibility embedding $\xi$. Let $L$ be a line in  $\xi$
such that $(i)$ there are $k$ embedding points on $L$, and $(ii)$ $l$ embedding points not on $L$.
Let $v_i$ be a vertex of $H_1$ and $p_i$ its corresponding embedding point in $\xi$.
If $p_i$ lies on $L$, then $deg(v_i) \leq l+2$. Otherwise, 
$deg(v_i) \geq k$.
\end{lemma}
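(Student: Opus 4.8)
The plan is to prove the two inequalities separately by a counting argument based on which points an embedding point $p_i$ can possibly see. Consider the line $L$ with its $k$ points and the $l$ points off $L$. The key observation is that any line in the plane meets $L$ in at most one point (unless it equals $L$), so collinearity constraints that destroy visibility behave very differently depending on whether $p_i$ sits on $L$ or not.

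First I would handle the case $p_i \in L$, to get $\deg(v_i) \le l+2$. Here $p_i$ can certainly fail to see many of the other $k-1$ points of $L$: once more than two points are collinear on $L$, each interior point is blocked from the points beyond its immediate neighbours on $L$. Concretely, $p_i$ sees at most its two neighbours along $L$ (the nearest point on each side), contributing at most $2$ to its degree from points of $L$. Every other neighbour of $v_i$ must correspond to a point \emph{off} $L$, of which there are only $l$. Hence $\deg(v_i) \le l + 2$. The clean way to phrase this is that the points of $L$ lying strictly beyond an immediate neighbour of $p_i$ are all blocked by that neighbour, so they are invisible to $p_i$; thus at most two points of $L$ can be visible to $p_i$.

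Next I would treat the case $p_i \notin L$, to get $\deg(v_i) \ge k$. The idea is that from a point $p_i$ off the line $L$, the $k$ points on $L$ are seen along $k$ \emph{distinct} rays from $p_i$ (no two points of $L$ are collinear with $p_i$, since that would force three collinear points with $p_i$ on $L$, contradicting $p_i \notin L$). Along each such ray, the point of $L$ that is \emph{closest} to $p_i$ on that ray is visible to $p_i$: nothing lies between them on the segment. Because the $k$ rays to the $k$ points of $L$ are pairwise distinct and each contributes at least one visible point of $L$, $p_i$ sees at least $k$ of the points. Therefore $\deg(v_i) \ge k$. (In fact each point of $L$ lies on its own ray from $p_i$, so all $k$ are individually reachable, but the weaker ``one per ray'' statement already suffices.)

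The main obstacle to watch is the off-$L$ direction: one must argue carefully that the $k$ points of $L$ really do lie on $k$ distinct rays emanating from $p_i$ and that the nearest point along each is unblocked by points not on $L$. The first part is immediate from $p_i \notin L$ (no ray can hit two points of $L$ without $p_i$ being collinear with them, i.e.\ on $L$); the second part needs only that blocking requires strict betweenness on a segment, so the closest point on a given ray cannot be blocked from $p_i$. Once these geometric facts are pinned down, both bounds follow directly from counting rays, and no heavy computation is required.
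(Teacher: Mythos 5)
Your proposal is correct and follows essentially the same argument as the paper: at most two visible neighbours on $L$ plus at most $l$ points off $L$ in the first case, and $k$ distinct rays from $p_i$ (distinct because two points of $L$ collinear with $p_i$ would force $p_i\in L$) each contributing a visible point in the second. One minor wording caution: on a given ray the \emph{closest point of $L$} could in principle be blocked by an off-$L$ point on that ray, so the clean statement is that the overall nearest point on each ray is visible, which still yields $\deg(v_i)\ge k$ exactly as you note at the end.
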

 \begin{proof} 
If $p_i$ lies on $L$ then $p_i$ sees at most 
two embedding points on $L$, and at most all $l$ embedding points that are not on $L$ (see Figure \ref{figsmallemma}(a)).
If $p_i$ does not lie on $L$, then all $k$ embedding points on $L$ lie on distinct lines passing through $p_i$.
So, $p_i$ sees at least $k$ embedding points (see Figure \ref{figsmallemma}(b)).
\end{proof}
\begin{lemma} \label{lp} 
Let $H_2$ be a $PVG$ with visibility embedding $\xi$. 
Let $L$ be a line in $\xi$ such that $(i)$ there are $k$ embedding points on $L$, and $(ii)$ $l$ embedding points not on $L$.
If $k \geq (l+3)^2$ then $L$ is preserved 
in every visibility embedding of $H_2$ (Figure \ref{fig1.5lp}(a)).
\end{lemma}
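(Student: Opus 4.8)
```latex
The plan is to show that if a line $L$ carries many embedding points (at least $(l+3)^2$), then in \emph{any} visibility embedding $\xi'$ of $H_2$ the corresponding points must again be collinear, with the same order and no intruders. The natural tool is Lemma \ref{smallemma}, which gives a sharp degree dichotomy: a vertex whose point lies on $L$ has degree at most $l+2$, while a vertex whose point lies off $L$ has degree at least $k$. Since $k \geq (l+3)^2 > l+2$, these two ranges are disjoint, so the property ``my point lies on $L$'' is detected purely by the degree of the vertex in the abstract graph $H_2$. This is the crux: degree is an invariant of $H_2$, independent of the embedding, so the partition of vertices into ``on-$L$'' and ``off-$L$'' is forced to be identical in $\xi$ and in $\xi'$.

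The first step is therefore to verify the degree separation carefully. In $\xi$, each of the $k$ vertices whose point is on $L$ has degree at most $l+2$ (by the first half of Lemma \ref{smallemma}), and each of the $l$ off-$L$ vertices has degree at least $k \geq (l+3)^2$. Because $l+2 < (l+3)^2$ for all $l \geq 0$, the two degree ranges do not overlap, so a vertex $v_i$ lies on $L$ in $\xi$ if and only if $\deg(v_i) \leq l+2$. The second step is to re-read this dichotomy inside the arbitrary embedding $\xi'$: Lemma \ref{smallemma} holds for \emph{any} visibility embedding and any line, so whatever line $L'$ the $k$ low-degree points happen to span in $\xi'$, the same counting shows a point sits on $L'$ exactly when its vertex has degree $\leq l+2$. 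Matching the two characterizations forces the set of $k$ points on $L$ in $\xi$ to coincide with a set of $k$ mutually low-degree points in $\xi'$.

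The third step is to argue these $k$ points are actually collinear in $\xi'$. Here I would use a counting/pigeonhole argument on the $k$ low-degree points: each has degree at most $l+2$, so it is adjacent to at most $l+2$ other vertices; since there are only $l$ points off $L$, each on-$L$ point can see at most two other on-$L$ points (it would need to ``use up'' its remaining adjacency budget on them). If the $k$ points were not collinear in $\xi'$, some point would see too many of its fellows — forcing a degree exceeding $l+2$ and contradicting the bound. A clean way to see this: among $k$ points, if they are not all on one line, then by the large value of $k$ relative to $l$ one can find a point that is visible to at least three of the others along the purported line, violating the at-most-two visibility that a genuinely collinear configuration enforces. The quadratic threshold $(l+3)^2$ is exactly what makes this pigeonhole go through, since it dominates the total number of off-$L$ obstructions available to break collinearity.

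The final step is to upgrade collinearity to full \emph{preservation}: same order and no off-$L$ point landing on $L'$. Order preservation follows from the adjacency structure among consecutive on-$L$ points (each sees only its immediate neighbours on the line, and these adjacencies are recorded in $H_2$, hence reproduced in $\xi'$); an inversion of order would create or destroy one of these neighbour edges. The ``no intruder'' condition follows because an off-$L$ point has degree $\geq k$, which is incompatible with lying on the crowded line $L'$, where Lemma \ref{smallemma} would instead cap its degree at $l+2$ (now counting the roles reversed, with the off-$L$ point sitting on the line). I expect the main obstacle to be the collinearity step: transferring the ``at most two on-line neighbours'' local condition into a global statement that all $k$ points share a single line in $\xi'$ requires the pigeonhole estimate to be set up so that no configuration of $l$ stray points can simulate collinearity for so many points. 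Everything else is bookkeeping on degrees that the sharp separation $l+2 < k$ makes routine.
```
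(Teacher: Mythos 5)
Your overall strategy is in fact the paper's: exploit the degree dichotomy of Lemma \ref{smallemma} (on-$L$ vertices have degree at most $l+2$, off-$L$ vertices degree at least $k$), note that consequently each on-$L$ vertex is adjacent to at most two other on-$L$ vertices as a fact about the abstract graph $H_2$, and then argue that the image points $\phi(L)$ in $\xi'$ must be collinear, in the same order, with no intruders. Your treatment of intruders (a vertex off $L$ has degree at least $k$, which is incompatible with its point lying on a line carrying $k$ points in $\xi'$) and of the order (consecutive points of $L$ are adjacent, so their images must be consecutive) matches what the paper does.

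The genuine gap is the collinearity step itself, which you correctly flag as the crux but do not actually prove. The assertion ``if the $k$ points were not collinear in $\xi'$, some point would see at least three of the others along the purported line'' is precisely what needs an argument, and there is no single ``purported line'' when the points fail to be collinear. The paper closes this with a case split that your sketch is missing. First, if no $l+3$ points of $\phi(L)$ are collinear in $\xi'$, then every line through a point $p_i\in\phi(L)$ contains at most $l+2$ points of $\phi(L)$, so the $(l+3)^2-1=(l+2)(l+4)$ remaining points of $\phi(L)$ determine at least $l+4$ distinct lines through $p_i$; the nearest point on each such line is visible to $p_i$, giving $\deg(v_i)\geq l+4>l+2$, a contradiction. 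This is exactly where the quadratic threshold is used, and your remark that $(l+3)^2$ ``dominates the number of off-$L$ obstructions'' does not explain why a linear bound would not suffice. Second, if some $l+3$ points of $\phi(L)$ lie on a common line $L'$ but not all of $\phi(L)$ does, pick $p_i\in\phi(L)\setminus L'$ closest to $L'$; since $v_i$ is adjacent to at most two vertices of $\phi(L)$, at least $l+1$ of those $l+3$ points must be blocked from $p_i$ by distinct blockers, and by the choice of $p_i$ none of the blockers can come from $\phi(L)$ --- but only $l$ points are available outside $\phi(L)$. Without these two counting arguments (and the observation that together with the all-collinear case they exhaust every configuration), the proof is incomplete.
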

\begin{proof} 
By the hypotheses, $H_2$ has $k+l$ vertices. Let us assume on the contrary that $L$ is not preserved in some visibility 
embedding $\xi'$ of $H_2$. Let $\phi$ denote the bijection between $\xi$ and $\xi'$. Let $\phi(L)$ denote
the set of images of all embedding points lying on $L$, in $\xi'$. 
We have the following cases depending on the collinearity of the embedding points of $\phi(L)$.  
$\\ \\$
\textbf{Case 1:} All embedding points in $\phi(L)$ are collinear.
$\\$
\textbf{Case 2:} Not all embedding points in $\phi(L)$ are collinear.
$\\ \\$
Consider \emph{Case 1}. 
Let $L'$ be the line containing all embedding points of $\phi(L)$.
Consider the situation where $L'$ contains only the embedding points of $\phi (L)$.
 Let $p_{i-1}$, $p_i$ and $p_{i+1}$ be three consecutive embedding points on $L$
  whose corresponding vertices in $H_2$ are $v_{i-1}$, $v_i$ and $v_{i+1}$ respectively.
  Clearly,  $\phi(p_{i-1})$, $\phi(p_{i})$ and $\phi(p_{i+1})$ must be consecutive embedding points on $L'$,
  since $(v_{i-1},v_i)$ and $(v_i, v_{i+1})$ are edges of $H_2$. A similar argument holds for the first and last 
  embedding points of $L'$. Hence, $L$ is preserved.
  Consider the other situation where $L'$ contains an embedding point $p_i$ not in $\phi(L)$.
Let the corresponding vertex to $p_i$ in $H_2$ be $v_i$.
Since $p_i \notin L$ and $\phi(p_i)$ lies on $L'$, $k \leq deg(v_i) \leq l+2$ by Lemma \ref{smallemma},
contradicting the assumption that $k \geq (l+3)^2$.
$\\ \\$
Consider \emph{Case 2}. If not   all embedding points of $\phi(L)$ are collinear, 
then either  $(i)$ no $(l+3)$ embedding points of $\phi(L)$ 
are collinear, or  $(ii)$ some $(l+3)$ embedding points of $\phi(L)$ are collinear  .
%
%
Consider $(i)$. Let $p_i \in \phi(L)$. Since $\vert \phi(L) \vert \geq (l+3)^2$
by assumption and no $(l+3)$ embedding points of $\phi(L)$ are collinear, there are at least $(l+4)$ 
distinct lines passing through $p_i$. So, the degree of the corresponding vertex $v_i$ of $p_i$ in $H_2$ is at least $(l+4)$.
On the other hand, by Lemma \ref{smallemma}, $deg(v_i) \leq l+2$, a contradiction. 
$\\ \\$
Consider $(ii)$. 
Let $L'$ be a line containing $(l+3)$ embedding points of $\phi(L)$.
Let $p_i \in \phi(L)$ and $p_i \notin L'$ such that $p_i$ is closest to $L'$ among all embedding points of 
$\phi(L)$. Since $\phi ^ {-1} (p_i)$ sees at most two points on $L$, $p_i$ does not see at   least $l+1$ points.
Hence, $p_i$ requires $l+1$ blockers where no blocker is from $\phi(L)$ by the choice of $p_i$.
On the other hand, there are only $l$ points not in $\phi(L)$, a contradiction.
%
%
%
%
%
\end{proof}
 $\\$
Let $L_a$ and $L_b$ be two lines in a visibility embedding $\xi$ of a special type of PVG such that most of the embedding points of
$\xi$ are on $L_a$ and $L_b$ (Figure \ref{fig1.5lp}(b)). In the following lemma, 
we show that $L_a$ and $L_b$ are preserved in every visibility embedding 
of the PVG.
%
%
%
%
\begin{figure}
\begin{center} 
\centerline{\hbox{\psfig{figure=,height=0.400\hsize}}}
\caption{(a) The line $L$ is preserved. (b) Both lines $L_a$ and $L_b$ are preserved.}
\label{fig1.5lp}
\end{center}
\end{figure}
\begin{lemma} \label {2lp}
Let $H_3$ be a PVG with visibility embedding $\xi$. Let
$L_a = \langle p_1,p_2,...,p_k\rangle$, and 
$L_b = \langle p_1,p_{k+1},...,p_{2k-1}\rangle$
be two lines in $\xi$ such that 
$k \geq (l+3)^2$, where $l$ denotes the number of embedding points in $\xi\setminus\{L_a\cup L_b\}$.
%
%
%
%
%
%
%
%
%
Let $p_{2k}$ be an embedding point satisfying the following properties.
\begin{enumerate}
\item   \label{a1.2} The embedding point $p_{2k}$ is adjacent to all embedding points in $L_b$, and is not adjacent
to any other embedding point of $\xi$.
\item \label{a1.1} For $1< i\leq k$, $p_{k+i-1}$ blocks $p_i$ from $p_{2k}$ .
\item  \label{a1.3} Every embedding point in $L_a\setminus \{p_1\}$ is adjacent to every embedding point in $L_b\setminus \{ p_1 \}$.
\item \label{a1.4}  No embedding point in $\xi \setminus (L_a \cup L_b \cup \{ p_{2k} \} )$ is adjacent to   any embedding point of $L_b$.
\end{enumerate}
Then $L_a$ and $L_b$ are preserved in every visibility embedding of $H_3$, and the embedding points in 
$\xi \setminus (L_a \cup L_b \cup \{ p_{2k} \} )$ lie outside the convex hull of $(L_a \cup L_b \cup \{ p_{2k} \})$.
\end{lemma}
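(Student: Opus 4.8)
The plan is to fix an arbitrary visibility embedding $\xi'$ of $H_3$, together with the vertex-bijection $\phi$ taking $\xi$ to $\xi'$, and to show that the two heavy blocks $\phi(L_a)$ and $\phi(L_b)$ reappear as collinear sets, in the same order and with nothing foreign on them; the convex-hull statement is then read off the resulting picture. Throughout I would use only graph invariants that transfer across embeddings: by property~(\ref{a1.2}), $p_{2k}$ has degree exactly $k$ and its neighbourhood is precisely $\phi(L_b)$; by property~(\ref{a1.3}), $A:=\phi(L_a\setminus\{p_1\})$ and $B:=\phi(L_b\setminus\{p_1\})$ span a complete bipartite subgraph; the vertices of $L_a$ (resp.\ $L_b$) induce a path, so in $\xi'$ consecutive images must see each other while non-consecutive ones must be blocked; and by property~(\ref{a1.4}) no vertex outside $L_a\cup L_b\cup\{p_{2k}\}$ sees all of $\phi(L_b)$. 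The numerical engine is $k\ge (l+3)^2$: only $l$ points lie off the two blocks, while each block carries $k\gg l$ points.

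The first step is a clean observation about the apex $p_{2k}$: along every direction the point nearest to $p_{2k}$ is visible to it, hence lies in $N(p_{2k})=\phi(L_b)$. Therefore every point of $\xi'$ lies on one of the $k$ rays emanating from $p_{2k}$ through the $k$ points of $\phi(L_b)$, each ray having its $p_{2k}$-nearest point in $\phi(L_b)$. This packs the whole configuration onto $k$ concurrent lines through $p_{2k}$ and is the device I would use both to force collinearity and to recover the perspectivity between the blocks.

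The crux is to upgrade "$\phi(L_b)$ are $k$ points visible from $p_{2k}$'' to "$\phi(L_b)$ are collinear'', and likewise for $\phi(L_a)$. Here I would rerun the two-horn dichotomy from the proof of Lemma~\ref{lp}: if $\phi(L_b)$ were not collinear then either some $l+3$ of its points are collinear (whence a point of $\phi(L_b)$ off that sub-line sees at most two of them and needs too many blockers), or no $l+3$ of them are, forcing $\ge l+4$ lines through a single point and a degree contradiction with Lemma~\ref{smallemma}; the bound $k\ge (l+3)^2$ is exactly what closes both horns. The new difficulty, and the place where this is not a verbatim copy of Lemma~\ref{lp}, is that the pool of potential blockers now contains the $k-1$ points of the \emph{other} block, not merely the $l$ off-block points; this is where I would spend property~(\ref{a1.4}) (forbidding spurious full neighbours of $\phi(L_b)$) and the ray structure above to keep $A$ separated from $\phi(L_b)$. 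Once $\phi(L_b)$ is collinear on a line $\ell_b$ carrying exactly its $k$ points, I would pin $\phi(L_a)$ by perspectivity: by the ray observation each point of $A$ sits beyond a distinct point of $\phi(L_b)$ on $\ell_b$ (the $l-1$ truly off-block points being too few to crowd a second $A$-point onto any ray without violating property~(\ref{a1.3})), so $A\cup\{\phi(p_1)\}$ is the image of $\phi(L_b)$ under projection from $p_{2k}$, and the complete bipartite visibility of property~(\ref{a1.3}) forces this image to be a straight transversal, i.e.\ $\phi(L_a)$ collinear.

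With both blocks collinear, order- and multiplicity-preservation follow as in Case~1 of Lemma~\ref{lp}: the induced path structure keeps consecutive images consecutive and, via the degree bound $\le l+2<k$ of Lemma~\ref{smallemma}, forbids any foreign point from landing on $\ell_a$ or $\ell_b$. For the convex-hull claim I would show that the guaranteed visibilities tile the hull: every interior point of $\mathrm{conv}(\phi(L_a)\cup\phi(L_b)\cup\{p_{2k}\})$ lies on a segment joining two points that must see each other (an $A$--$B$ segment from property~(\ref{a1.3}), or a $p_{2k}$--$\phi(L_b)$ segment from property~(\ref{a1.2})), so a stray point there would block a mandatory pair; hence the remaining $l-1$ points lie outside. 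I expect the real obstacle to be the collinearity step, and within it the transfer from $\phi(L_b)$ to $\phi(L_a)$: the budget $k\ge (l+3)^2$ must be deployed carefully to stop the off-block points from acting as blockers that would let one block bend while the other stays straight, and the perspectivity argument must be made rigorous near the common apex $p_1$, where the $A$--$B$ segments no longer cover the configuration.
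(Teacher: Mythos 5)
Your overall architecture is reasonable and partly matches the paper (fixing $\xi'$, using $N(\phi(p_{2k}))=\phi(L_b)$ to set up $k$ concurrent rays, recovering order from the path structure), but the two load-bearing steps are exactly the ones you leave unresolved, and as sketched they do not work. First, you cannot ``rerun the two-horn dichotomy from Lemma \ref{lp}'' on $\phi(L_b)$ directly: both horns of that proof are counting arguments in which the only candidate blockers, and the only points off the line, number $l$. For $L_b$ the relevant counts are poisoned by the $k-1$ points of $\phi(L_a)$: a vertex of $L_b\setminus\{p_1\}$ has degree at least $k-1$ (it sees all of $L_a\setminus\{p_1\}$ by property (\ref{a1.3})), so the degree ceiling $l+2$ from Lemma \ref{smallemma} that closes horn (ii) simply does not hold, and in horn (i) the pool of potential blockers for a bent $\phi(L_b)$ contains the $k-1$ points of $\phi(L_a)$, not merely $l$ points. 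Property (\ref{a1.4}) does not repair this: it forbids off-block points from being adjacent to \emph{all} of $\phi(L_b)$, which is irrelevant to whether a point of $\phi(L_a)$ can sit on a segment between two non-consecutive points of $\phi(L_b)$ (indeed, on the far side of a reflex chain facing $\phi(p_{2k})$, that is exactly where the ray structure places the points of $\phi(L_a)$). Second, your transfer step --- that the complete bipartite visibility of property (\ref{a1.3}) forces the perspective image $A\cup\{\phi(p_1)\}$ to be a ``straight transversal'' --- is unjustified and false as stated: visibility between two finite point sets is an open condition, so one point of $A$ can be perturbed off any putative line while all $A$--$B$ visibilities persist; what actually forces collinearity of $\phi(L_a)$ is the \emph{non}-adjacency of non-consecutive vertices of $L_a$ together with a shortage of external blockers.

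The paper resolves both difficulties by inverting your order of attack. It first proves only the weak statement that $\phi(L_b)$ is collinear \emph{or} a reflex chain facing $\phi(p_{2k})$, using a purely local argument on three consecutive points: a blocker between $\phi(p_i)$ and $\phi(p_{i+2})$ would have to be visible to $\phi(p_{2k})$, hence lie in $\phi(L_b)$, which pins down the geometry. From the reflex-chain structure it deduces that no point of $\phi(L_b)\setminus\{\phi(p_1)\}$ or $\phi(p_{2k})$ can block any pair among the remaining points, so Lemma \ref{lp} applies verbatim to the reduced configuration $\xi'\setminus\bigl((\phi(L_b)\setminus\{\phi(p_1)\})\cup\{\phi(p_{2k})\}\bigr)$, where $L_a$ really does carry $k$ points with only $l$ off it; this yields $L_a$ preserved. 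Only then does it return to $\phi(L_b)$: with $\phi(L_a)$ straight and the outside points excluded from the region between the two chains, the blockers for non-consecutive pairs of $\phi(L_b)$ must come from $\phi(L_b)$ itself, which kills the reflex-chain alternative. You should restructure your argument along these lines; as written, the collinearity of $\phi(L_b)$ is assumed before the tools needed to prove it are available.
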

\begin{proof} Let $\xi'$ be any other visibility embedding of $H_3$. 
Let $\phi$ denote the bijection between $\xi$ and $\xi'$. So, $\phi(L_a)$ and $\phi(L_b)$ are  
the images of $L_a$ and $L_b$ in $\xi'$, respectively.
We know that embedding points of $\phi(L_b)$ are adjacent to $\phi(p_{2k})$.
The order of embedding points along $\phi(L_b)$ must be the same as that of $L_b$, because otherwise,
the corresponding 
edges in the PVGs for $\xi$ and $\xi'$ are different, a contradiction.
Consider any three consecutive points 
$\phi(p_i)$, $\phi(p_{i+1})$ and $\phi(p_{i+2})$ of $\xi'$ on $\phi(L_b)$ (Figure \ref{fig2lp}(a)).
If $\phi(p_{i+1})$ is the blocker between $\phi(p_i)$ and $\phi(p_{i+2})$, then they
are collinear. 
Otherwise, consider the triangle $\phi(p_i) \phi(p_{i+2}) \phi(p_{2k})$.
Observe that $\phi(p_i) \phi(p_{i+2}) \phi(p_{2k})$ must be a triangle and not a line segment
with $\phi(p_{2k})$ in the middle,
for otherwise, the points of $\phi(L_b) \setminus \{ \phi(p_i)$, $\phi(p_{i+1})$ and $\phi(p_{i+2}\}$,
which are all visible from $\phi(p_{2k})$
must lie outside of segment $\phi(p_i)   \phi(p_{2k}) \phi(p_{i+2})$. Hence, they   
must be blocked from $\phi(p_i)$ or $\phi(p_{i+2})$. Clearly, there are not enough blockers to achieve this,
hence forcing $\phi(p_i) \phi(p_{i+2}) \phi(p_{2k})$ to be a triangle.
If $\phi(p_{i+1})$ lies inside the triangle, then some point from
$\xi' \setminus ( \phi(L_b) \cup \{ \phi(p_{2k}) \} )$ can block $\phi(p_i)$ and $\phi(p_{i+2})$.
Consider the other case where 
 $\phi(p_{i+1})$ lies outside the triangle.
  The blocker between 
 $\phi(p_i)$ and $\phi(p_{i+2})$ must be adjacent to $\phi(p_{2k})$, and only the points of 
 $\phi(L_b)$ are adjacent to $\phi(p_{2k})$.
 If points from  $\phi(L_b)$ act as blockers between $\phi(p_i)$ and $\phi(p_{i+2})$,
  the blockers must form a chain where consecutive points
see each other. If other points from $\phi(L_b)$ are used as blockers, then
this chain is broken at some point.
 So, there cannot be any blocker of $\phi(p_i)$ and $\phi(p_{i+2})$.
%
%
%
%
Hence, the points of $\phi(L_b)$ must either be collinear or form a 
reflex chain facing $\phi(p_{2k})$ (Figure \ref{fig2lp}(b)).
%
%
$\\ \\$
Before showing that $L_b$ is preserved, we show that $L_a$ is preserved.
Since the embedding points of $\phi(L_b)$ form a reflex chain or a straight line and they are 
the only embedding points adjacent to $\phi(p_{2k})$, no embedding point of $(\phi(L_b) \cup \{ \phi(p_{2k}) \})$
can be a blocker for any pair of the remaining embedding points of $\xi'$.
In addition, these embedding points are also not blockers between $\phi(p_1)$ and any other embedding point of $\xi'$.
So, applying Lemma \ref{lp} on $(\xi' \setminus ((\phi(L_b) \setminus \{ \phi(p_1) \}) \cup \{ \phi(p_{2k}) \} ))$,
we get that $L_a$ is preserved.
$\\ \\$
Since $L_a$ is preserved and $ \vert \phi(L_a) \vert = \vert \phi(L_b) \vert$, the 
embedding points of $\phi(L_a)$ cannot be blockers for pairs of 
embedding points of $\phi(L_b)$.
Observe that as no embedding point of $p_q \in (\xi' \setminus \{ \phi(L_a) \cup \phi(L_b) \cup \{ \phi(p_{2k}) \} \})$
is visible from $\phi(p_2k)$, if 
they lie inside the region bounded by $\phi(L_a)$ and $\phi(L_b)$, then they 
must lie on rays from $\phi(p_2k)$ passing through points of $\phi(L_a)$.
This blocks points of $\phi(L_a)$ and $\phi(L_b)$ from each other.
So, no embedding point $p_q \in (\xi' \setminus \{ \phi(L_a) \cup \phi(L_b) \cup \{ \phi(p_{2k}) \} \})$
can lie 
inside the region bounded by $\phi(L_a)$ and $\phi(L_b)$.
%
%
%
Therefore, $p_q$ cannot be a blocker for pairs of 
embedding points of $\phi(L_b)$.
Hence, the blockers for embedding points of $\phi(L_b)$ must come from $\phi(L_b)$ itself.
So, the points of $\phi(L_b)$ are collinear and $L_b$ is preserved.
\end{proof}
%
%
\begin{figure}
\begin{center} 
\centerline{\hbox{\psfig{figure=,height=0.400\hsize}}}
\caption{
(a) In $\xi'$, $\phi(L_b)$ is not a reflex chain. 
(b) In $\xi'$, $\phi(L_b)$ a reflex chain facing $\phi(p_{2k})$. 
}
\label{fig2lp}
\end{center}
\end{figure}
 \begin{lemma} \label{4lp}
 Let $G$ be a modified slanted grid graph with visibility embedding $\xi$ (Figure \ref{msgg}).
 Let $L_1$ and $L_2$ be the rightmost and the second-rightmost lines in $\xi$, respectively.
 Let $L_3$ and $L_4$ be the topmost and the second-topmost lines in $\xi$, respectively. 
  The lines $L_1$, $L_2$, $L_3$ and $L_4$ 
  are preserved in every visibility embedding of $G$.  
 \end{lemma}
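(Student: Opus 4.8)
The plan is to derive all four preservations from two applications of Lemma~\ref{2lp}: one for the pair $(L_1,L_2)$ and one for $(L_3,L_4)$, the configuration of the latter about $p_2$ mirroring that of the former about $p_1$ (see Figure~\ref{msgg}). For the first I would take $L_b=L_1$, $L_a=L_2$, and let $p_{2k}$ be the convergence point $p_1$; the two rightmost vertical lines $L_1,L_2$ meet at the other convergence point $p_2$, which plays the role of the shared endpoint. The four hypotheses of Lemma~\ref{2lp} are read off the construction: property~\ref{a1.2} records that $p_1$ sees all of $L_1$ and nothing else; property~\ref{a1.1} is exactly clause~$(a)$ of the final construction step, that each added point on $L_1$ blocks its partner on $L_2$ from $p_1$; property~\ref{a1.3} is the mutual visibility guaranteed by clauses~$(b)$ and $(c)$; and property~\ref{a1.4} holds because $p_1$ is the only point adjacent to all of $L_1$. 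The parameter bound $k\ge(l+3)^2$ then holds comfortably: $L_1$ carries $k=25n^8+\Theta(n^4)$ points, while the points off $L_1\cup L_2$ number only $l=\Theta(n^4)$ (the $4n^4$ additions on $L_3,L_4$ together with the $\Theta(n^2)$ surviving grid points), so $(l+3)^2=16n^8+\Theta(n^4)\le k$. Lemma~\ref{2lp} therefore shows that $L_1,L_2$ are preserved in every embedding and, crucially, that every remaining point lies outside the convex hull of $L_1\cup L_2\cup\{p_1\}$.

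The second application, with $L_b=L_3$, $L_a=L_4$, shared endpoint $p_1$ and convergence point $p_{2k}=p_2$, has the same structure, now with clause~$(a)$ of step~\ref{thrstep} supplying the blocking of $L_4$ by $L_3$ from $p_2$. The difficulty is purely quantitative: $L_3$ and $L_4$ carry only $k=2n^4+\Theta(n)$ points each, whereas the number of points off $L_3\cup L_4$ in the full $MSGG$ is $\Theta(n^8)$, dominated by the $50n^8$ additions sitting on $L_1$ and $L_2$. Hence $k\ge(l+3)^2$ fails badly, and Lemma~\ref{2lp} cannot be applied to the whole graph as it stands.

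To get around this I would feed the output of the first application into the second by \emph{shrinking} the graph. Since $L_1$ and $L_2$ are now known to be preserved, delete from the $MSGG$ the $50n^8$ added points lying on them, retaining $p_1$, $p_2$, the surviving grid (including the grid portions of the first two columns), and the additions on $L_3,L_4$; call the induced subgraph $G^\ast$. In $G^\ast$ the number of points off $L_3\cup L_4$ falls to $l^\ast=\Theta(n^2)$, so $(l^\ast+3)^2=n^4+\Theta(n^2)\le 2n^4+\Theta(n)=k$, and one checks that the four hypotheses of Lemma~\ref{2lp} still hold for $G^\ast$ (deleting those points affects neither the adjacencies of $p_2$ nor the blocking of $L_4$ by $L_3$). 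Lemma~\ref{2lp} then forces $L_3$ and $L_4$ to be preserved in every embedding of $G^\ast$; since the restriction of an arbitrary embedding $\xi'$ of the $MSGG$ to the vertices of $G^\ast$ is itself an embedding of $G^\ast$, this transfers back to $\xi'$.

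The step I expect to be the real work is verifying that this restriction is a genuine visibility embedding of $G^\ast$, i.e.\ that erasing the points on $L_1,L_2$ neither removes nor creates an edge among the surviving vertices. Deleting points can only \emph{create} edges, so the entire difficulty is to rule out that some erased point is the \emph{unique} blocker of a surviving non-adjacent pair. This is exactly where the convex-hull separation from the first application is indispensable: the erased points all lie on the two preserved segments $L_1,L_2$, whereas every surviving point off $L_1\cup L_2$ lies outside the convex hull of $L_1\cup L_2\cup\{p_1\}$. I would push this slightly further to show that the surviving points all fall on the outer side of the lines carrying $L_1$ and $L_2$, so that no segment joining two of them can meet $L_1$ or $L_2$ at an erased point. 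Granting this separation, no erased point blocks a surviving pair, the restriction is a legitimate embedding of $G^\ast$, and the two applications of Lemma~\ref{2lp} together establish that $L_1,L_2,L_3$ and $L_4$ are all preserved.
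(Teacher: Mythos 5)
Your treatment of $L_1$ and $L_2$ is exactly the paper's: one application of Lemma~\ref{2lp} with $L_b=L_1$, $L_a=L_2$, $p_{2k}=p_1$, and the count $25n^8+n+1\geq(4n^4+n^2-2n-m^2+4)^2$. You are also right that a direct second application to $(L_3,L_4)$ fails because $l=\Theta(n^8)$. But your repair has a genuine gap at the transfer step, precisely at the point you flag as ``the real work.'' For Lemma~\ref{2lp} to apply to $G^\ast$ you need the restriction of an arbitrary embedding $\xi'$ of the $MSGG$ to $V(G^\ast)$ to be a visibility embedding of $G^\ast$, i.e.\ every surviving pair that is non-adjacent in $G$ must retain a \emph{surviving} blocker in $\xi'$. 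Your separation argument (all surviving points off $\phi(L_1)\cup\phi(L_2)$ lie strictly on one side of those lines) only covers pairs with both endpoints off $\phi(L_1)\cup\phi(L_2)$. However, $G^\ast$ must keep the grid portions of the first two columns and the point $p_1$, since $L_3$ and $L_4$ pass through them and Lemma~\ref{2lp} needs the shared endpoint $p_1$ of $L_a$ and $L_b$. Now take a same-row non-consecutive pair such as $(p_{1,j},p_{i,j})$ with $i\geq 3$: its segment in $\xi'$ crosses the line $\phi(L_2)$ exactly once, and nothing established so far prevents that crossing point from being one of the deleted added points of $\phi(L_2)$ (such a point is adjacent in $G$ to both endpoints, so no adjacency constraint forbids it). If it is, the pair becomes visible after deletion and the restriction is not an embedding of $G^\ast$. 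Ruling this out amounts to knowing where row $j$ meets $\phi(L_2)$, which is essentially the line-preservation statement one is still trying to prove; and shrinking $G^\ast$ further to restore the separation (dropping columns $1$, $2$ and $p_1$) destroys the hypotheses of Lemma~\ref{2lp}.

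The paper takes a different route for the second half that sidesteps this. It removes \emph{all} of $\phi(L_1)\cup\phi(L_2)\cup\{\phi(p_1)\}$ from consideration, shows directly that these points (and, by the reflex-chain argument, those of $\phi(L_3)\cup\{\phi(p_2)\}$) cannot block any pair of the remaining points, and then applies only the single-line Lemma~\ref{lp} to the residual configuration to conclude that $\phi(L_4)\setminus(\phi(L_1)\cup\phi(L_2)\cup\{\phi(p_1)\})$ is collinear. Since Lemma~\ref{lp} needs no auxiliary apex point, losing $p_1$ costs nothing. It then pins down \emph{which} line these points lie on (ruling out the line through $\phi(p_1)$ and $\phi(p_2)$), and finally gets $L_3$ from a blocker-counting argument ($|\phi(L_3)|=|\phi(L_4)|$ and each segment from $\phi(L_4)$ to $\phi(p_2)$ needs its own blocker) rather than from a second invocation of Lemma~\ref{2lp}. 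You would need either to close the boundary-pair gap above or to restructure your second step along these lines.
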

\begin{proof} 
Let $\xi'$ be any other visibility embedding of $G$. 
Let $\phi$ denote the bijection between $\xi$ and $\xi'$. So, $\phi(L_1)$, $\phi(L_2)$, $\phi(L_3)$ and $\phi(L_4)$ are  
the images of $L_1$, $L_2$, $L_3$ and $L_4$ in $\xi'$, respectively.
$\\ \\$
First we show that $L_1$ and $L_2$ are preserved.
$\phi(L_1) \setminus \{\phi(p_2)\}$ and $\phi(L_2)\setminus \{\phi(p_2)\}$ contain $25n^8 + n $
embedding points each by construction and $\xi' \setminus (\phi(L_1) \cup \phi(L_2))$
contains at most $4n^4 + n^2 - 2n - m_0^2 + 1 $ embedding points by construction, where  
$1 \leq \ m_0 \leq n -2$. Observe that $25n^8 + n + 1 \geq (4n^4 + n^2 - 2n - m_0^2 + 4)^2$ for large $n$. 
Since $k \geq (l+3)^2$, where $k = 25n^8 + n + 1$ and $ l = 4n^4 + n^2 - 2n - m_0^2 + 1$, both
$L_1$ and $L_2$ are preserved by Lemma \ref{2lp}.
%
$\\ \\$
Now we show that $L_3$ and $L_4$ are preserved.
Let us start by identifying the partition of blockers in $\xi'$ with respect to $\phi(L_1)$ and $\phi(L_2)$.
Without loss
of generality, let $\phi(p_2)$ be the topmost embedding point, 
$\phi(L_1)$ be to the right of $\phi(L_2)$ and $\phi(p_1)$ be to the right of $\phi(L_1)$ in $\xi'$ (see Figure \ref{msgg}). 
Since the adjacency relationships between  $\phi(L_1)$ and $\phi(L_2)$ cannot change, and 
$\phi(p_1)$ is adjacent only to the embedding points of 
$\phi(L_1)$, all embedding points of $\xi' \setminus (\phi(L_1) \cup \phi(L_2) \cup \{ \phi(p_1) \} )$ 
must be to the left side of $\phi(L_2)$.
Hence, embedding points of $ \phi(L_1) \cup \phi(L_2) \cup \{\phi(p_1)\}$ cannot be blockers 
for any pair of embedding points in $ \xi' \setminus (\phi(L_1) \cup \phi(L_2) \cup \{\phi(p_1)\} )$.
Embedding points of $\phi(L_3)$ must form a straight line or a reflex chain facing $\phi(p_2)$
as shown in Lemma \ref{2lp}.
Therefore the
embedding points of   $(\phi(L_1) \cup \phi(L_2) \cup \phi(L_3) \cup \{ \phi(p_1), \phi(p_2) \})$
 cannot be the blockers of the remaining 
embedding points of $\xi'$. 
Again, the set 
$\phi(L_4) \setminus (\phi(L_1) \cup \phi(L_2) \cup \{ \phi(p_1) \}) $
has $2n^4 + n - 2$ embedding points and
$\xi' \setminus (\phi(L_1) \cup \phi(L_2) \cup \phi(L_3) \cup \phi(L_4))$ has at most $(n-2)^2 - m_0^2$ embedding points.
Observe that $2n^4 + n - 2 \geq ((n-2)^2 - m_0^2 +3)^2$ for large $n$. 
Since $k \geq (l+3)^2$, where $k = 2n^4 + n - 2$ and $ l = ((n-2)^2 - m_0^2)$,
the embedding points of $\phi(L_4) \setminus (\phi(L_1) \cup \phi(L_2) \cup \{ \phi(p_1) \})$ 
are collinear in their original order by
Lemma \ref{lp} on  $\xi' \setminus (\phi(L_1) \cup \phi(L_2) \cup \phi(L_3) \cup \{ \phi(p_1), \phi(p_2) \})$.
%
$\\ \\$
We have already shown that $\phi(L_4) \setminus (\phi(L_1) \cup \phi(L_2) \cup \{ \phi(p_1) \})$ is a straight line.
If these embedding points are collinear with $\phi(p_1)$ and $\phi (L_4) \cap ( \phi (L_1) \cup \phi (L_2))$ (see Figure \ref{msgg}),
then $L_4$ is preserved. Otherwise, the embedding points of  $\phi(L_4) \setminus (\phi(L_1) \cup \phi(L_2) \cup \{ \phi(p_1) \})$
are collinear with $\phi(p_1)$ and $\phi(p_2)$, as $\phi(p_2)$ and $\phi(L_1) \cap \phi(L_4)$
are the only two embedding points of $\phi(L_1)$ that 
are not adjacent to all embedding points of $\phi(L_4) \setminus (\phi(L_1) \cup \phi(L_2) \cup \{ \phi(p_1) \})$.
%
%
Observe that since the embedding points of $\phi(L_4) \setminus (\phi(L_1) \cup \phi(L_2) \cup \{ \phi(p_1) \})$
form either a straight line or a reflex chain facing $\phi(p_2)$,
there cannot be any other embedding point on the line passing through $\phi(p_1)$ and $\phi(p_2)$.
So, the embedding points of $\phi(L_4) \setminus (\phi(L_1) \cup \phi(L_2) \cup \{ \phi(p_1) \})$
must lie on the line through $\phi(p_1)$ and $\phi(L_1) \cap \phi(L_4)$.
Furthermore, since the adjacency relationships between the embedding points of $\phi(L_4)$ cannot change, $L_4$ is preserved. 
Since all segments between embedding points of $\phi(L_4)$ and $\phi(p_2)$ require distinct
embedding points of $\phi(L_3)$, and $\vert \phi(L_3) \vert = \vert \phi(L_4) \vert$,
every embedding point of $\phi(L_3)$ must lie on the horizontal line passing through $\phi(p_1)$ and $\phi(L_1) \cap \phi(L_3)$.
Since they are all collinear and the adjacency relationships between the embedding points of $\phi(L_3)$ cannot change, $L_3$ is also preserved.
\end{proof} 
\begin{lemma} \label{uemsgg}
 Let $G$ be a modified slanted grid graph with visibility embedding $\xi$ (Figure \ref{msgg}).
$G$ has a unique visibility embedding, up to the preservation of lines.
\end{lemma}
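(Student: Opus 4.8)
The plan is to bootstrap from Lemma~\ref{4lp}, which already performs the hardest part of the work. Let $\xi'$ be an arbitrary visibility embedding of the MSGG $G$ and let $\phi$ be the bijection from $\xi$ to $\xi'$. By Lemma~\ref{4lp} the four frame lines $L_1,L_2,L_3,L_4$ are preserved in $\xi'$, so $\phi(L_1),\phi(L_2)$ are two distinct preserved lines through the convergence point $p_2$ and $\phi(L_3),\phi(L_4)$ are two distinct preserved lines through $p_1$. Intersecting each pair, $\phi(p_2)=\phi(L_1)\cap\phi(L_2)$ and $\phi(p_1)=\phi(L_3)\cap\phi(L_4)$ are forced, so the whole outer frame together with the two apexes is rigid. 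As in the proof of Lemma~\ref{4lp}, the adjacencies of $\phi(p_1)$ and $\phi(p_2)$ then confine every remaining embedding point to the inside of this frame, whence no frame point can block a pair of inner points and no inner point can block a pair of frame points.

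The second step is to fix the two pencils of grid lines emanating from the apexes. Each point of $\phi(L_3)$ blocks, from $\phi(p_2)$, a distinct point of $\phi(L_4)$ (property $(a)$ of the construction in Section~\ref{consmsgg}); this blocking is recorded entirely in the edge/non-edge data of $G$ and is therefore respected by $\phi$. Since $\phi(p_2)$, $\phi(L_3)$ and $\phi(L_4)$ are already pinned, the line through $\phi(p_2)$ and any fixed point of $\phi(L_3)$ is determined, and these are precisely the vertical grid lines. Dually, using the analogous blocking of $\phi(L_2)$ by $\phi(L_1)$ from $\phi(p_1)$, each horizontal grid line is forced as the line through $\phi(p_1)$ and the corresponding fixed point of $\phi(L_1)$. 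Hence both families of grid lines are rigid.

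With both families of lines fixed, each surviving inner point is pinned to the unique intersection of its horizontal line with its vertical line, so it only remains to show that $\phi(p_{i,j})$ genuinely lies on these two lines. For this I would re-run the propagation of Lemma~\ref{lem1}: starting from the already-fixed point where a grid line meets the frame, the grid edges $(v_{i,j},v_{i+1,j})$ and $(v_{i,j},v_{i,j+1})$ force each successive neighbour onto the next intersection along that line, with no other point in between. The one place this chaining can fail is the deleted $(n-2)\times m$ corner subgrid (Figure~\ref{dsgg}), where a row or column loses its anchor; this is exactly why the construction inserts the extra edges joining the near-apex extension points of $L_1,L_2,L_3,L_4$ to the grid points bordering the hole, and I would invoke those edges to re-anchor the interrupted chains across the gap. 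Carrying the propagation through all rows and columns fixes every inner point, giving a unique embedding up to preservation of lines.

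The main obstacle is exactly this last step. Unlike the frame lines, the inner grid lines carry only $O(n)$ points against $O(n^2)$ points off them, so the counting criterion $k\ge(l+3)^2$ of Lemma~\ref{lp} is unavailable and line preservation cannot be obtained line-by-line by point counting; the rigidity of the inner grid must instead be extracted combinatorially from the fixed frame, the blocking relations, and the grid-edge chains, while carefully accounting for the hole created by the deleted subgrid and the compensating edges added in the construction.
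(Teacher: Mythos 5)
Your first paragraph tracks the paper exactly: Lemma \ref{4lp} pins $L_1,L_2,L_3,L_4$, the convergence points are recovered as intersections of these preserved lines, and the confinement of all remaining points to one side of $\phi(L_2)$ is precisely the observation the paper reuses. The divergence --- and the gap --- is in how the inner lines are pinned. Your mechanism, namely fixing the two pencils through the apexes and then ``chaining'' along the grid edges $(v_{i,j},v_{i+1,j})$ and $(v_{i,j},v_{i,j+1})$, does not work as stated: a visibility edge imposes no positional constraint by itself, so adjacency along a row cannot ``force each successive neighbour onto the next intersection.'' In Lemma \ref{lem1} that style of argument was legitimate only because the confinement of every point to one of the $m\times n$ ray intersections had already been established; here that confinement is exactly what is missing, and you concede as much in your closing paragraph. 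What would carry the chaining through is a blocker analysis driven by the \emph{non}-edges (e.g.\ $v_{i,j}$ with $j\ge 2$ is invisible from $v_2$, the only neighbours of $v_2$ lie on $L_3$, so the first blocker on $\phi(p_2)\phi(p_{i,j})$ is a point of $\phi(L_3)$, and one must then exclude its being the wrong column), but you do not supply it. Your worry about the hole is also misplaced: the deleted subgrid is a bottom-left corner, so no row or column is interrupted in the middle --- some columns are truncated to their $L_3$ and $L_4$ points and some rows lose their last $m$ entries; nothing needs ``re-anchoring across the gap,'' and the extra edges you invoke are not used for that purpose.

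The paper closes the step with one observation that bypasses induction along the grid entirely: for an inner horizontal line $L_i$, each point of $\phi(L_i)\setminus\{\phi(p_{i_1}),\phi(p_{i_2})\}$ is adjacent to every point of the long preserved line $\phi(L_1)$ except exactly $\phi(p_2)$ and $\phi(p_{i_1})=\phi(L_i)\cap\phi(L_1)$. Combined with the confinement to the left of $\phi(L_2)$ from Lemma \ref{4lp}, this leaves only two candidate lines through $\phi(p_1)$ for such a point, and the candidate through $\phi(p_1)$ and $\phi(p_2)$ is excluded by the argument of Lemma \ref{4lp}; hence every point of $L_i$ lands on the line through $\phi(p_1)$ and $\phi(p_{i_1})$, and the internal adjacencies of $L_i$ fix the order. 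The symmetric argument with $L_3,L_4$ handles the vertical lines, and the hole never enters the picture. As written, the decisive step of your proof is absent; to salvage your pencil-plus-chaining route you would have to replace ``the grid edges force\dots'' with the explicit blocker argument sketched above.
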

\begin{proof} 
By Lemma \ref{4lp}, $L_1$, $L_2$, $L_3$ and $L_4$ are preserved. 
Let $\xi'$ be any other visibility embedding of $G$. 
Let $\phi$ denote the bijection between $\xi$ and $\xi'$. So, $\phi(L_1)$, $\phi(L_2)$, $\phi(L_3)$ and $\phi(L_4)$ are  
the images of $L_1$, $L_2$, $L_3$ and $L_4$ in $\xi'$, respectively.
$\\ \\$
Consider any horizontal line $L_i$ in $\xi$ passing through the embedding points $\{ p_1, p_{i_1}, p_{i_2}, \ldots, p_{i_j}\}$,
where $p_{i_1}$ and $p_{i_2}$ lie on $L_1$ and $L_2$ respectively. In $\xi'$,
all the embedding points of $\phi(L_1) \setminus \{ \phi(p_2) \cup \phi(p_{i_1}) \}$ 
are adjacent to all the embedding points of 
$\phi(L_i) \setminus \{\phi(p_{i_1}), \phi(p_{i_2})  \}$.
On the other hand, by the arguments 
of Lemma \ref{4lp}, the embedding points of $\phi(L_i) \setminus \{\phi(p_{i_1}), \phi(p_{i_2})  \}$ cannot
lie on the line passing through $\phi (p_1)$ and $\phi(p_2)$. 
Hence, the embedding points of $\phi(L_i) \setminus \{\phi(p_{i_1}), \phi(p_{i_2})  \}$
must lie on the horizontal line passing through $\phi (p_1)$ and $\phi(p_{i_1})$.
Since $L_1$ and $L_2$ are preserved, and $\vert L_1 \vert = \vert L_2 \vert$, 
$\phi(p_{i_2})$
must also lie on the horizontal line passing through $\phi (p_1)$ and $\phi(p_{i_1})$.
Since the adjacency relationships between the embedding points of $\phi(L_i)$ cannot change, 
the embedding points of $\phi(L_i)$ are collinear in the order of their pre-images in $L_i$.
This property is also true for all vertical lines and all other horizontal lines.
Hence, all horizontal and vertical lines of $\xi$ are preserved.
%
%
%
%
Consider a non-horizontal and non-vertical line passing through embedding points of $\xi$.
All such lines pass through exactly two embedding points of $\xi$ and it can be seen that these lines are also preserved.
Hence, $G$ has a unique visibility embedding, up to the preservation of lines.
\end{proof}
\section{A 3-SAT graph} \label{sec3sat}
In this section, we first construct a \emph{3-SAT graph} $G'$, corresponding to a 3-SAT formula $\theta$
 of $n$ variables $\{ x_1, x_2, \ldots, x_n \}$ and $m$ clauses $\{ C_1, C_2, \ldots, C_m \}$.
 Note that here $n$ and $m$ may denote quantities different from what they denote in   Sections \ref{secsgg} and \ref{secmsgg}.
 Then $G'$ is embedded into $G$ to construct a \emph{reduction graph} $G''$ such that 
 $G''$ is a PVG if and only if $\theta$ is satisfied.
 An embedding of $G''$ consists of regions called \emph{variable patterns} and \emph{clause patterns} respectively.
The number of clause patterns and variable patterns correspond to the number of clauses and variables respectively, in $\theta$.
\subsection{Construction of a 3-SAT graph} \label {consalg}

The construction of a 3-SAT graph $G'$ is described with respect to the unique visibility embedding $\xi$ of $G$.
Initially, $G$ is constructed from a slanted grid graph of $\alpha \times \alpha$ vertices, where $\alpha = 12(m+n)$,
by the process stated in Section \ref{consmsgg}. The large size of the 3-SAT graph will later be used to
enforce some collinearity conditions.
We know that the vertices of $G$ are placed as embedding points on the intersection points of horizontal and vertical lines of $\xi$.
Recall that there are intersection points in $\xi$ that do not contain any embedding point corresponding to the vertices of $G$.
We wish to use these free intersection points for embedding points corresponding to the vertices of $G'$.
Embedding points corresponding to vertices of $G'$ are placed on the free intersection points in such a way
that they correspond to the variables and clauses of $\theta$.
For every vertical line $l$ in $\xi$,
we refer to the embedding point on $l$ adjacent to $p_2$ as the \emph{topmost embedding point} of $l$,
and the next embedding point of $l$  
is called the \emph{second topmost embedding point} of $l$.
 The vertices of $G'$ are classified into the following six types.
 \begin{enumerate}
  \item \emph{\textbf{Occurrence vertices (o-vertices):}}
  Let $n_{i}$ and $\overline{n_{i}}$ be the number of clauses of $\theta$ in which $x_i$ and $\overline{x_i}$ occur, respectively.
  A group of vertices of size $(n_{i} + \overline{n_{i}} + 2)$ in $G'$ corresponding to $x_i$ and $\overline{x_i}$ in $\theta$
  are referred to as \emph{o-vertices} of $x_i$ in $G'$.
  The o-vertex corresponding to $x_i$ (or, $\overline{x_i}$) in $C_j$ is denoted as 
$o_{i,j}$ (respectively, $\overline{o}_{{i},j}$). Two more o-vertices of $x_i$
are denoted as $o_{i,0}$ and $\overline{o}_{{i},0}$ respectively.
  The embedding points corresponding to o-vertices are called
  \emph{o-points} (Figures \ref{varpat} and \ref{3SAT}). 
  For each $x_i$,  o-points are embedded on two distinct vertical lines of $\xi$ called the 
  \emph{left o-line} and \emph{right o-line} of $x_i$, respectively (Figures \ref{varpat} and \ref{3SAT}). 
  The left o-line and right o-line contain all the o-points corresponding to $x_i$ and $\overline{x_i}$, respectively.
 The o-points embedded on the left o-line (or, the right o-line) 
 are called the \emph{left o-points} (respectively, \emph{right o-points}) of $x_i$ and their corresponding vertices are called
 the \emph{left o-vertices} (respectively, \emph{right o-vertices}) of $x_i$.
   The o-points need to be blocked by l-points from their corresponding t-points (both described later).
 If too many l-points are utilized in blocking the o-points from t-points, then the l-points cannot be used to 
 block the visibility between c-points (also described later) from some other embedding points. This
 leads to unsatisfied visibility constraints.
   We denote the topmost embedding points of the left and right o-lines of $x_i$ by $x_{i,l}$ and $x_{i,r}$ respectively.
 
  \item  \emph{\textbf{Truth value vertices (t-vertices):}} For every variable $x_i$ there exists exactly one vertex 
  of $G'$ called the \emph{t-vertex} of $x_i$ (denoted as $t_i$), and its corresponding embedding point is called the \emph{t-point}
  of $x_i$ (Figures \ref{varpat} and \ref{3SAT}). For a given assignment of variables in $\theta$, $x_i$ can be $1$ or $0$.
%
%
  If $x_i=1$ (or, $0$), then the t-vertex of $x_i$ is embedded as the lowermost (respectively, uppermost) embedding point, 
 on the left (respectively, right) o-line of $x_i$.  
%
  If the t-point lies on its left o-line, then it needs to be blocked from its right o-points by some l-points,
and if the t-point lies on its right o-line, then it needs to be blocked from its left o-points by some l-points.

 \item \emph{\textbf{Clause vertices (c-vertices):}} 
 For every clause $C_j$, there exists exactly one vertex 
  of $G'$ called the \emph{c-vertex} of $C_j$ (denoted as $c_j$), and its corresponding embedding point is called the \emph{c-point}
  of $C_j$ (Figures \ref{clpat} and \ref{3SAT}).
  The rightmost vertical line of the clause pattern of a clause $C_j$ is called the \emph{c-line} of $C_j$.
 The c-point of $C_j$ is embedded as the lowermost embedding point of the c-line of $C_j$.
    We denote the topmost and second topmost embedding points of the c-line of $c_j$ as $c_{j,1}$ and $c_{j,2}$ respectively.
   The c-point $c_j$ needs to be blocked from  $c_{j,2}$ by an l-point. This is 
 possible only when all the l-points corresponding to $C_j$ are not blocking their corresponding t-points from their o-points.
  \item \emph{\textbf{Literal vertices (l-vertices):}} 
  These vertices also correspond to the occurrence of a variable and its complement in the clauses of $\theta$, and their
  corresponding embedding points are called \emph{l-points} (Figures \ref{varpat}, \ref{clpat} and \ref{3SAT}).
  Visibility of a t-point of a variable needs to be blocked 
  from the o-points of one of its o-lines. Visibility of the c-point
  of a clause also needs to be blocked from the second topmost embedding point of the vertical line in which it is embedded. The 
  l-points are used 
  as blockers in these cases. 
  An l-point corresponding to $x_i$ occurring in $C_j$ can be used to block either the t-point of $x_i$ from a left o-point
  of $x_i$, or the c-point of $C_j$ from  $c_{j,2}$ (which is necessary for
  satisfying a clause). 
  The  unique l-point (as well as l-vertex) corresponding to the embedding points used for blocking the visibility of 
  $t_i$ from $o_{i,j}$ (or, $\overline{o}_{{i},j}$), is denoted as 
$l_{i,j}$ (respectively, $\overline{l}_{{i},j}$). 
In a given assignment of $\theta$, if variable $x_i$ is assigned $1$, then the corresponding $l_{i,j}$ may be used
to block the corresponding c-point from the embedding point immediately above it. Otherwise, if $x_i$ is assigned $0$,
then $l_{i,j}$ has to block   $t_i$ from $o_{i,j}$.
The blocking is explained in details later.
 If $l_{i,j}$ blocks   the t-point of $t_i$ from the o-point of $o_{i,j}$, then $l_{i,j}$
must be embedded in a vertical line 
in the variable pattern of $x_i$, called the \emph{associated-line} of $l_{i,j}$. We denote the topmost embedding point 
this associated-line by $l_{i,j,1}$.
Similarly, we denote the topmost embedding point 
the associated-line of $\overline l_{i,j}$ by $\overline l_{i,j,1}$.
  \item \emph{\textbf{Dummy vertices (d-vertices):}} 
  For each variable $x_i$, there is exactly one vertex in $G'$ called the \emph{d-vertex}
  of $x_i$ (denoted as $d_i$), and its corresponding embedding point in $\xi$ is called the \emph{d-point} of $x_i$
  (Figures \ref{varpat} and \ref{3SAT}).
  The d-points are sometimes
  required to block the visibility of the right o-points   from the second topmost embedding point of their vertical line.
  The rightmost vertical line of the variable pattern of $x_i$ is called the \emph{d-line} of $x_i$.
  If $x_i$ is assigned $1$, then the d-point is embedded on the right o-line of $x_i$. Otherwise it is embedded on the
  d-line of $x_i$.   We denote the topmost embedding point of the d-line of $d_i$ as $d_{i,1}$.
  \item \emph{\textbf{Blocking vertices (b-vertices):}}
  These vertices of $G'$ correspond to
  embedding points called \emph{b-points} (Figures \ref{varpat}, \ref{clpat} and \ref{3SAT}).
  The b-points are required to block the visibility between $(a)$  t-points and the topmost embedding points of o-lines,
   $(b)$ d-points and the topmost points of their 
  respective d-lines and right o-lines,
  t-points and the topmost embedding points of o-lines,
  $(c)$ l-points and the topmost embedding points 
  of   their respective associated-lines and c-lines, and   $(d)$ 
   d-points and o-points. 
  The vertical lines on which b-points are embedded are called \emph{b-lines}.
  $\\ \\$
  There are   the following four types of b-vertices.
  \begin{enumerate} [(i)]
 \item The b-vertices corresponding to b-points that are used to block the visibility of embedding points of $t_i$ and $d_i$ 
  from 
   $x_{i,l}$, $x_{i,r}$ and $d_{i,1}$,
  are denoted as $b^1_{i}$,  $b^2_{i}$,  $b^3_{i}$ and  $b^4_{i}$.
  If the t-point of $t_i$ is embedded on its left o-line (or, right o-line), then the b-point of $b^2_{i}$ (respectively, $b^1_{i}$)
  blocks the t-point from $x_{i,r}$ (respectively, $x_{i,l}$).
  If the d-point of $d_i$ is embedded on its right o-line (or, d-line), then then $b^4_{i}$ (respectively, $b^3_{i}$)
  blocks the d-point from $d_{i,1}$ (respectively, $x_{i,r}$).
  The b-point corresponding to $b^1_{i}$ is embedded on the vertical line immediately to the left of the left o-line of $x_i$.
  The b-points corresponding to  $b^2_{i}$ and $b^3_{i}$ are embedded on the vertical lines immediately to the left and right of 
  the right o-line of $x_i$, respectively.
  The b-point correponding to $b^4_{i}$ is embedded on the vertical line immediately left to the d-line of $x_i$.

 \item The b-vertex corresponding to the b-point that is used to block the visibility of the l-point of 
   $l_{i,j}$ (or, $\overline{l}_{i,j}$)
  from 
   $l_{i,j,1}$ (respectively, $\overline{l}_{i,j,1}$)
    when the l-point of $l_{i,j}$ (respectively, $\overline{l}_{i,j}$)
  is embedded on its corresponding c-line and blocks the c-point from the second topmost point of the c-line,
  is denoted as $b_{i,j}$ (respectively, $\overline{b}_{{i},j}$).
  For every $l_{i,j}$ (or, $\overline{l}_{{i},j}$), the b-point of $b_{i,j}$ (respectively, $\overline{b}_{{i},j}$) 
  is embedded on the vertical line immediately to the right of the
  associated-line of $l_{i,j}$ (respectively, $\overline{l}_{{i},j}$).
  
 \item The b-vertex corresponding to the b-point that is used to block the d-point of $d_i$ from the o-point of
  $\overline{o}_{i,j}$, when the d-point is embedded on its d-line is denoted as $\overline{b}^d_{{i},j}$.  
  These b-points are embedded on vertical lines to the right of the right o-line of $x_i$. 
\item    The b-vertices corresponding to the c-line of $C_j$ are denoted as $b^c_{j,1}$ and $b^c_{j,2}$.
  The b-points corresponding to $b^c_{j,1}$ and $b^c_{j,2}$ are used to block $c_{j,1}$ from the l-points of $C_j$
  that are embedded on their associated lines of their respective variable patterns, 
  The   b-points corresponding to $b^c_{j,1}$ 
  and $b^c_{j,2}$ are 
    embedded on the two vertical lines immediately to the left of the c-line of $C_j$.
  \end{enumerate}
  \end{enumerate}
\begin{figure}
\begin{center} 
\centerline{\hbox{\psfig{figure=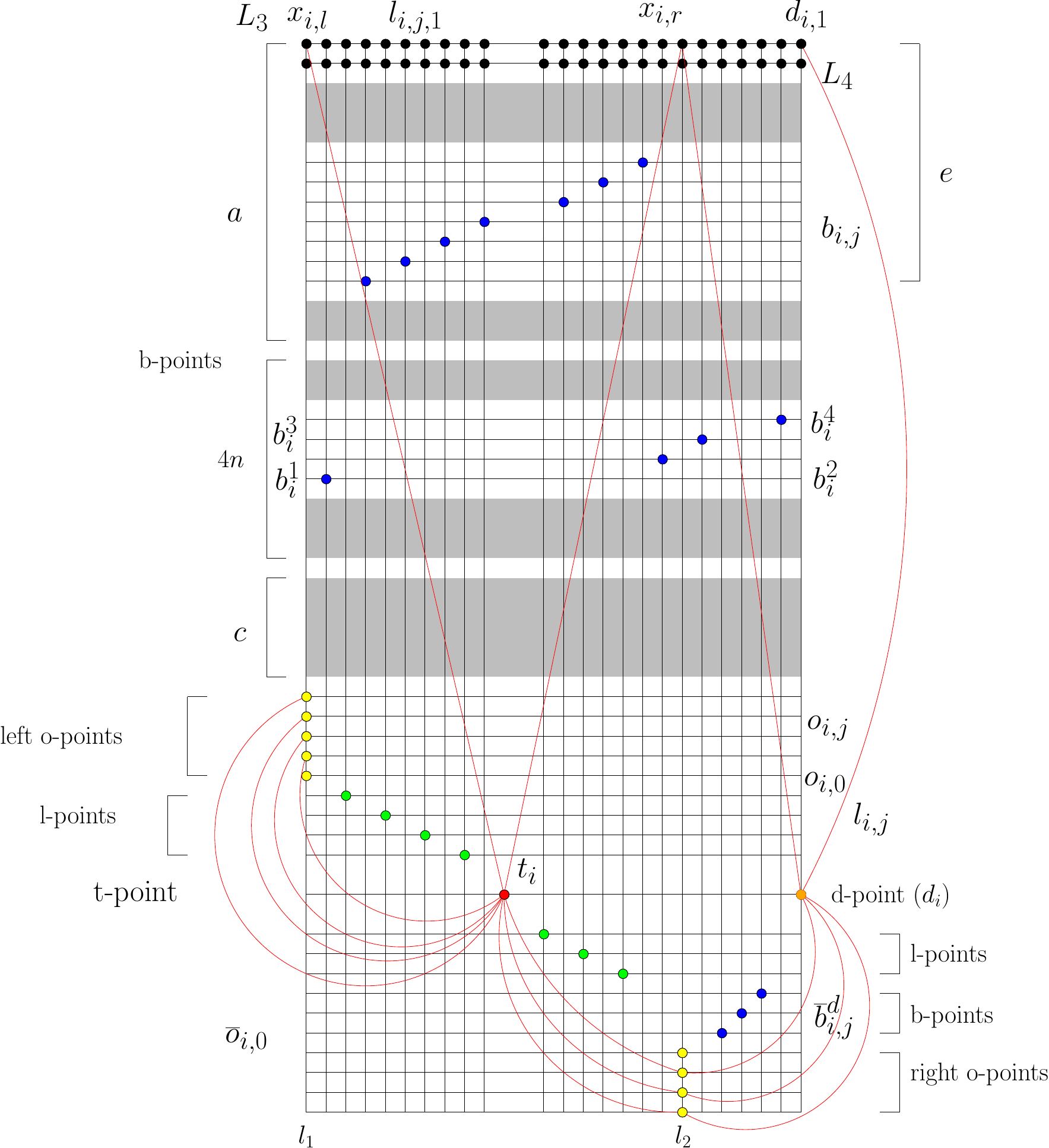,width=0.790\hsize}}}
\caption{
The variable pattern for $x_i$. 
The t-point, o-points, l-points, d-point and b-points are shown in the figure,
along with the names of their corresponding vertices.
The gray areas represent multiple horizontal lines.
Some non-edges   are drawn as curves. The lines $l_1$ and $l_2$ are the left and right o-lines 
of $x_i$ respectively.
}
\label{varpat}
\end{center}
\end{figure}
\begin{figure}
\begin{center} 
\centerline{\hbox{\psfig{figure=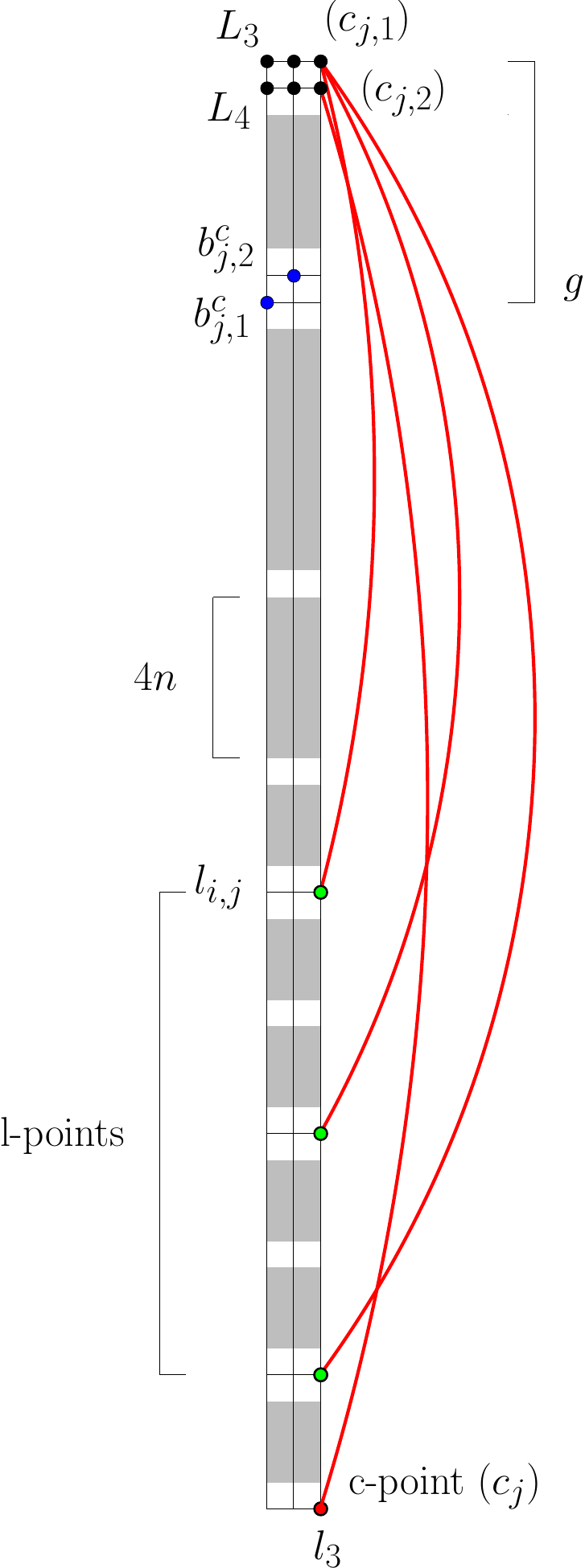,width=0.34\hsize}}}
\caption{
The clause pattern for $C_j$.
The c-point, l-points and b-point   are shown in the figure, along with the names of their corresponding vertices.
The gray areas represent multiple horizontal lines.
Non-edges are drawn as curves.
The line $l_3$ is the c-line of $C_j$.
%
}
\label{clpat}
\end{center}
\end{figure}
Based on the above classifications, we present the construction of $G'(V',E')$. The sets of vertices   
associated with $x_i$ and $C_j$ are denoted as $V^x_i$ and $V^c_j$ respectively.
Each $V^x_i$ contains the vertices $\{ t_i, d_i, o_{i,0}, \overline{o}_{{i},0}, b^1_{i},  b^2_{i},  b^3_{i},  b^4_{i} \}$.
For every $C_j$ containing $x_i$ (or, $\overline{x_i}$), $V^x_i$ contains the vertices
$\{ o_{i,j},l_{i,j},b_{i,j}  \}$
(respectively, $\{ \overline{o}_{{i},j}, \overline{l}_{{i},j}, \overline{b}_{{i},j}, \overline{b}^d_{{i},j}\}$).
Each $V^c_j$ contains $c_j$, $b^c_{j,1}$, $b^c_{j,2}$, and 
$
l_{{i},j} 
$ 
(or, 
$
\overline{l}_{{i},j}
$) 
corresponding to each $x_i$ (respectively, $\overline{x_i}$)
in $C_j$. Hence,
\begin{eqnarray*}
 V^x_i &=& \{ t_i, d_i, o_{i,0}, \overline{o}_{{i},0} \} \cup ( \bigcup_{x_i \in C_j} \{ o_{i,j},l_{i,j},b_{i,j}  \} )
 \cup ( \bigcup_{\overline{x_i} \in C_j} \{ \overline{o}_{{i},j}, \overline{l}_{{i},j}, \overline{b}_{{i},j}, \overline{b}^d_{{i},j}\} )
 \cup \{ b^1_{i},  b^2_{i},  b^3_{i},  b^4_{i}\}\\ 
 V^c_j &=&   \{ c_j \} 
  \cup ( \bigcup_{x_i \in C_j} 
  l_{{i},j} 
  ) \cup
 (\bigcup_{\overline{x_i} \in C_j} 
 \overline{l}_{{i},j} 
  ) \cup \{ b^c_{j,1}, b^c_{j,2} \} \\
 V' &=& (\bigcup_{x_i \in \theta} V^x_i) \cup (\bigcup_{C_j \in \theta} V^c_j) \\
\end{eqnarray*}
%
\noindent For every $x_i, x_k \in \theta$, each vertex of $V^x_i$ is adjacent to every vertex of $V^x_k$.
Similarly, for every $C_j, C_k \in \theta$, each vertex of $V^c_k$ is adjacent to each vertex of $V^c_k$.
The set of edges among vertices
of $V^x_i$ (or, $V^c_j$) is denoted as $E^x_i$ (respectively, $E^c_j$). 
All vertices of $V^x_i$ are adjacent to each other except the o-vertices, and all left o-vertices
are adjacent to all right o-vertices. The left o-vertices (or, right o-vertices) along with $t_i$ induce a path in $G'$. 
The right o-vertices and $d_i$ induce a path in $G'$ as well.
All vertices of $E^c_j$ are adjacent to each other. 
For every $x_i, C_j \in \theta$, each vertex of $V^x_i$ is adjacent to each vertex of $V^c_j$.
Hence,  
\begin{eqnarray*}
 E^x_i &=&  \Big\{ (v_a,v_b) \mid v_a \neq v_b \ \textrm{and}  \ v_a,v_b  \in V^x_i \setminus \Big( \{ o_{i,0}, \overline{o}_{{i},0} \}
 \cup  ( \bigcup_{x_i \in C_j} \{ o_{i,j} \}) \cup ( \bigcup_{\overline{x_i} \in C_j} \{ \overline{o}_{{i},j} \}) \Big) \Big\} \\
  & & \cup \Big\{ (v_a,v_b) \mid v_ a \in \{ o_{{i},0} \} \cup  ( \bigcup_{x_i \in C_j} \{ o_{i,j} \})
  \ \textrm{and} \ v_b \in \{ \overline{o}_{{i},0}  \} \cup ( \bigcup_{\overline{x_i} \in C_j} \{ \overline{o}_{{i},j} \}) \Big\}\\
 & &\cup   \Big\{ (v_a,v_b) \mid v_ a \in \{ o_{i,0}, \overline{o}_{{i},0} \}
 \cup  ( \bigcup_{x_i \in C_j} \{ o_{i,j} \}) \cup ( \bigcup_{\overline{x_i} \in C_j} \{ \overline{o}_{{i},j} \})
 \ \textrm{and}  \\
 & & v_b  \in V^x_i \setminus ( \{ t_i, d_i, o_{i,0}, \overline{o}_{{i},0} \}
 \cup  \Big( \bigcup_{x_i \in C_j} \{ o_{i,j} \}) \cup ( \bigcup_{\overline{x_i} \in C_j} \{ \overline{o}_{{i},j} \})  \Big) \Big\} \\
 & &\cup \Big\{ (t_i,o_{i,0}), (t_i,\overline{o}_{{i},0}), (d_i,\overline{o}_{{i},0}) \Big\}
 \cup \Big\{ (d_i,v_b) \mid v_b \in ( \bigcup_{x_i \in C_j} \{ o_{i,j} \}) \Big\} \\ 
 & & \cup \Big\{ (o_{i,j},o_{i,k}) \mid j < k \ \textrm{and} \ \nexists o_{i,l} \in V^x_i : i<l<k \Big\} \\
  & & \cup \Big\{ (\overline{o}_{i,j},\overline{o}_{i,k}) \mid j < k \ \textrm{and} \ \nexists \overline{o}_{i,l} \in V^x_i : i<l<k \Big\} \\
 E^c_j &=& \Big\{ (v_a, v_b) \mid v_a \neq v_b \ \textrm{and} \ v_a,v_b \in V^c_j \Big\} \\
 E' &=& \Big\{ \bigcup_{x_i\in \theta}  E^x_i \Big\} 
 \cup \Big\{ \bigcup_{C_j\in \theta} E^c_j \Big\} 
 \cup \Big(\bigcup_{x_i\in \theta}\{ (v_a,v_b) \mid v_a \in V^x_i \ \textrm{and} \ v_b \in V' \setminus V^x_i   \}\Big) \\
 & & \cup \Big(\bigcup_{C_j\in \theta} \{ (v_a,v_b) \mid v_a \in V^c_j \ \textrm{and} \ v_b \in V' \setminus V^c_j \} \Big)
\end{eqnarray*}
We have the following lemma on the size of $G'$.
\begin{lemma}\label{num}
There are   $(8n + 12m + \sum\limits_{x_i \in \theta}\overline{n_i})$ vertices in $G'$.
\end{lemma}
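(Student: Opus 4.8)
The plan is to count $|V'|$ directly from the definitions of the vertex sets $V^x_i$ and $V^c_j$, grouping vertices by type and then correcting for the vertices that the variable patterns and clause patterns share. First I would count the vertices in a single variable pattern. From the definition of $V^x_i$, the four ``fixed'' vertices $\{t_i, d_i, o_{i,0}, \overline{o}_{{i},0}\}$ together with $\{b^1_{i}, b^2_{i}, b^3_{i}, b^4_{i}\}$ contribute $8$ vertices; each clause in which $x_i$ occurs positively contributes the three vertices $\{o_{i,j}, l_{i,j}, b_{i,j}\}$; and each clause in which $\overline{x_i}$ occurs contributes the four vertices $\{\overline{o}_{{i},j}, \overline{l}_{{i},j}, \overline{b}_{{i},j}, \overline{b}^d_{{i},j}\}$. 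These are pairwise distinct, so $|V^x_i| = 8 + 3n_i + 4\overline{n_i}$ and therefore $\sum_{x_i\in\theta} |V^x_i| = 8n + 3\sum_{x_i} n_i + 4\sum_{x_i} \overline{n_i}$.

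Next I would count a clause pattern. Since $\theta$ is a $3$-SAT formula, every $C_j$ has exactly three literals, so from the definition of $V^c_j$ the vertex $c_j$, the two vertices $b^c_{j,1}, b^c_{j,2}$, and the three literal-triples (each of the form $\{b^l_{i,j}, l_{i,j}, b^r_{i,j}\}$ or $\{\overline{b}^l_{{i},j}, \overline{l}_{{i},j}, \overline{b}^r_{{i},j}\}$) give $|V^c_j| = 1 + 2 + 3\cdot 3 = 12$; hence $\sum_{C_j\in\theta} |V^c_j| = 12m$.

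The key observation --- and the only place where a naive sum of the pattern sizes goes wrong --- is that the $l$-vertices are shared between the two families: whenever $x_i$ (respectively $\overline{x_i}$) occurs in $C_j$, the vertex $l_{i,j}$ (respectively $\overline{l}_{{i},j}$) lies in both $V^x_i$ and $V^c_j$. Comparing the names of all other vertices shows that these $l$-vertices are the only vertices belonging to both a variable pattern and a clause pattern, and that the patterns are otherwise pairwise disjoint. The number of such shared vertices equals the total number of literal occurrences, namely $3m$ (three per clause, equivalently $\sum_{x_i}(n_i + \overline{n_i})$). Hence, by inclusion--exclusion, $|V'| = \sum_{x_i}|V^x_i| + \sum_{C_j}|V^c_j| - 3m$.

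Finally I would substitute and collect terms, using the $3$-SAT identity $\sum_{x_i} n_i + \sum_{x_i} \overline{n_i} = 3m$. This gives $|V'| = 8n + 3\sum_{x_i} n_i + 4\sum_{x_i}\overline{n_i} + 12m - 3m$, and rewriting $3\sum_{x_i} n_i + 4\sum_{x_i}\overline{n_i} = 3\left(\sum_{x_i} n_i + \sum_{x_i}\overline{n_i}\right) + \sum_{x_i}\overline{n_i} = 9m + \sum_{x_i}\overline{n_i}$ yields $|V'| = 8n + 18m + \sum_{x_i\in\theta}\overline{n_i}$, as claimed. The only genuine obstacle here is the careful bookkeeping of the shared $l$-vertices; everything else is routine term-collection, and the $3$-SAT hypothesis is exactly what pins down both $|V^c_j| = 12$ and the identity $\sum_{x_i}(n_i+\overline{n_i}) = 3m$.
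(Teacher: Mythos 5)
Your count is correct and reaches the stated total, but you organize it differently from the paper. The paper tallies vertices by \emph{type} across the whole graph: $n$ t-vertices, $n$ d-vertices, $3m+2n$ o-vertices, $3m$ l-vertices, $m$ c-vertices, and $4n+11m+\sum_{x_i\in\theta}\overline{n_i}$ b-vertices, which sums directly to $8n+18m+\sum_{x_i\in\theta}\overline{n_i}$; the fact that each $l_{i,j}$ (or $\overline{l}_{i,j}$) belongs to both $V^x_i$ and $V^c_j$ is absorbed silently into the single figure ``$3m$ l-vertices.'' You instead tally by \emph{pattern}, computing $\vert V^x_i\vert = 8+3n_i+4\overline{n_i}$ and $\vert V^c_j\vert = 12$, and then correct by inclusion--exclusion for the $3m$ shared l-vertices. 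Your route has the advantage of making the only source of double counting explicit --- which is exactly the point most likely to trip up a reader of the paper's one-line tally of b-vertices and l-vertices --- at the cost of needing the identity $3\sum_{x_i} n_i + 4\sum_{x_i}\overline{n_i} = 9m+\sum_{x_i}\overline{n_i}$ at the end; the paper's route avoids inclusion--exclusion entirely but buries the same identity inside its count of o-vertices and b-vertices. Both arguments rest on the same hypotheses (each clause has exactly three literal occurrences, so $\sum_{x_i}(n_i+\overline{n_i})=3m$), and both are sound.
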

\begin{proof}
We know from the construction of $G'$ that  
the number of t-vertices is $n$,
the number of d-vertices is $n$,
the number of o-vertices is $(3m+2n)$,
the number of l-vertices is $3m$,
the number of c-vertices is $m$,
and  
the number of b-vertices is   $(4n + 5m + \sum\limits_{x_i \in \theta}\overline{n_i})$
in $G$.
So, $G'$ has a total of  
$(8n + 12m + \sum\limits_{x_i \in \theta}\overline{n_i})$ vertices.
%
\end{proof}
\subsection{Construction of a reduction graph}\label{subsecconsred}

Here, we construct the reduction graph $G''(V'',E'')$ 
such that 
 $G''$ is a PVG if and only if $\theta$ is satisfied.
From Lemma \ref{num},
we know that the number of vertices in $G'$ is   $(8n + 12m + \sum\limits_{x_i \in \theta}\overline{n_i})$.
To get $G''$ with certain restrictions on its possible visibility embeddings, we need to join
$G'$ to a modified slanted grid graph $G$ with edges such that  $\vert G \vert \sim \vert G' \vert ^{8}$.
%
  Let $\alpha = 12(m+n)$.
Since   $\alpha > (8n + 12m + \sum\limits_{x_i \in \theta}\overline{n_i})$, 
   but we do not actually require so many vertical lines to embed the 3-SAT graph,
the modified slanted grid graph $G(V,E)$ is constructed
%
starting from a $\alpha \times \alpha$ slanted grid graph, and   $m_0=3m$, as stated
in Section \ref{consmsgg}.
$\\ \\$
The vertices of $G''$ are the vertices of $G$ and $G'$. Hence,
 $V'' = V \cup V'$. 
Consider the unique visibility embedding of $G$, with $p_1$ and $p_2$ as the rightmost and topmost embedding points, respectively.
The $i^{th}$ horizontal line from the top 
and 
the $i^{th}$ vertical line from the left 
are denoted by $l^h_i$ and $l^v_i$ respectively.
The vertex of $V$ that corresponds to the embedding point at the intersection of the $i^{th}$ vertical and $j^{th}$ horizontal lines, 
is denoted by $v(l^v_i, l^h_j)$. Now we assign similar coordinates to vertices of $V'$. Note that we may assign 
 a set of possible coordinates to the same vertex, in order to facilitate the analysis of embeddings of $G''$.
$\\ \\$
%
 For each variable $x_i \in \theta$, coordinates are assigned to vertices of $V^x_i$ as follows (Figure \ref{varpat}).
 \begin{enumerate}[(i)]
  \item 
  Corresponding to each variable $x_i$, distinct horizontal lines are occupied by $n_i + \overline{n_i} + 2$ o-points,
  $n_i + \overline{n_i} $ l-points and $n_i + 2\overline{n_i} + 4$ b-points. The points $t_i$ and $d_i$ 
  occupy the same horizontal line. Hence, for all variables, a total of $9m +  \sum\limits_{x_i \in \theta}\overline{n_i}+ 7n$
  horizontal lines are occupied by embedding points. There are six   points for each clause. However, the three l-points
  for each clause are already counted for their corresponding variables, and the c-points of all the clauses occupy
  the same horizontal line. So, for all the clauses, a total of   $2m + 1$ horizontal lines are occupied excluding 
  those occupied by the l-points. Hence, a total of   $9m +  \sum\limits_{x_i \in \theta}\overline{n_i}+ 7n + 2m + 1$
  $= 11m +  \sum\limits_{x_i \in \theta}\overline{n_i} + 7n +1$ horizontal lines are occupied.
  Let $a$ be a constant which marks the coordinates of the lowest b-point among all $b_{i,j}$ for all $i$ and $j$.
  There are $3m$ number of b-points of the form $b_{i,j}$, and the b-points for the clauses,   $2m$ in number, 
  occupy higher horizontal coordinates. 
  So,   $a = \alpha - (11m +  \sum\limits_{x_i \in \theta}\overline{n_i} + 7n +1) + 3m +2m
  = \alpha - 6m - \sum\limits_{x_i \in \theta}\overline{n_i} -7n -1$. 
  $\\ \\$
  Let $b$ denote the number of vertical lines occupied by the embedding points for the first $i-1$ variables.
  The two large horizontal lines at the top of any visibility embedding of $G$ have $2(\alpha)^4 + \alpha$
  embedding points each, and they occupy the same number of vertical lines. So,
  before the main grid structure where points corresponding to $G'$ begins, $2(\alpha)^4$
  vertical lines are occupied. Also, the embedding points of each variable $x_i$ occupy 
     $(2n_{k} + 3\overline{n_{{k}}} +7)$ vertical lines. So,
  $b = 2(\alpha)^4 + \sum _{k=1} ^{i-1}(2n_{k} + 3\overline{n_{{k}}} +7)$.
 $\\ \\$
Let, $c$ denote the number of horizontal lines occupied by embedding points corresponding to the first $i-1$ variables.
Hence, by the previous calculations,
     $c=  \sum _{k=1} ^{i-1} (2n_{k} + 3\overline{n_{{k}}} + 3)$.
     For the time being, intuitively consider a variable pattern for the $i^{th}$ variable to be the region bounded by
     $l^v_{b+1}$,
 $l^v_{b+ 2n_{i} + 3\overline{n_{i}} +7}$,
  $L_3$
  and 
$l^h_{a+ 4n+c + 2n_{i} +2\overline{n_{i}} + 3 + \overline{n_i}}$, though
we describe the details of the corresponding embedding only in the next section.
Note that the y-coordinates of the lowermost horizontal lines of successive variable patterns 
give rise to a staircase-like structure as seen in Figure \ref{3SAT}. 
  \item 
  The t-point may lie only on one of the two o-lines, and its x-coordinates correspond to those of the two o-lines.
  Horizontally, it lies below the $n_i +1$ left o-points.
  So, assign coordinates $(l^v_{b+1},l^h_{a+ 4n+c + 2n_{i} +2 })$
and  $(l^v_{b+2n_{i} +2\overline{n_{i}}+4},l^h_{a+ 4n+c + 2n_{i} +2 })$ to $t_i$. 
  \item 
  The points $o_{i,0}$ and $\overline o_{i,0}$ are the bottommost and topmost embedding points of the left and right o-lines, respectively.
  So, assign coordinates $(l^v_{b+1},l^h_{a+ 4n+c + n_{i} + 1 })$
  and $(l^v_{b+2n_{i}+4},l^h_{a+ 4n+c + 2n_{i} +2\overline{n_{i}} + 3})$
  to $o_{i,0}$ and $\overline o_{i,0}$ respectively. 
  \item 
  The left o-line occupies the leftmost vertical line of a variable pattern. Its 
  o-points lie on the consecutive horizontal lines, beginning from immediately below the lowest horizontal line 
  of the $(i-1)^{th}$ variable pattern.
  Recall that the left o-vertices of $x_i$ induce a path along with $t_i$ in $G'$. Let 
  $S = (o_{i,l}, \ldots, o_{i,0}, t_i)$ be the sequence of vertices in the path. Note that 
  $S$ has $n_i +2$ elements. 
  So, for each $o_{i,j} \in S$, $j \neq 0$, assign the coordinates $(l^v_{b+1},l^h_{a+ 4n+c + k})$
  to $o_{i,j}$, where $o_{i,j}$ is the $k^{th}$ element of $S$.
   \item
   The right o-line occupies a vertical line after the left o-line, 
   all $n_i + \overline{n}_i$ l-points of the variable pattern, one b-point for each of the l-points, plus 
   two more b-points lie on one vertical line each.
    Its o-points lie on the consecutive horizontal lines, beginning from immediately below the horizontal line for for the lowest
    b-point of the variable pattern, described later.
   Similar to the left o-vertices, the right o-vertices of $x_i$ induce a path along with $t_i$ in $G'$. Let 
  $\overline{S} = (t_i, \overline o_{i,0}, \ldots, \overline o_{i,l})$ be the sequence of vertices in the path. Note that 
  $S$ has $\overline{n_i} +2$ elements. 
  So, for each $\overline o_{i,j} \in \overline S$, $j \neq 0$, assign the coordinates
  $(l^v_{b+2n_{i}+ 2\overline{n_{i}} +4}, l^h_{a+ 4n+c + 2n_{i} +2\overline{n_{i}} + 3 + k})$
  to $\overline o_{i,j}$, where $\overline o_{i,j}$ is the $(k+2)^{th}$ element of $\overline S$.
\item 
The l-points corresponding to the left o-points, lie on horizontal lines 
starting from immediately below the left o-points. If they lie inside the variable pattern at all, then they lie on vertical lines
starting from the third leftmost vertical line of the variable pattern, leaving a vertical line
in between each consecutive l-points, for a corresponding b-point.
To each l-vertex $l_{i,j}$, assign coordinates $(l^v_{b+1+2k},l^h_{a+ 4n+c + n_{i} + 1 +k })$,
where  $o_{i,j}$ is the $k^{th}$ element of $S$.
The line $l^v_{b+1+2k}$ is called an \emph{associated-line} of $l_{i,j}$.
\item 
The l-points corresponding to the right   o-points, lie on horizontal lines 
starting from immediately below the t-point. If they lie inside the variable pattern at all, they lie on vertical lines
starting from two vertical lines to the right of the vertical line containing the rightmost left l-point of the variable 
pattern, leaving a vertical line
in between each consecutive l-points, for a corresponding b-point.
So, to each l-vertex $\overline l_{i,j}$ assign coordinates $(l^v_{b+2n_{i}+ 1 + 2k},l^h_{a+ 4n+c + 2n_{i} + 2 + k })$,
where  $\overline o_{i,j}$ is the $(k+2)^{th}$ element of $\overline S$. 
The line $l^v_{b+1+2k}$ is called an associated-line of $\overline l_{i,j}$.
  \item 
  The d-point $d_i$ lies in the same horizontal line as that of $t_i$, and either on the 
  right o-line, or on the rightmost vertical line of the variable pattern.
  So, assign coordinates  $(l^v_{b+2n_{i}+2\overline{n_{i}}+4},l^h_{a+ 4n+c + 2n_{i} +2 })$
  and $(l^v_{b+ 2n_{i} + 3\overline{n_{i}} +7},l^h_{a+ 4n+c + 2n_{i} +2 })$ to $d_i$.
 \item 

 Assign coordinates $(l^v_{b+2}, l^h_{a+4(n-i)+4})$,
 $(l^v_{2(n_{i}+ \overline{n_{i}})+3}, l^h_{a+4(n-i)+3} )$,
 $(l^v_{2(n_{i}+ \overline{n_{i}})+5} , l^h_{a+4(n-i)+2})$ and \newline
 $(l^v_{2n_{i}+ 3\overline{n_{i}}+6}, l^h_{a+4(n-i)+1})$ to the b-vertices $b^1_{i}$, $b^2_{i}$, $ b^3_{i}$ and $b^4_{i}$ respectively.
 \item
 Let   $e = \alpha + n -  (8n + 12m + \sum\limits_{x_i \in \theta}\overline{n_i}) +
 \sum _{p=i} ^{n} (n_{p} + n_{\overline{p}}) + 5m - 1$.
 The b-points of the forms $ b_{i,j}$ and $\overline b_{i,j}$ lie on consecutive horizontal lines
 starting from $l^h_e$ in a bottom to top manner. Between each such b-point, there is a vertical line
 for accommodating an l-point.
 Assign coordinates $(l^v_{b+2 + 2k},l^h_{e  +1 -k})$ to $b_{i,j}$,
 where  $o_{i,j}$ is the $k^{th}$ element of $S$.
 Similarly, assign coordinates $(l^v_{b+2n_i + 2 + 2k},l^h_{e + \overline{n_{i}} +1 -k})$ to $\overline b_{i,j}$,
 where  $\overline o_{i,j}$ is the $(k+2)^{th}$ element of $\overline S$.
 \item 
  The lowest group of b-points of the variable pattern lie in between the right o-line and the 
 rightmost vertical line of the variable pattern.
 Their function is to block $d_i$ from the right o-points when $d_i$ lies on the rightmost vertical line of the variable pattern.
 So, assign coordinates 
 $\{ (l^v_{2(n_{i}+ \overline{n_{i}})+5+k},   l^h_{a+ 4n+c + 2n_{i} +2\overline{n_{i}} + 3 -k} )$
 to each $\overline{b}^d_{i,j}$, where
  $\overline o_{i,j}$ is the $(k+2)^{th}$ element of $\overline S$.
 \end{enumerate}
 From the assignment of coordinates to the above vertices,
%
%
for   the time being, intuitively consider the $j^{th}$ clause pattern to be the region bounded by the lines
$l^v_f$,
 $l^v_{f+3}$,
$L_3$
and 
$l^h_{\alpha}$
though
we describe the details of the corresponding embedding only in the next section.
 For each clause $C_j \in \theta$ , coordinates are assigned to the vertices of $V^c_j$ as follows (Figure \ref{clpat}).
   \begin{enumerate} [(i)]
   \item  
   Let $f$ denote the x-coordinate of the rightmost vertical line of the $(j-1)^{th}$ clause pattern. 
   The rightmost vertical line of the $n^{th}$ variable pattern has x-coordinate
   $2(\alpha)^4 (\sum _{k=1} ^{n} (2n_{k} + 3n_{\overline{k}} +7))$.
   The first $j-1$ clause patterns occupy   $3(j-1)$ vertical lines. 
   So,   $f =2(\alpha)^4 + (\sum _{k=1} ^{n} (2n_{k} + 3n_{\overline{k}} +7)) + 3(j-1)$.
   $\\ \\$
   Let $g$ denote the y-coordinate of the lowest b-point in the $j^{th}$ clause pattern.
   There are a total $3m$ b-points of the form $b_{i,j}$ and $\overline{b}_{i,j}$ in all the variable patterns
   and these b-points are below the b-points of any clause pattern. 
   There are two b-points occupying two distinct horizontal lines in each clause pattern.
   So,   $g = a - 3m -2(j-1)$. 
   \item 
   The l-points remain in the horizontal lines already assigned to them. However, they may lie on 
   the c-line to satisfy the clause or another vertical line of the clause pattern.
   Assign coordinates 
     $( l^v_{f+3}, l^h_y)$ to $l_{i,j}$ (or, $\overline{l}_{i,j}$), 
   where 
   $l^h_y$ is the second
   component of coordinates assigned to
   $l_{i,j}$ (respectively, $\overline l_{i,j}$) earlier. 

    \item  
%
%
    Assign coordinates $(l^v_{f+10},l^h_{g})$ and $(l^v_{f+11},l^h_{g-1})$ to $b^c_{j,1}$ and  $b^c_{j,2}$ respectively.
    Whichever l-points of $C_j$ lie on there associated lines, 
    are blocked by $b^c_{j,1}$ and $b^c_{j,2}$ from $c_{j,1}$. 
   \item 
   The c-point of the clause pattern lies on the rightmost vertical line of the clause pattern and the bottommost
   horizontal line of the grid.
   So, assign coordinates  
   $( l^v_{f+3}  ,  l^h_{  \alpha }  )$
   to $c_i$.
   \end{enumerate}
Before we define the edge set of $G''$, we need the following definitions related to coordinates assigned to the 
vertices of $V''$.
For every vertex $q \in V'' \setminus\{v_1, v_2 \} $, let $S^q$ be the set of all pairs of coordinates assigned to $q$.
Furthermore, for every vertex $q \in V'' \setminus\{v_1, v_2 \} $, let $S^q_x$ and $S^q_y$ be the 
sets of the first and second components, respectively, of all pairs of coordinates 
assigned to $v$. 
$\\ \\$
Consider vertices $v_a$ and $v_b$, such that $v_a \in V'$ and $v_b \in V \setminus \{v_1, v_2 \}$.
Suppose that there exists some 
$l^v_{x_1} \in S^{v_a}_x \cap S^{v_b}_x$ such that
$(l^v_{x_1},l^h_{y_1}) \in S^{v_a}$ and $(l^v_{x_1},l^h_{y_2}) \in S^{v_b}$
for some $y_1$ and $y_2$. Then we refer to the pair $(v_a,v_b)$ as a \emph{vertical neighbouring pair} if 
there is no 
$v_c$ with
$v_c \neq v_a$ and $v_c \neq v_b$ and $(l^v_{x_1},l^h_{y_3}) \in S^{v_c}$ such that $y_1 > y_3 > y_2$. 
Similarly, suppose that 
there exists some 
$l^h_{y_1} \in S^{v_a}_y \cap S^{v_b}_y$ such that
$(l^v_{x_1},l^h_{y_1}) \in S^{v_a}$ and $(l^v_{x_2},l^h_{y_1}) \in S^{v_b}$
for some $x_1$ and $x_2$. Then we refer to the pair $(v_a,v_b)$ as a \emph{horizontal neighbouring pair} if 
there is no 
$v_c$ with
$v_c \neq v_a$ and $v_c \neq v_b$ and $(l^v_{x_3},l^h_{y_1}) \in S^{v_c}$ such that $x_1 < x_3 < x_2$. 
Let $L(G'')$ be the set of all such vertical or horizontal neighbouring pairs
possible from the vertices of $V'' \setminus \{v_1, v_2 \}$.
So, we have,
\begin{eqnarray*}
 E'' &=& E \cup E' \cup \{ (v_a,v_b) \mid v_a \in V' \ \textrm{and} \ v_b \in V \setminus \{v_1, v_2 \} \ \textrm{and} \  \\ 
     & & ( (S^{v_a}_x \cap S^{v_b}_x)\cup(S^{v_a}_y \cap S^{v_b}_y ) = \phi  \ \textrm{or} \  (v_a,v_b) \in L(G'')) \}\\            
\end{eqnarray*}
Based on the construction of $G''$, we state the following lemma without proof.
\begin{figure}
\begin{center}
\centerline{\hbox{\psfig{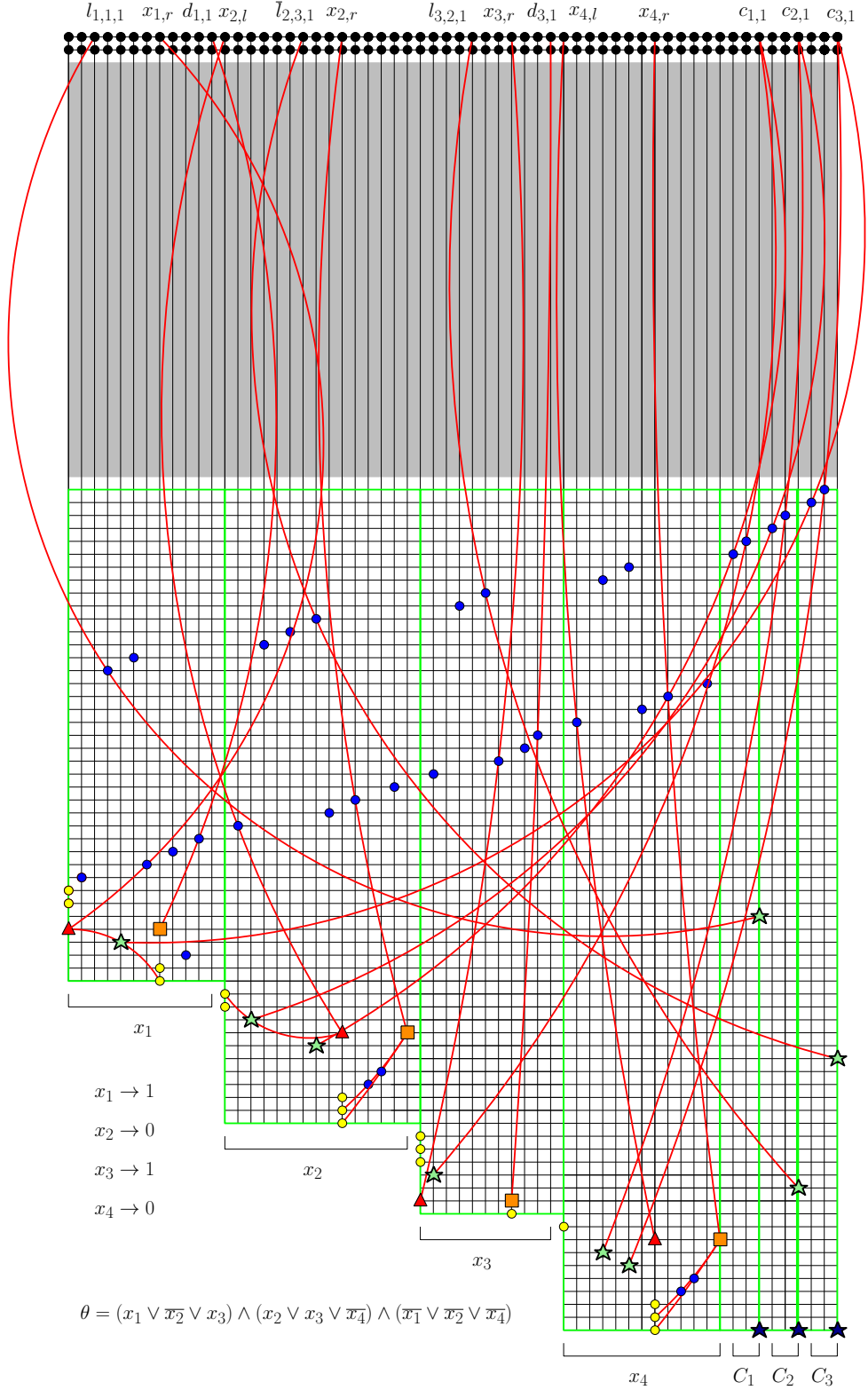}}}
\caption{
A canonical embedding $\psi$ of $G''$ corresponding to the given 
3-SAT formula. 
The top two rows of embedding points are $L_3$ and $L_4$.
The o-points, b-points, l-points, c-points, t-points and d-points
are 
depicted as pale circles, dark circles, pale stars, dark stars, triangles and squares respectively.
The non-horizontal and non-vertical line segments constitute $S_L$.
Some non-edges corresponding to a visibility embedding, drawn here as
red curves along with their blockers are as follows:
 $(t_1,x_{1,r}) \rightarrow   b_1^2            $,
 $(t_1,\overline o_{1,3})  \rightarrow    \overline l_{1,3}              $,
 $(d_1,d_{1,1})     \rightarrow   b_1^4            $,
 $(\overline l_{1,3}, c_{3,1})    \rightarrow    b^c_{3,1}     $,
 $(t_2, x_{2,l})    \rightarrow    b_2^1        $,
 $(t_2, o_{2,2})    \rightarrow    l_{2,2}       $,
 $(d_2,x_{2,r})    \rightarrow   b_2^3        $,
 $(d_2,\overline o_{2,1})      \rightarrow  b^d_{2,1}       $,
  $(d_2,\overline o_{2,3})    \rightarrow    b^d_{2,3}        $,
   $(  l_{2,2}, c_{2,1})    \rightarrow    b^c_{2,1}       $,
    $(\overline l_{2,1}, c_{1,1})    \rightarrow   b^c_{1,1}        $,
  $(t_3,x_{3,r})   \rightarrow    b_3^2         $,
   $(d_3,d_{3,1})   \rightarrow   b_3^4         $,
    $(  l_{3,1}, c_{1,1})    \rightarrow    b^c_{1,2}        $,
   $(t_4, x_{4,l})   \rightarrow   b_4^1         $,
   $(d_4,x_{4,r})   \rightarrow    b_4^3        $,
   $(d_4,\overline o_{4,2})    \rightarrow   b^d_{4,2}        $,
  $(d_4,\overline o_{4,3})    \rightarrow   b^d_{4,3}        $,
    $(\overline l_{4,2}, c_{2,1})     \rightarrow   b^c_{2,2}        $,
      $(\overline l_{4,3}, c_{3,1})    \rightarrow    b^c_{3,2}         $,
   $(l_{1,1},l_{1,1,1})    \rightarrow    b_{1,1}       $,
    $(l_{3,2},l_{3,2,1})    \rightarrow   b_{3,2}         $,
   $(\overline l_{2,3},\overline l_{2,3,1})    \rightarrow     \overline b_{2,3}      $.
}
\label{3SAT}
\end{center}
\end{figure}

\begin{lemma} \label{polytime}
 Given a 3-SAT formula $\theta$, the corresponding reduction graph $G''$ can be constructed in time polynomial 
in the size of $\theta$.
\end{lemma}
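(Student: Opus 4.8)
The plan is to argue that every ingredient in the construction of $G''$ is given by an explicit, efficiently computable rule, and that all the relevant combinatorial objects have size polynomial in $|\theta| = n + m$. First I would bound the number of vertices. By Lemma \ref{num}, $|V'| = 8n + 18m + \sum_{x_i \in \theta}\overline{n_i} = O(n+m)$, since each clause contributes at most three literal occurrences and hence $\sum_{x_i\in\theta}\overline{n_i} \le 3m$. By the construction in Section \ref{subsecconsred}, $G$ is obtained from a $400(m+n)^2 \times 400(m+n)^2$ slanted grid graph, so setting $N = 400(m+n)^2$ and recalling from Section \ref{consmsgg} that an $MSGG$ built on an $N \times N$ grid has $O(N^8)$ embedding points, we get $|V| = O((m+n)^{16})$ and hence $|V''| = |V| + |V'| = O((m+n)^{16})$. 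In particular the number of unordered vertex pairs is $O((m+n)^{32})$, so it suffices to show that the vertex set can be enumerated and each potential edge decided in polynomial time.

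Next I would verify that the coordinate data is polynomial-time computable. Each vertex of $V$ is named by a pair $(l^v_i, l^h_j)$ of grid-line indices bounded by $N^8 = O((m+n)^{16})$, so each index has $O(\log(m+n))$ bits. Each vertex of $V'$ is assigned a constant number of coordinate pairs by the explicit rules (i)--(xi) for variable patterns and (i)--(v) for clause patterns; every component is a closed-form arithmetic expression whose only nonelementary parts are prefix sums such as $\sum_{k=1}^{i-1}(2n_k + 3\overline{n_k} + 7)$, each a sum of $O(n)$ integers of size $O(m)$ and therefore evaluable in polynomial time. Thus the sets $S^q$, $S^q_x$, $S^q_y$ for all $q \in V''$ can be tabulated in polynomial time and have polynomial total size.

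I would then handle the three parts of the union defining $E''$ separately. The defining conditions for $E$ (the $MSGG$ edges of Sections \ref{secsgg}--\ref{consmsgg}) and for $E'$ (the sets $E^x_i$, $E^c_j$ of Section \ref{consalg}) are explicit membership predicates on index pairs; testing each candidate pair costs polynomial time, and there are $O(|V|^2)$ and $O(|V'|^2)$ pairs respectively. The only step needing slightly more care is the cross-edge rule, which calls for the set $L(G'')$ of vertical and horizontal neighbouring pairs. To build $L(G'')$ I would group the $O(|V''|)$ coordinate incidences by their vertical index (respectively horizontal index), sort each group by the other coordinate, and read off consecutive elements; a pair is neighbouring exactly when no vertex lies strictly between them on that line, which is precisely adjacency in the sorted order. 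This is a routine bucket-and-sort running in polynomial time and producing a list of polynomial size. With the coordinate tables and $L(G'')$ in hand, each pair $(v_a, v_b)$ with $v_a \in V'$ and $v_b \in V \setminus \{v_1, v_2\}$ is decided by testing whether $(S^{v_a}_x \cap S^{v_b}_x) \cup (S^{v_a}_y \cap S^{v_b}_y) = \emptyset$ or $(v_a,v_b) \in L(G'')$, each test being of constant or polynomial cost.

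The work here is essentially bookkeeping rather than mathematics: the genuine point to confirm is that the parameter blow-up is only polynomial (the $N^8$ and $|G'|^{16}$ factors are large but fixed powers) and that the neighbouring-pair relation, despite its ``nothing strictly between'' phrasing, reduces to a sort along each line. I expect the main obstacle to be organizational, namely checking that no coordinate formula or edge predicate hides a dependence that is exponential in $|\theta|$; once that is confirmed, the polynomial-time bound follows immediately by summing the costs of the steps above.
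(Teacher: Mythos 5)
Your proof is correct; note that the paper itself states this lemma explicitly \emph{without proof} (``Based on the construction of $G''$, we state the following lemma without proof''), so there is no authorial argument to compare against. Your write-up supplies exactly the verification the author takes for granted: the vertex count is polynomial because the underlying $400(m+n)^2 \times 400(m+n)^2$ grid yields an $MSGG$ with $O\bigl((m+n)^{16}\bigr)$ vertices and $|V'| = O(m+n)$ (using $\sum_{x_i}\overline{n_i}\le 3m$); every coordinate assignment is a closed-form expression with polynomially evaluable prefix sums; and each of the three pieces of $E''$ is decided by an explicit predicate per vertex pair, with the only mildly nontrivial piece, the neighbouring-pair set $L(G'')$, reducing to a sort along each grid line. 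This is the only natural route, and your observation that the apparent danger lies solely in the fixed-power blow-ups ($N^8$, $|G'|^{16}$) rather than in any hidden exponential dependence is the right thing to check. No gaps.
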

\subsection{Canonical embeddings of reduction graphs} \label{propgnew} 
As stated earlier, we have shown the construction of the reduction graph $G''$
of $\theta$ in polynomial time. 
We study here
some properties of $G''$. We need some definitions before we study these properties.
An embedding $\psi$ of $G''$ is called a 
\emph{canonical embedding} of $G''$ if (a) the embedding points of $\psi$ restricted to the vertices of $G$,
form the unique visibility embedding of $G$, and (b) for all $v_q \in G'$, the embedding point of $v_q$  
is embedded on the intersection of horizontal and vertical lines 
giving 
a pair of coordinates
that has been 
assigned to $v_q$.
Observe that in a canonical embedding, the following hold true.  
\begin{enumerate} [(i)]
 \item Each b-point is embedded only on its
corresponding b-line.
\item Each c-point is embedded only on its corresponding c-line. 
\item Each t-point is embedded only on
one of its two o-lines.
\item Each d-point is embedded only on either its d-line or its right o-line.
\item Each o-point is embedded only on its o-line.
\item   Each l-point is embedded either on its associated-line
or its c-line.
\end{enumerate}
%
If a canonical embedding $\psi$ of $G''$ is also a visibility embedding of $G''$, 
then $\psi$ is called a \emph{canonical visibility embedding} of $G''$.
We have the following lemma.
\begin{lemma} \label{canemb}
 If $G''$ is a PVG then every visibility embedding of $G''$ is a canonical visibility embedding.
\end{lemma}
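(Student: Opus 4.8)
The plan is to show that when $G''$ is a PVG, the rigid scaffolding provided by the $MSGG$ forces every visibility embedding to coincide with a canonical one. The starting observation is that $G''$ is built from $G$ (an $MSGG$) by adjoining the comparatively tiny graph $G'$, with $|G| \sim |G'|^{16}$. By Lemma \ref{uemsgg}, $G$ already has a unique visibility embedding up to preservation of lines, and by Lemmas \ref{4lp} and \ref{uemsgg} the horizontal and vertical lines of $\xi$ are preserved in every visibility embedding of $G$. The first step is to argue that this rigidity is inherited by $G''$: since the $G'$-vertices are so few compared with the $G$-vertices on each long line, the hypothesis $k \geq (l+3)^2$ of Lemma \ref{lp} continues to hold for the horizontal and vertical lines of $G$ even after the $G'$-points are thrown into the count of ``points not on the line.'' Concretely, I would reapply Lemma \ref{lp} (and Lemma \ref{2lp}, Lemma \ref{4lp}) with the enlarged point set, using the size bound $400(m+n)^2 \times 400(m+n)^2$ chosen in Section \ref{subsecconsred} precisely so that each grid line still dominates the total number of off-line points. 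This establishes that in any visibility embedding $\psi$ of $G''$, the restriction of $\psi$ to the vertices of $G$ is again the unique embedding of $G$, giving condition (a) of the definition of canonical embedding.

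The second, and substantive, step is condition (b): each vertex $v_q$ of $G'$ must land on the intersection of a horizontal and a vertical line of the grid carrying a pair of coordinates assigned to $v_q$. Here I would exploit the edge set $E''$: by construction a $G'$-vertex $v_a$ is made adjacent to a $G$-vertex $v_b$ exactly when either their assigned coordinate-lines are disjoint or the pair is a neighbouring pair in $L(G'')$. Since the grid lines are now pinned down by step one, the adjacency pattern between $v_a$ and the many $G$-points on a given line $l$ determines whether $\phi^{-1}(v_a)$ lies on $l$ or off it: a point off a line sees essentially all of the (numerous) collinear $G$-points on it, whereas a point on the line is blocked from all but its two neighbours — this is the dichotomy quantified by Lemma \ref{smallemma}. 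Because the coordinate assignments were designed so that $v_a$ is non-adjacent (in $G''$) to precisely the $G$-points sharing one of its own grid lines, counting visible $G$-neighbours along each candidate line forces $\phi^{-1}(v_a)$ onto the intended grid intersection. I would run this argument line by line over the horizontal and vertical rulings, using that $G'$ has only $O((mn)^{16})$-times-fewer points than $G$ so that the degree gap in Lemma \ref{smallemma} is decisive.

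The main obstacle I anticipate is the bookkeeping that turns ``every $G'$-point lies on \emph{some} grid intersection'' into ``every $G'$-point lies on a \emph{coordinate-assigned} intersection.'' A vertex such as a t-point, d-point, or l-point is deliberately assigned \emph{two} alternative coordinate pairs (its two o-lines, or an associated line versus a c-line), so the lemma must not over-constrain: it should pin each such vertex to the union of its permitted lines, not to a single one. The delicate part is verifying that the neighbouring-pair relation $L(G'')$ and the adjacency rule in $E''$ leave no spurious third position available — that the only grid intersections consistent with $v_q$'s entire adjacency vector are the assigned ones. I would handle this by treating the vertex types in the order o-, b-, c-, t-, d-, l- (most rigidly constrained first), at each stage using the already-located points as additional fixed blockers/witnesses, so that the freedom of the multiply-assigned vertices collapses to exactly their listed coordinate pairs. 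Once every $G'$-vertex is shown to occupy an assigned intersection, conditions (a) and (b) together give that $\psi$ is canonical, and being a visibility embedding by hypothesis, it is a canonical visibility embedding, completing the proof.
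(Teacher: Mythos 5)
Your proposal is correct and follows essentially the same two-step route as the paper: first inherit the unique embedding of the $MSGG$ in any embedding of $G''$ by re-invoking Lemmas \ref{lp}, \ref{2lp}, \ref{4lp} and \ref{uemsgg} with the small set of $G'$-points absorbed into the off-line count, and then use the fact that the edge rule for $E''$ makes a $G'$-vertex non-adjacent to grid points exactly on its assigned lines (together with the on-line versus off-line visibility dichotomy) to force each $G'$-point onto one of its assigned intersections. The paper implements your second step by testing non-adjacency against the single points $l^v_i\cap L_3$ and $l^h_j\cap L_1$ rather than by degree counting via Lemma \ref{smallemma}, but this is a cosmetic difference.
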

\begin{proof}
We know from Lemmas \ref{4lp} and \ref{uemsgg} that $G$ has a unique visibility embedding. 
Let $\xi$ be the unique visibility embedding of $G$. Consider lines $L_1$, $L_2$, $L_3$ and $L_4$ in $\xi$ as before (Figure \ref{msgg}).
Note that $\vert L_3 \vert = \vert L_4 \vert = 2(\alpha)^4$
Let $G''$ be a PVG and $\xi'$ be a visibility embedding of $G''$.
Observe that the total number of embedding points in $\xi' \setminus (L_1 \cup L_2 \cup L_2 \cup L_4)$
is less than $\alpha$. Moreover, the embedding points corresponding to the vertices of $G'$ 
are visible from most embedding points of $L_1$, $L_2$, $L_3$ and $L_4$. So, $G''$
satisfies the conditions of 
 Lemmas \ref{4lp} and \ref{uemsgg}, and by a similar argument, it can be shown that 
 the embedding points of $\xi'$ restricted to the vertices of $G$,
form the unique visibility embedding of $G$.
$\\ \\$
Now we show that every vertex $v_q \in G'$ satisfies the second condition of a canonical embedding.
Consider the embedding point $l^v_i \cap L_3$ in $\xi'$. Its corresponding vertex, by the construction of $G''$,
is not adjacent to $v_q$ if and only if $l^v_i$ is assigned as a coordinate to $v_q$.
A similar argument follows for $v_q$ and embedding points of the form $l^h_j \cap L_1$ in $\xi'$.
On the other hand,
two non-consecutive embedding points on a horizontal or vertical line cannot be visible from each other.
So, 
the embedding point of $v_q$  
is embedded on the intersection of horizontal and vertical lines 
giving 
a pair of coordinates
that has been 
assigned to $v_q$. Hence, $\xi'$ is a canonical visibility embedding of $G''$.
\end{proof}
\noindent 
Let us define the {variable pattern} of each $x_i$ and the {clause pattern} of each $C_j$.
For each $x_i$, let
$a = \alpha -7n - 6m - \sum\limits_{x_i \in \theta}\overline{n_i}) 
  -1$,
  $b =   2(\alpha)^4 + \sum _{k=1} ^{i-1} (2n_{k} + 3\overline{n_{{k}}} +7)$,
  and $c=  \sum _{k=1} ^{i-1} (2n_{k} + 3\overline{n_{{k}}} + 3)$, as defined in Section \ref{subsecconsred}.
%
%
%
%
%
%
%
For a canonical embedding $\xi$ of $G''$, the closed region bounded by the four lines 
 $l^v_{b+1}$,
 $l^v_{b+ 2n_{i} + 3\overline{n_{i}} +7}$,
  $L_3$
  and 
$l^h_{a+ 4n+c + 2n_{i} +2\overline{n_{i}} + 3 + \overline{n_i}}$
is called the \emph{variable pattern} of $x_i$ (Figure \ref{varpat}).
Let, for each $C_j$, let
$f = 2(\alpha)^4 + (\sum _{k=1} ^{n} (2n_{k} + 3n_{\overline{k}} +7)) + 12(j-1)$, as defined in Section \ref{subsecconsred}.
For a canonical embedding $\xi$ of $G''$,
the closed region bounded by the four lines 
$l^v_f$,
 $l^v_{f+12}$,
$L_3$
and 
$l^h_{\alpha}$
is called the \emph{clause pattern} of $C_j$ (Figure \ref{clpat}).
\begin{lemma} \label{satcan}
 If $\theta$ is not satisfiable, then $G''$ does not have a canonical visibility embedding.
\end{lemma}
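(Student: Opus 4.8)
By Lemma \ref{canemb}, if $G''$ is a PVG then every visibility embedding is a canonical visibility embedding, so it suffices to show that whenever $\theta$ is unsatisfiable, no canonical embedding $\psi$ of $G''$ can actually be a visibility embedding. The plan is to argue contrapositively about the blocking structure: in a canonical embedding, the positions of all embedding points of $G'$ are constrained to lie on the lines listed in the observation following Lemma \ref{canemb} (each b-point on its b-line, each c-point on its c-line, each t-point on one of its two o-lines, each d-point on its d-line or right o-line, each o-point on its o-line, each l-point on one of its two associated lines or its c-line). I would read off from the edge set $E''$ exactly which pairs of $G'$-vertices must be \emph{non-adjacent}, and hence which pairs of embedding points must be \emph{blocked} in any visibility embedding, and then show that the available blockers force a consistent $0/1$ assignment to the variables that must satisfy every clause.

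The heart of the argument is to set up the correspondence between the combinatorial placement choices and a truth assignment. First I would show that each t-point is forced onto exactly one of the two o-lines of $x_i$ (it cannot be left floating, since its non-adjacencies to the topmost points of an o-line must be realized by blockers), and declare $x_i=1$ if $t_i$ sits on the left o-line and $x_i=0$ otherwise, matching the intended reading in the t-vertex construction. Next, for the clause pattern of $C_j$, I would analyze the c-point $c_j$: by construction its visibility to the second-topmost embedding point of its c-line must be blocked, and the only embedding points available to serve as that blocker are the l-points $l_{i,j}$ (or $\overline l_{i,j}$) corresponding to literals occurring in $C_j$. Each such l-point has two associated lines plus its c-line as the only places it may go (canonicity), so I would show that an l-point can block $c_j$ on the c-line \emph{only if} it is not simultaneously needed to block the corresponding t-point on the variable side, i.e. only if the literal it represents is set true. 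The key counting/geometry step is that there are not enough blockers to do both jobs at once, so at least one literal of $C_j$ must be set true for $c_j$'s visibility constraint to be satisfiable.

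To make this precise I would walk through the variable pattern: an l-point $l_{i,j}$ is needed to block the t-point of $x_i$ from a left o-point exactly when $t_i$ is on the right o-line (i.e. $x_i=0$), because that is the configuration in which those visibilities must be cut; the d-point and the b-points $b^1_i,\dots,b^4_i$ together with the $\overline b^d_{i,j}$ handle the remaining blocking within the pattern, and their canonical positions leave the l-point free for clause duty precisely when $x_i=1$ (for a positive occurrence) or $x_i=0$ (for a negated occurrence). Assembling these, I would conclude: if $\theta$ is unsatisfiable, then for the assignment induced by the t-point placements there is some clause $C_j$ all of whose literals are false, so every l-point of $C_j$ is pinned to its variable-side blocking role and none is available to block $c_j$ on the c-line; consequently the required non-edge between $c_j$ and the second-topmost point of the c-line cannot be realized, so $\psi$ fails to be a visibility embedding. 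Since the assignment was arbitrary over canonical embeddings and all of them induce assignments this way, $G''$ has no canonical visibility embedding.

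The main obstacle I anticipate is the last step's exhaustiveness: I must verify that \emph{every} canonical placement of the l-points, d-points and b-points — not just the intended one — leads to the same blocker shortage, ruling out clever alternative placements (for instance an l-point using its c-line to double up, or a d-point shifting to its d-line to free a blocker) that might sneak in an extra blocker for $c_j$. Handling this requires carefully enumerating, from $E''$ and the canonical-position list, the full set of visibility constraints each candidate blocker must respect and checking that no reassignment creates a spare blocker; the generous separation of coordinate scales built into the construction (the $25n^8$, $2n^4$, and $400(m+n)^2$ gaps) is what I would lean on to guarantee that these blocking roles cannot be shared.
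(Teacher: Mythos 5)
Your proposal follows essentially the same argument as the paper: the canonical embedding pins each t-point to one of its two o-lines and thereby induces a truth assignment, an unsatisfied clause $C_j$ forces every one of its l-points to serve as a variable-side blocker (hence to lie in the variable pattern rather than on the c-line), and then nothing remains to block the c-point of $C_j$ from the second topmost point of its c-line, a contradiction. The exhaustiveness worry you raise at the end is real but is treated no more rigorously in the paper's own proof, which simply asserts that each such l-point \emph{must} be embedded in the variable pattern.
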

\begin{proof}
 Assume on the contrary that $\theta$ is not satisfiable but $G''$ has a canonical visibility embedding $\xi'$. 
 So, each t-point of $\xi'$ is embedded on either its left o-line or right o-line. So, the embedding of the t-points
 corresponds to an assignment of the variables of $\theta$.
 Since one of the clauses (say, $C_j$) is not satisfied, 
 the complements of the literals in $C_j$ have been assigned to $1$. Hence, if $l_{i,j} \in V^c_j$ 
 then $t_i$ lies on the left o-line of $x_i$ and $l_{i,j}$ must be embedded in the variable pattern of $x_i$ in $\xi'$.
 A similar argument holds if $\overline l_{i,j}$ is in $V^c_j$. This is true for all three literals of $C_j$. Hence,
 no l-point can be embedded in the clause pattern of $C_j$ in $\xi'$. Therefore, there is no embedding point 
 to block the visibility of the c-point from  $c_{j,2}$,
 contradicting the assumption.
\end{proof}
\begin{lemma} \label{redfrstdir}
 If $\theta$ is not satisfiable, then $G''$ is not a PVG.
\end{lemma}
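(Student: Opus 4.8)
The plan is to derive this statement immediately by contradiction, chaining the two preceding lemmas. Assume $\theta$ is not satisfiable, and suppose toward a contradiction that $G''$ is nonetheless a PVG. By definition, $G''$ being a PVG means there exists a point set whose point visibility graph is $G''$, i.e.\ $G''$ admits at least one visibility embedding.

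Next I would invoke Lemma \ref{canemb}. Since $G''$ is assumed to be a PVG, that lemma tells us that \emph{every} visibility embedding of $G''$ is in fact a canonical visibility embedding. Applying this to the visibility embedding guaranteed by the previous paragraph, we conclude that $G''$ possesses a canonical visibility embedding.

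Finally, I would apply Lemma \ref{satcan}, which states that whenever $\theta$ is not satisfiable, $G''$ has \emph{no} canonical visibility embedding. This directly contradicts the conclusion of the previous step. Hence the assumption that $G''$ is a PVG is untenable, and $G''$ is not a PVG, as required. There is essentially no hard step here: all the substantive geometric and combinatorial work has already been isolated into Lemmas \ref{canemb} and \ref{satcan}, and this statement is only the logical bookkeeping that combines them. The one point worth stating carefully is that Lemma \ref{canemb} is phrased conditionally on $G''$ being a PVG, so it can only be used after the contradiction hypothesis ``$G''$ is a PVG'' has been assumed; the argument must therefore be organized as a proof by contradiction rather than a direct implication.
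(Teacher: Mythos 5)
Your proposal is correct and is exactly the argument the paper intends: the paper's proof consists of the single line ``follows from Lemmas \ref{canemb} and \ref{satcan},'' and your contradiction argument is just the careful spelling-out of that combination. Your remark about needing the contradiction hypothesis before Lemma \ref{canemb} can be invoked is a valid and worthwhile clarification of the paper's terse proof.
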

\begin{proof}
 The proof follows from Lemmas \ref{canemb} and \ref{satcan}.
\end{proof}

\subsection{Reduction from 3-SAT} \label{3satred}
In this Section we prove that if $\theta$ is satisfiable then $G''$ is a PVG.
Recall that if $\theta$ is not satisfiable then $G''$ is not a PVG.
 We start by constructing a canonical embedding $\psi$ of $G''$,
 and then transform it into a canonical visibility embedding of $G''$.
Let $S_\theta$ be a satisfying assignment of $\theta$.
 Since $G'' = G \cup G'$, 
  all the embedding points corresponding to the vertices of $G$ are embedded initially to form the unique visibility embedding of
  $G$. 
  Then, embedding points corresponding to $G'$ are embedded to complete the embedding $\psi$ of $G''$ (Figure \ref{3SAT})
  as follows.
  Repeat the following three steps for all $x_i \in \theta$.
   \begin{enumerate} [(a)]
  \item If $x_i$ is assigned $1$ in $S_\theta$ then embed the t-point of $t_i$ on its left o-line. 
  Otherwise embed the t-point of $t_i$ on its right o-line.
  \item If the t-point of $t_i$ is embedded on the its left o-line, then embed the d-point of $d_i$ on its right o-line.
  Otherwise embed the d-point of $d_i$ on its d-line.
  \item If the t-point of $t_i$ is embedded on its left o-line, then embed the l-points of $l_{i,j}$
  on their associated-lines, for all $j$.
  Otherwise embed the l-points of $\overline l_{i,j}$
  on their associated-lines, for all $j$.
 \end{enumerate}

  \noindent 
As a next step,
  for each clause $C_j$, choose an l-point of $C_j$ that has 
  not been embedded yet, and embed it on the intersection of the c-line
  and a horizontal line corresponding to a pair of coordinates assigned to its c-vertex.
  Observe that such l-points are always available for each clause, since $S_\theta$ is a satisfying assignment
  of $\theta$. All the remaining l-points are embedded on their associated-lines.
 The construction of $\psi$ is completed by the following step.
  \begin{enumerate} [(a)]
  \item Embed all the c-points and b-points on the intersection points representing the unique pair of coordinates assigned to them.
 \end{enumerate}
Before the above embedding $\psi$ is transformed to a visibility embedding $\xi$ of $G''$,
we need the following lemma for rotating a line in $\psi$.
 \begin{lemma}
 \label{move1}
 Consider a line $l'$ 
 of $\psi$. Let $\{ p_1, p_2, \ldots, p_q\}$ denote the order of all embedding points on $l'$
 where $p_i$ 
 lies on the intersection point of $l'$ and a non-ordinary line $l_i$  (Figure \ref{befig1}(a)).
For any given real $\epsilon > 0$ and 
embedding point $p_j$ for $1 \leq j \leq q$, $l'$ can be rotated with $p_j$ 
as the pivot to form a new $l'$ satisfying the following properties.
%
 \begin{enumerate}[(a)]
 \item \label{l1pa}
 The 
 embedding points of $\psi$ on $l'$, except $p_j$, are relocated on the new $l'$. All other embedding points in $\psi$ remain unchanged.

\item \label{l1pd}
The order of embedding points on $l'$ and the new $l'$ are the same.
\item  \label{l1pc}
The order of embedding points on each $l_i$ also remains the same.
  \item  \label{l1pe}
  $\forall i \neq j, 1 \leq i \leq q$, $p_i$ does not lie on any other non-ordinary line.
  \item \label{l1pf}
  For each $p_i$ on $l'$, the Euclidean distance between the new and old positions of $p_i$ is less than  $\epsilon$.
 \end{enumerate}

\end{lemma}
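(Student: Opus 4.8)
The goal of Lemma \ref{move1} is a local perturbation result: given a line $l'$ in the embedding $\psi$ and a chosen pivot $p_j$ on it, we want to rotate $l'$ slightly about $p_j$ so that the embedding points on $l'$ move to new positions while (i) preserving the cyclic/linear order of points along $l'$, (ii) preserving the order along each non-ordinary line $l_i$ meeting $l'$ at $p_i$, (iii) destroying all unwanted incidences so that each $p_i$ (for $i \neq j$) lands on no non-ordinary line other than $l_i$ itself, and (iv) keeping every displacement below a prescribed $\epsilon$. The plan is to treat this as a standard general-position perturbation argument, carried out by a single rotation rather than independent point moves, since the points on $l'$ must stay collinear.

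The first step is to set up the rotation explicitly. Fix the pivot $p_j$ as the center and parametrize the rotation of $l'$ by a small angle $\theta$; each embedding point $p_i$ on $l'$ (with $i \neq j$) then traces a circular arc of radius $|p_i p_j|$ about $p_j$, so its position becomes a real-analytic (indeed trigonometric) function of $\theta$, equal to its original position at $\theta = 0$. Property \eqref{l1pa} holds by construction, since only the points on $l'$ are moved and they all rotate rigidly with the line. Property \eqref{l1pd} is immediate: a rigid rotation about a point on the line preserves the linear order of collinear points, for all $\theta$ in a neighborhood of $0$. Property \eqref{l1pf} is handled by choosing $\theta$ small enough that every arc length $|p_i p_j|\cdot|\theta|$ stays below $\epsilon$; this bounds the displacement of each moved point.

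The substantive content is in properties \eqref{l1pc} and \eqref{l1pe}, which I would establish by a measure-zero/finite-exclusion argument on the rotation angle. For \eqref{l1pe}, observe that for each $i \neq j$ and each non-ordinary line $L \neq l_i$ of $\psi$, the set of angles $\theta$ for which the rotated image of $p_i$ happens to lie on $L$ is the solution set of one analytic equation; since the moving point genuinely sweeps an arc (it is not the pivot), this is not identically satisfied, so it has only finitely many solutions in any bounded interval. There are finitely many choices of $i$ and finitely many non-ordinary lines $L$, hence finitely many forbidden angles in total; we simply pick $\theta$ avoiding all of them, and small enough to respect \eqref{l1pf}. Property \eqref{l1pc} is the order along each $l_i$: at $\theta = 0$ the points of $l_i$ sit in their prescribed order, and only the single point $p_i$ on $l_i$ is being moved (all other points of $\psi$, including the rest of $l_i$, are fixed); the position of $p_i$ varies continuously with $\theta$, so for $|\theta|$ below some threshold $p_i$ does not cross any of its neighbors along $l_i$ and the order is preserved. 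Taking $\theta$ smaller than this threshold as well finishes the argument.

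The main obstacle, and the only place requiring care, is the simultaneous satisfaction of \eqref{l1pc} and \eqref{l1pe}: the former forces $\theta$ small while the latter forbids a finite set of specific angles, so one must verify these constraints are compatible. They are, because \eqref{l1pe} excludes only finitely many points and leaves a punctured neighborhood of $0$, within which \eqref{l1pc} and \eqref{l1pf} already hold for all sufficiently small $|\theta|$; intersecting these conditions still leaves a nonempty set of admissible angles arbitrarily close to $0$. A secondary subtlety is ensuring that when $p_i$ is pulled off a line $L \neq l_i$ we do not accidentally create a \emph{new} unwanted incidence, but this is subsumed in \eqref{l1pe}, which already quantifies over all non-ordinary lines of $\psi$ other than $l_i$ simultaneously, so the single choice of generic-and-small $\theta$ handles everything at once.
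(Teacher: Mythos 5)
Your construction is not the right map, and the error is fatal rather than cosmetic. You rotate $l'$ \emph{rigidly} about $p_j$, so each $p_i$ ($i\neq j$) travels along a circular arc centred at $p_j$. But each $p_i$ is required to remain incident to its non-ordinary line $l_i$: the other embedding points of $l_i$ are not moved, so $l_i$ is a fixed line of the plane, and a generic small rigid rotation takes $p_i$ strictly off it. This destroys the collinearity of $p_i$ with the rest of $l_i$ — exactly the incidences the lemma must preserve, since in the application $l'$ is a vertical grid line and the $l_i$ are the horizontal grid lines, and the blocking relations of $G''$ are encoded by these collinearities. Your justification of property (c) (``$p_i$ does not cross any of its neighbors along $l_i$'') silently assumes $p_i$ stays on $l_i$, which your own transformation does not guarantee. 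The correct move, which is what the paper does, is to rotate the \emph{line} $l'$ about $p_j$ and then redefine the new position of each $p_i$ as the intersection point $l_i \cap l'_{\mathrm{new}}$; that is, each $p_i$ slides along its own line $l_i$ rather than along a circle about $p_j$. This keeps $p_i$ on both $l'$ and $l_i$ simultaneously, which is the entire content of properties (a)--(c).

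Once the relocation rule is fixed, the remainder of your argument is in the right spirit and close to the paper's: the position of $l_i \cap l'_{\mathrm{new}}$ still varies continuously (indeed analytically) with the rotation angle, so order preservation along $l'$ and along each $l_i$, the $\epsilon$-bound, and the avoidance of the finitely many angles at which some $p_i$ would land on an intersection of $l_i$ with another non-ordinary line all go through by the same finite-exclusion reasoning. The paper packages this more constructively — rotate clockwise only until the first ``event'' (a relocated $p_i$ reaching an intersection point of $\psi$ on $l_i$, or sweeping distance $\epsilon$ along $l_i$) and then choose the new $l'$ strictly before that event — but that is an implementation detail. The gap you must close is the definition of where the points go, not the genericity argument.
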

\begin{proof} Rotate $l'$ with $p_j$ as the pivot in clockwise direction
 until it reaches a point $y$ on some line $l_i$ such that $y$ is either an intersection point of $\psi$ or 
 the length of the segment $p_iy$ is $\epsilon$.
%
%
%
%
The new $l'$ is the line through 
 $p_j$ and 
 some point in the interior of $p_iy$.
 Embed each $p_i$ 
 on the intersection point of $l_i$ and the new $l'$ (Figure \ref{befig1}(b)).
 It can be seen that the properties (\ref{l1pa}), (\ref{l1pc}), (\ref{l1pd}), (\ref{l1pe}) and (\ref{l1pf})
 of the lemma are satisfied.
\end{proof}
\noindent 
Observe that in $\psi$, there can be several non-ordinary lines that are not horizontal or vertical lines.
The blocking relationships induced by these lines may not conform to the edges in $G'$.
Treating
each vertical line as $l'$ and each horizontal line
intersecting
$l'$ as $l_i$,
 Lemma \ref{move1} 
 is applied 
 on every vertical line in $\psi$ by rotating around $p_2$.
  Thus, any non-ordinary line that now passes through an embedding point of $\psi$ is either a vertical or a horizontal line. 
We have the following lemma on rotating multiple lines of $\psi$.
\begin{figure}
\begin{center} 
\centerline{\hbox{\psfig{figure=,width=0.900\hsize}}}
\caption{
(a) The lines $l_1, l_2, l_3, l_4$ and $l_5$ intersect $l'$
at $p_1$, $p_2$, $p_3$, $p_4$ and $p_5$ respectively. 
(b) The line $l'$ is rotated around $p_2$, 
so that all embedding points on $l'$
except $v_2$ are relocated. Each of the relocated embedding points lie on exactly two non-ordinary lines and within
$\epsilon$ distance of their previous positions.
}
\label{befig1}
\end{center}
\end{figure}
\begin{lemma} 
\label{move3}
  Consider a vertical line $l'$ 
 of $\psi$. 
 Let $( w_i, w_{i+1}, \ldots, w_{i+j-1}, w_{i+j} )$ be all embedding points on $l'$ from $w_i$ to $w_{i+j}$
 such that they lie on the intersection points of $l'$ with $( l_i, l_{i+1}, \ldots, l_{i+j-1}, l_{i+j})$ respectively.
%
%
%
 Let $q_1$ and $q_2$ be any two designated points on the interval
 $w_{i+1}w_{i+j-1}$. 
 For every line $l_{i+k} \in ( l_{i+1}, \ldots, l_{i+j-1})$, a new $l_{i+k}$ can be constructed such that 
 $l_{i+k}$ intersects $l'$ at a point $r_{k}$ satisfying the following properties.
  \begin{enumerate}[(a)]
  \item \label{l2pa}
  The points $( r_1, r_2, \ldots, r_{j-1} )$ lie
  on $q_1q_2$ and their order follows the order of 
  $( w_i, w_{i+1}, \ldots, w_{i+j-1}, w_{i+j} )$.
%
   \item  \label{l2pb}
   The non-ordinary lines passing through the embedding points on 
   $( l_i, l_{i+1}, \ldots, l_{i+j-1}, l_{i+j} )$ are either vertical or horizontal lines.
%
  \end{enumerate}

\end{lemma}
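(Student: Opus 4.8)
The plan is to reduce Lemma~\ref{move3} to repeated applications of the single-line rotation result, Lemma~\ref{move1}. The previous lemma lets me rotate a line $l'$ about a chosen pivot so that every other embedding point on $l'$ is relocated by less than $\epsilon$, the orders along $l'$ and along every crossing line are preserved, and each relocated point ends up on exactly two non-ordinary lines. Here the situation is transposed: instead of rotating $l'$ itself, I want to rotate each of the crossing lines $l_{i+1},\ldots,l_{i+j-1}$ so that their intersections with the \emph{fixed} vertical line $l'$ are pushed into the designated subinterval $q_1q_2$, while keeping the vertical/horizontal non-ordinary lines intact. So the first step is to observe that rotating a crossing line $l_{i+k}$ about a suitable pivot on it is exactly an instance of Lemma~\ref{move1} with the roles of the ``line being rotated'' and the ``family of lines it meets'' interchanged: I treat $l_{i+k}$ as the $l'$ of Lemma~\ref{move1}, and $l'$ together with the other non-ordinary lines through the points of $l_{i+k}$ as the $l_i$'s.

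Second, I would process the crossing lines $l_{i+1},\ldots,l_{i+j-1}$ one at a time, from the endpoints inward or in any fixed order, choosing for each $l_{i+k}$ a pivot point on it away from $l'$ (for instance an embedding point of $l_{i+k}$ that is not its intersection with $l'$, or a generic interior point), and rotating it by Lemma~\ref{move1} so that its intersection $r_k$ with $l'$ lands at a prescribed location inside $q_1q_2$. Because Lemma~\ref{move1} guarantees that the order of embedding points along the rotated line is preserved and that each moved point lies on exactly two non-ordinary lines, property~(\ref{l2pb}) will be maintained throughout: the only non-ordinary lines through the relocated points are $l'$ (vertical) and the horizontal lines, and the rotation never creates a third collinearity. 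For property~(\ref{l2pa}), I would pre-assign target positions $r_1,\ldots,r_{i+j-1}$ on $q_1q_2$ in the same order as $w_i,\ldots,w_{i+j}$ and aim each rotation at its target; since the $\epsilon$ in Lemma~\ref{move1} can be taken as small as we like, and the targets can be chosen to respect the order, the resulting points sit on $q_1q_2$ in the correct order.

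The main obstacle is ensuring that the rotations do not interfere with one another: when I rotate $l_{i+k}$, its embedding points move, and I must be sure this does not destroy a collinearity or an order that a later rotation of $l_{i+m}$ relies on, nor create a new non-ordinary line among points that were supposed to stay generic. The way to control this is to keep each individual displacement within a sufficiently small $\epsilon$ and to argue that the finitely many ``forbidden'' directions (directions producing an unwanted collinearity through three embedding points, or reversing an order) form a measure-zero set, so a generic small rotation avoids all of them. Thus after processing each line I re-apply the genericity observation underlying Lemma~\ref{move1}, and since there are only finitely many lines to process, finitely many such small perturbations suffice. The remaining verification — that the $r_k$ indeed lie in $q_1q_2$ in the prescribed order and that every surviving non-ordinary line is horizontal or vertical — is then routine, following directly from the per-line guarantees of Lemma~\ref{move1}.
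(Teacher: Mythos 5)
Your overall strategy --- process the crossing lines $l_{i+1},\ldots,l_{i+j-1}$ one at a time and invoke Lemma~\ref{move1} on each to push its intersection with the fixed vertical line $l'$ into $q_1q_2$ while restoring general position --- matches the paper's. The gap is in your choice of pivot. You propose to rotate each $l_{i+k}$ about ``an embedding point of $l_{i+k}$ that is not its intersection with $l'$, or a generic interior point.'' In $\psi$ the lines crossing a vertical line at embedding points are horizontal lines, i.e.\ rays through the convergence point $p_1$, and each typically carries many embedding points. If you rotate such a line about a generic pivot, its relocated points remain collinear on the new $l_{i+k}$, which no longer passes through $p_1$; that new line is then a non-ordinary line that is neither horizontal nor vertical, which is exactly what property~(b) forbids. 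It also changes visibility from $p_1$, which previously saw only the nearest point of each horizontal line. The paper avoids this by always taking the pivot to be $p_1$ itself: the new $l_{i+k}$ is defined as the line through $p_1$ and a chosen point $r'_k$ on the appropriate subinterval of $q_1q_2$, so it remains a horizontal line, and Lemma~\ref{move1} is applied about $p_1$ only to perturb it into general position.

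A second, smaller issue: Lemma~\ref{move1} only guarantees a displacement of less than $\epsilon$ (its rotation stops at the first degeneracy or at distance $\epsilon$), so it cannot by itself transport the intersection point to a distant prescribed target inside $q_1q_2$, as your proposal asks it to do. The paper separates the two steps: the large repositioning is performed directly, by simply redefining $l_{i+k}$ as the line through $p_1$ and $r'_k$, and Lemma~\ref{move1} supplies only the final $\epsilon$-perturbation, with $\epsilon = \frac{1}{2}\min(q_1r'_k,\, r'_k q_2)$ so that the resulting $r_k$ stays inside the intended subinterval; choosing each subsequent $r'_{k+1}$ on $r_k q_2$ then yields the required order. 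Your idea of pre-assigning ordered targets would also handle the ordering, but you must make the large move explicitly rather than asking Lemma~\ref{move1} to perform it.
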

\begin{proof} 
Let $p'_1$ be a point on $q_1q_2$. 
Set $\epsilon = \frac{1}{2} \min (q_1r'_1, r'_1q_2)$
Rotate the line passing through $r'_1$ and $p_1$ with 
$p_1$ as the pivot
using Lemma \ref{move1} to obtain a new intersection point $r_1$ on $l'$. The line passing through $p_1$ and 
$r_1$ is the new $l_1$, and embedding points on $l_1$ are relocated on the corresponding intersection points of 
the new $l_1$.
Analogously, choose a point $r'_2$ on $r_1q_2$ and construct the new $l_2$ giving a new intersection point 
$r_2$ of $l'$ on $r_1q_2$.
These operations are performed on all lines in $( l_{i+1}, \ldots, l_{i+j-1})$. 
It can be seen that the properties (\ref{l2pa}) and (\ref{l2pb})
 of the lemma are satisfied.
\end{proof}
\begin{figure}
\begin{center} 
\centerline{\hbox{\psfig{figure=,width=0.900\hsize}}}
\caption{
(a) The embedding points $b_1$, $b_2$, $b_3$ and $b_4$ of $B$ lie in the interior of the quadrilateral $q_1q_2q_3q_4$.
(b) The embedding points $b_1$, $b_2$, $b_{3}$ and $b_4$ of $B$ are relocated inside $q_1q_5q_6q_4$,
with $b_1$ blocking the segment $a_4c_4$. 
}
\label{befig2}
\end{center}
\end{figure}

$\\ \\$
Using Lemma \ref{move3}, we show that embedding points inside a special type of quadrilateral can be relocated 
as blockers of pairs of embedding points lying outside the quadrilateral.
Consider a quadrilateral $Q = (q_1, q_2, q_3, q_4)$, where $q_1$, $q_2$, $q_3$ and $q_4$
are embedding points of $\psi$ lying on 
$(l^v_{x_1},l^h_{y_1})$,
$(l^v_{x_1},l^h_{y_2})$, $(l^v_{x_2},l^h_{y_2})$ and $(l^v_{x_2},l^h_{y_1})$ respectively,
and $x_1<x_2$ and $y_1<y_2$.
Let $B$
be the set of all embedding points lying inside $Q$.
$B$ is said to be an \emph{ordered set} if no two embedding points of $B$ lie on the same horizontal or 
vertical line, and $B$ satisfies exactly one of the following properties.
\begin{enumerate}
 \item For all embedding points $b' \in B$ and $b'' \in B$ embedded on $(l^v_{x'},l^h_{y'})$ and $(l^v_{x''},l^h_{y''})$ respectively,
  if $x'<x''$ (or, $x'>x''$) then $y'<y''$ (respectively, $y'>y''$).
   \item For all embedding points $b' \in B$ and $b'' \in B$ embedded on $(l^v_{x'},l^h_{y'})$ and $(l^v_{x''},l^h_{y''})$ respectively,
  if $x'<x''$ (or, $x'>x''$) then $y'>y''$ (respectively, $y'<y''$).

\end{enumerate}

$\\ $
Let $A$
be a set of embedding points of $\psi$ such that 
each $a_i \in A$ lies to the left of $l^v_{x_1}$ and also lies either above $l^h_{y_1}$ or below $l^h_{y_2}$.
Let $C$ 
be a set of embedding points of $\psi$ such that 
each $c_i \in C$ lies to the right of $l^v_{x_2}$ and  also lies either above $l^h_{y_1}$ or below $l^h_{y_2}$.
Let $S$
be a set of line segments $a_ic_j$ where $a_i \in A$ and $c_j \in C$,
and $a_ic_j$ intersects both $q_1q_4$ and $q_2q_3$. 
A pentuple $U = (Q,A,B,C,S)$ is called a \emph{good pentuple} if 
$ \vert B \vert \geq \vert S \vert $, and
   $B$ is an ordered set.
\begin{lemma} 
 \label{mainmovelem1}
 For a given good pentuple $U = (Q,A,B,C,S)$ in $\psi$, horizontal and vertical lines passing through 
 the embedding points of $B$ can be relocated satisfying the following properties.
  \begin{enumerate}[(a)]
  \item All horizontal and vertical lines in $\psi$ retain their angular ordering around $p_1$ and $p_2$ respectively.
   \item Each embedding point in $B$ lies on exactly one segment of $S$.
   \item Each embedding point in $B$ lies on exactly three non-ordinary lines, two of which are horizontal and vertical lines.
   \item 
   For every horizontal or vertical line $l''$ containing $b \in B$, no embedding point on 
   $l'' \setminus \{ b \}$ lies on a third non-ordinary line after relocation.
  \end{enumerate}
\end{lemma}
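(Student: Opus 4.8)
The plan is to relocate the embedding points of $B$ one at a time, sweeping through the segments of $S$ in a consistent order, and at each step invoke Lemma \ref{move3} to pull the relevant horizontal and vertical lines so that a chosen $b \in B$ lands precisely on its assigned segment of $S$. First I would fix an assignment (an injection) from $B$ into $S$: since $U$ is a good pentuple we have $\vert B \vert \geq \vert S \vert$, so there are at least as many available blocker points as segments that need blocking, and I would match each segment $a_ic_j \in S$ to a distinct point of $B$, discarding the surplus points of $B$ (or leaving them to be relocated harmlessly). Because $B$ is an \emph{ordered set}, the horizontal order and vertical order of the points of $B$ agree (up to global reversal), so this matching can be chosen to be order-preserving, which is what will let the horizontal and vertical lines be moved without crossing one another and thereby preserve their angular order around $p_1$ and $p_2$ (property (a)).

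The core of the argument is the per-point relocation. For the $b \in B$ matched to a segment $a_ic_j$, consider the horizontal line $l^h$ and the vertical line $l^v$ currently carrying $b$. The segment $a_ic_j$ crosses both $q_1q_4$ and $q_2q_3$ by the definition of $S$, hence it crosses the vertical strip of $Q$; I would take the point where $a_ic_j$ meets the vertical line $l^v$ (or a short interval around it) as the target, set $q_1,q_2$ of Lemma \ref{move3} to a tiny sub-interval of $l^v$ straddling that crossing point, and apply Lemma \ref{move3} to rotate $l^h$ about the far pivot so that its intersection $r_k$ with $l^v$ moves onto the segment $a_ic_j$. Dually I would adjust $l^v$ so that the crossing lies on the segment; doing the horizontal and vertical adjustments together places $b$ exactly on $a_ic_j$, giving property (b). Since Lemma \ref{move3} only creates vertical or horizontal non-ordinary lines through the relocated points (its property (\ref{l2pb})), after relocation $b$ lies on exactly its horizontal line, its vertical line, and the (ordinary-turned-non-ordinary) segment $a_ic_j$ — exactly three non-ordinary lines, two of them axis-parallel — which is property (c).

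The main obstacle, and the step deserving the most care, is property (d): ensuring that when $b$ is placed on $a_ic_j$, no \emph{other} embedding point on the same horizontal or vertical line $l''$ through $b$ is dragged onto some third non-ordinary line. This is where the $\epsilon$-control in Lemmas \ref{move1} and \ref{move3} is essential. I would argue that each relocation can be performed within an arbitrarily small $\epsilon$-neighbourhood of the original positions, chosen smaller than the minimum gap between any embedding point and any non-ordinary line it does not already lie on; since there are finitely many embedding points and finitely many non-ordinary lines, such an $\epsilon$ exists. Processing the points of $B$ in the order-preserving sequence and keeping every perturbation within this shrinking $\epsilon$-budget guarantees that the relocations do not interfere with one another, that the ordered structure of $B$ is maintained so the axis-parallel lines never cross (preserving (a)), and that no spurious collinearity with a third non-ordinary line is introduced (giving (d)). It then remains only to observe that the finitely many successive applications of Lemma \ref{move3}, each respecting the $\epsilon$-bound, yield the asserted configuration.
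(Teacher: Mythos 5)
Your overall strategy --- match the ordered set $B$ to the segments of $S$ in an order-preserving way and use Lemmas \ref{move1} and \ref{move3} to drag each point's horizontal and vertical lines until the point lands on its segment --- is the same iterative strategy the paper uses. But there is a concrete gap: you never deal with the possibility that segments of $S$ cross one another \emph{inside} $Q$. The segments of $S$ are only required to intersect both $q_1q_4$ and $q_2q_3$; two of them can intersect in the interior of $Q$, in which case their left-to-right order is not well defined and your ``order-preserving'' matching between $S$ and $B$ is ambiguous. Worse, if $b'$ precedes $b''$ in the order on $B$ but the segment assigned to $b'$ lies to the right of the segment assigned to $b''$ at the heights where the two points must be placed, then realizing both placements forces two vertical lines (or two horizontal lines) to swap, destroying the angular ordering around $p_2$ (respectively $p_1$) and hence property (a). The paper's proof avoids this with a step you are missing: it first chooses $q_5 \in q_1q_2$ close enough to $q_1$ that the thin quadrilateral $q_1q_5q_6q_4$ (with $q_6 = p_1q_5 \cap q_3q_4$) contains no intersection point of two segments of $S$, and applies Lemma \ref{move3} to sweep all horizontal lines of $H_Q$ into that region. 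Only after this contraction do the segments of $S$ acquire a genuine left-to-right order (by their intercepts on $q_5q_6$), and only then does the greedy step --- leftmost point of $B$ onto leftmost segment of $S$, followed by recursion on the residual good pentuple $U'$ --- go through.

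A second, smaller issue concerns properties (c) and (d). You lean on a global $\epsilon$ chosen ``smaller than the minimum gap between any embedding point and any non-ordinary line it does not already lie on,'' but the target point $p = l^v_x \cap a_ic_j$ is not a small perturbation of an existing point: it is wherever the vertical line happens to meet the segment, and nothing in your argument prevents it from landing on a third non-ordinary line. The paper handles this explicitly by treating $p$ as an embedding point and applying Lemma \ref{move1} with $p_2$ as pivot to rotate $l^v_x$ until $p$ avoids every other non-ordinary line, and only then relocating $l^h_y$ through $p$. You need that step (or an equivalent explicit perturbation of the vertical line) before asserting that $b$ ends up on exactly three non-ordinary lines and that no other point of $l''$ acquires a third one.
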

\begin{proof}
Wlog let $B$ satisfy Property $1$ of ordered sets. Choose an appropriate point $q_5 \in q_1q_2$ 
such that no intersecting points of the segments of $S$ lie
in the interior of $q_1q_5q_6q_4$, where $q_6$ is the point of intersection of $p_1q_5$ and $q_3q_4$
(Figure \ref{befig2}(a)).
Let $H_Q$ and $V_Q$ be the set of all horizontal and vertical lines passing through $Q$, respectively.
By applying Lemma \ref{move3} on any vertical line in $V_Q$, relocate all horizontal lines
of $H_Q$ such that all of them pass through $q_1q_5q_6q_4$.
$\\ \\$
Observe that all embedding points of $B$ have moved inside $q_1q_5q_6q_4$.
Since none of the segments of 
$S$ intersect inside $q_1q_5q_6q_4$, they have a left to right order defined by their intercepts on $q_5q_6$.
Let $a_ic_j$ be the leftmost segment of $S$ in this order.
Denote the leftmost embedding point of $B$ as $b$, and let $l^v_x$ and $l^h_y$ be its vertical and horizontal lines 
respectively.
Applying Lemma \ref{move3} on any horizontal line in $H_Q$, all vertical lines 
of $V_Q$ are relocated such that $l^v_x$ intersects $a_ic_j$ at a point (say, $p$) (Figure \ref{befig2}(b)),
maintaining other lines of $V_Q$ passing through $q_1q_5q_6q_4$.
Treating $p$ as an embedding point and taking $p_2$ as the pivot, Lemma \ref{move1} can be applied on
$l^v_x$ to ensure that $p$ does not lie on any other non-ordinary line.
Now embed $b$ on $p$ by relocating $l^h_y$ accordingly.
Relocate all other horizontal lines
of $H_Q$ by applying Lemma \ref{move3}, maintaining all lines of $H_Q$ passing through $q_1q_5q_6q_4$.
$\\ \\$
It can be seen that $U' = (bq_7q_3q_8,A,B \setminus \{ b \},C,S \setminus \{ a_ic_j \} )$ is a good pentuple, 
where $q_7 = l^v_x \cap q_5q_6$
 and $q_8 = l^h_y \cap q_3q_4$.
 Repeating the above procedure, 
 embedding points of $B$
 are placed on all segments of $S$ as blockers,
 satisfying  properties (a), (b), (c) and (d) of the lemma.
 Analogous arguments of the proof are applicable if $B$ satisfies Property $2$ of ordered sets.
\end{proof}
$\\$
Now we use Lemmas \ref{move1}, \ref{move3} and \ref{mainmovelem1} to finally transform $\psi$ into a 
visibility embedding $\xi$ of $G''$. We have the following lemma.
\begin{figure}
\begin{center} 
\centerline{\hbox{\psfig{figure=,width=0.990\hsize}}}
\caption{(a)
Since t-point lies on the left o-line, blockers are placed between t-point and right o-points.
The quadrilateral required for a good pentuple is 
shaded in gray.
(b)
Since t-point lies on the right o-line, blockers are placed between t-point and left o-points.
Since b-point lies on the d-line, blockers are placed between d-point and right o-points.
Quadrilaterals required for a good pentuple are
shaded in gray.
}
\label{embex}
\end{center}
\end{figure}
\begin{lemma} \label{mainmovelem2}
 The canonical embedding $\psi$ can be transformed into a visibility embedding  $\xi$ of $G''$.
\end{lemma}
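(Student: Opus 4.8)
The plan is to start from the canonical embedding $\psi$ built from the satisfying assignment $S_\theta$ and repair it in two stages: first destroy every accidental collinearity, then reinstate exactly the blocking relations that the non-edges of $G''$ demand. In $\psi$ all embedding points of $G'$ sit on grid intersections, so the restriction to $G$ is its unique visibility embedding and every pair lying on distinct horizontal and vertical lines with no grid-blocker between them is already mutually visible, matching $G''$. Only two defects remain: (i) some non-ordinary lines of $\psi$ are neither horizontal nor vertical, inducing non-edges that do not occur in $G''$; and (ii) the required diagonal blockings --- a t-point from the o-points of its opposite o-line, a d-point from o-points, a c-point from the second topmost point of its c-line, and each l-point from the relevant topmost points --- are not yet realized, since these are the non-horizontal, non-vertical segments collected in $S_L$ (Figure \ref{3SAT}).

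First I would carry out the cleanup already indicated before Lemma \ref{move1}: treating each vertical line as $l'$ and the horizontal lines meeting it as the $l_i$, apply Lemma \ref{move1} to every vertical line of $\psi$. After this pass every non-ordinary line through an embedding point is horizontal or vertical, so the only surviving blocking relations are the grid ones between consecutive points of a common horizontal or vertical line; by the coordinate assignment these are exactly the path- and adjacency-relations prescribed inside each variable and clause pattern. Because each move is $\epsilon$-small and preserves both the order along the lines and the angular order around $p_1$ and $p_2$ (Lemma \ref{move1}(\ref{l1pd}),(\ref{l1pe}),(\ref{l1pf})), the embedding of $G$ is undisturbed and the patterns stay in their disjoint bands.

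Next I would reinstate the required diagonal blockings pattern by pattern using good pentuples. For each variable $x_i$ (in the state dictated by $S_\theta$) and each clause $C_j$, the segments of $S_L$ that must be blocked are grouped into families $S$ whose endpoints split into a left set $A$ and a right set $C$, with the available blockers (l-points, b-points, and where needed d-points) forming the set $B$ of points interior to the bounding quadrilateral $Q$ of the pattern. The staircase layout of the coordinate assignment makes $B$ an ordered set and guarantees $|B| \ge |S|$, so $(Q,A,B,C,S)$ is a good pentuple; applying Lemma \ref{mainmovelem1} relocates each blocker onto exactly one required segment, creating the intended non-edge while introducing no third non-ordinary line through any affected point (Lemma \ref{mainmovelem1}, properties (c) and (d)).

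The main obstacle is the non-interference bookkeeping. Since the variable and clause patterns occupy pairwise disjoint horizontal and vertical bands (by the offsets $a,b,c,e,f,g$ of Section \ref{subsecconsred}), the good pentuples are disjoint and each application of Lemmas \ref{move3} and \ref{mainmovelem1} --- being $\epsilon$-small and angular-order preserving --- leaves the $G$-embedding and every previously installed blocking intact. What must then be checked, pattern by pattern, is that every non-edge of $G''$ is realized either as a grid blocking or as a good-pentuple segment, that no unwanted non-edge survives (all spurious diagonals were removed in the first stage and Lemma \ref{mainmovelem1} adds none), and that the sequential use of the three moving lemmas across all patterns remains mutually consistent. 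Once this verification is complete, the resulting embedding $\xi$ is a canonical visibility embedding of $G''$.
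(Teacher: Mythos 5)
Your overall strategy coincides with the paper's: first straighten every spurious non-ordinary line by applying Lemma \ref{move1} to each vertical line of $\psi$, then install the required blockers by forming good pentuples and invoking Lemma \ref{mainmovelem1}. For the blockings internal to a single variable pattern (the t-point, and where relevant the d-point, against the opposite o-line) your argument matches the paper's, and there the disjoint-bands observation does suffice.

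The gap is in your treatment of the third family of blockings, the segments of $S_L$ with an endpoint on $L_3$. These all terminate on the one common topmost line and descend into different patterns, so the corresponding quadrilaterals are not confined to pairwise disjoint bands; your blanket claim that ``the good pentuples are disjoint'' fails for them, and the applications of Lemma \ref{move3} needed to bring a b-point onto such a segment relocate entire horizontal lines, which could disturb blockings installed earlier. The paper avoids this with an explicit sequential sweep: it introduces an auxiliary line $p_1p_3$ separating $L_3$ from $L_4$, pushes the band $H_L$ of horizontal lines above it, and then processes the b-points of $B_L$ from the lowest upward, each time forming a \emph{singleton} good pentuple ($B_i=\{b_i\}$, $S_i=\{s_i\}$) bounded above by the current floor, and after placing $b_i$ it deletes $l^h_{y_i}$ and everything below it from $H_L$ and promotes $l^h_{y_i}$ to the new floor, so later relocations cannot touch blockers already placed. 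It also needs the separate case $r_1(s_i)=r_1(s_j)$ of two segments sharing an endpoint on a c-line, handled by the two b-points on the two vertical lines immediately to the left of that c-line. Your closing paragraph defers exactly this consistency check (``the sequential use of the three moving lemmas across all patterns remains mutually consistent''), but that check is the substance of the lemma, so as written the proof is incomplete at its most delicate point.
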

\begin{proof}
 The only adjacency relationships of $G''$ that $\psi$ may not satisfy are those $(i)$ between o-vertices and t-vertices,
 $(ii)$ between o-vertices and d-vertices, and $(iii)$ between t-vertices, l-vertices, 
 d-vertices and vertices corresponding to certain points
 on $L_3$.
Consider $(i)$ and $(ii)$.
 For each $x_i$, if the t-point of $t_i$ is embedded on its left o-line, then consider the quadrilateral $Q$ 
 formed by the horizontal line passing through the topmost right o-point, horizontal line of the t-point,
 left o-line and right o-line. Draw two more vertical and horizontal lines such that they form a quadrilateral $Q'$ 
 in the interior of $Q$, and only nominally smaller than $Q$. It can be seen that $Q'$, $t_i$, the right o-points,
 the l-points of the form $\overline l_{i,j}$, and all segments between the t-point and the right o-points
 form a good pentuple. Hence these segments can be blocked by relocating the corresponding l-points using Lemma 
 \ref{mainmovelem1} (Figure \ref{embex} (a)).
 A similar argument works if the t-point is embedded on the right o-line (Figure \ref{embex} (b)), or if the d-point is embedded
 on the d-line (Figure \ref{embex} (b)).
 $\\ \\$
 Consider $(iii)$.
 Let $S_L$ be all such segments having an endpoint on $L_3$   (Figure \ref{3SAT}).
 Locate a point $p_3$ on $l^h_1$ such that 
 $L_3$ and $L_4$ are above and below $p_1p_3$ respectively.
Moreover, the intersection points of $S_L$ with vertical lines of $\psi \setminus \{L_3 \}$ lie below $p_1p_3$.
 Let $H_L$ be the set of all horizontal lines between $L_3$ and $l^h_{a+4n+1}$,
 where $a = \alpha -7n - 6m \sum\limits_{x_i \in \theta}\overline{n_i} - 1$  as stated in Section \ref{subsecconsred}. 
 Apply lemma \ref{move3} on any vertical line of $\psi$, and treating its intersection points with horizontal lines as embedding points, 
 all horizontal lines of $H_L$ are relocated
 so that they are above 
 ${p_1p_3}$.
  Consider any segment $s_j \in S_L$. Let the two endpoints of $s_j$ in $\psi$ be $r_1(s_j)$ and $r_2(s_j)$, where $r_1(s_j) \in L_3$. Let the 
  two vertical lines passing through $r_1(s_j)$ and $r_2(s_j)$ be $l^v_{u_j}$ and $l^v_{w_j}$ respectively. Observe that if 
  $u_j < w_j$ (or, $u_j > w_j$) then $l^v_{u_j+1}$ (respectively, $l^v_{u_j-1}$) contains a b-point lying on a horizontal line of $H_L$,
  due to the construction of $\psi$.
 Such a b-point exists for every segment in $S_L$.
 For two segments of $S_L$ with a common endpoint on a c-line, the two b-points on the two vertical lines immediately to the left of the c-line 
 correspond to the two segments.
 Let $B_L$ denote the set of all these b-points.
Now consider a b-point $b_i \in B_L$ such that the horizontal line passing through $b_i$ (say, $l^h_{y_i}$) is lower than the 
horizontal line passing through any other b-point of $B_L$.
$ \\ \\$
Let $s_i \in S_L$ be the segment corresponding to $b_i$. 
Let $Q_i$ be the quadrilateral enclosed by $l^v_{u_i}$, $l^v_{u_{i+2}}$, ${p_1p_3}$
and $l^h_{y_i -1}$, assuming $u_i < w_i$. 
Observe that $Q_i$, $B_i = \{ b_i \}$, $A_i = \{r_1(s_i) \}$, $C_i = \{r_2(s_i) \}$ and $S_i = \{ s_i \}$
constitute a good pentuple, say, $U_i$. Apply Lemma \ref{mainmovelem1} on $U_i$ to place $b_i$ as a blocker on $s_i$.
Remove $b_i$ and $s_i$ from $B_L$ and $S_L$ respectively. Remove $l^h_{y_i}$ and all horizontal lines below it from $H_L$.
Repeat the process on the lowest b-point of $B_L$, 
treating $l^h_{y_i}$ as the new $p_1p_3$.
$\\ \\$
It may so happen that the same embedding point on $L_3$ is the endpoint of two segments $s_i$ and $s_j$ in $S_L$,
i.e., $r_1(s_i) = r_1(s_j)$.
This case arises only when 
$r_1(s_i)$ and $r_1(s_j)$
lie
on a c-line of $\psi$.
In this case, the two b-points on the two vertical lines immediately to the left of the c-line are relocated
as blockers on $s_i$ and $s_j$, using an analogous process.
$\\ \\$
Hence, b-points can be assigned as blockers on segments of $S_L$ in cases $(i)$, $(ii)$ and $(iii)$. Therefore,
the canonical embedding $\psi$ can be transformed into a visibility embedding  $\xi$ of $G''$.
\end{proof}
$\\$
Finally, we have the following theorem.
\begin{theorem}
 The recognition problem for PVGs in NP-hard.
\end{theorem}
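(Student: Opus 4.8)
The plan is to exhibit a polynomial-time many-one reduction from 3-SAT, which is NP-complete, to the point visibility graph recognition problem. Given a 3-SAT formula $\theta$ with $n$ variables and $m$ clauses, I would form the reduction graph $G''$ exactly as constructed in Section \ref{sec3sat}, and then argue that $G''$ is a PVG if and only if $\theta$ is satisfiable. By Lemma \ref{polytime} the graph $G''$ can be written down in time polynomial in $|\theta|$, so the map $\theta \mapsto G''$ is a legitimate polynomial-time reduction. It then only remains to verify the two directions of the biconditional, after which NP-hardness of recognition follows immediately from the NP-hardness of 3-SAT.

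First I would dispatch the \emph{only if} direction, which is already packaged by the preceding lemmas. If $\theta$ is not satisfiable, then $G''$ is not a PVG: this is exactly Lemma \ref{redfrstdir}, which rests on Lemma \ref{canemb} (should any visibility embedding of $G''$ exist, it must be canonical, because the large MSGG part $G$ has an embedding that is unique up to the preservation of lines by Lemmas \ref{4lp} and \ref{uemsgg}, and the $G'$-vertices are then pinned to the free intersection points encoded by their assigned coordinates) together with Lemma \ref{satcan} (no canonical visibility embedding can exist when some clause is left unsatisfied, since then no l-point is free to block the c-point of that clause). So this half of the equivalence is complete.

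The substantive direction is the \emph{if} direction: when $\theta$ is satisfiable I must actually construct a visibility embedding of $G''$. Here I would fix a satisfying assignment $S_\theta$ and first build the canonical embedding $\psi$ that places each t-point, d-point, and l-point on the o-line, d-line, or associated/c-line dictated by $S_\theta$, as described just before Lemma \ref{move1}. This $\psi$ realizes almost all adjacencies of $G''$ correctly; the only defects are the blocking relationships among the o-, t-, d-, and l-vertices and the vertices of certain points on $L_3$, and these are precisely the relationships repaired by Lemma \ref{mainmovelem2}. I would therefore invoke Lemma \ref{mainmovelem2} to convert $\psi$ into a genuine visibility embedding $\xi$, whence $G''$ is a PVG.

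The main obstacle is this \emph{if} direction, and within it the technical heart is Lemma \ref{mainmovelem2}, which must install every required blocker without disturbing any adjacency that $\psi$ already realizes. The delicate point is that each horizontal or vertical line must be rotated so as to drop a blocker onto one designated segment while keeping every other embedding point on that line off all third non-ordinary lines; this is exactly what the good-pentuple machinery of Lemma \ref{mainmovelem1}, built on the line-rotation Lemmas \ref{move1} and \ref{move3}, is designed to guarantee, and it is what permits the local perturbations to be carried out one blocker at a time without creating spurious collinearities that would kill other edges. Checking that the configurations arising in cases $(i)$, $(ii)$, and $(iii)$ genuinely form good pentuples, and that the sizing inequalities ensuring $|B|\ge|S|$ hold, is where the careful bookkeeping of the coordinate assignments in Section \ref{subsecconsred} is indispensable. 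Once $\xi$ is produced, combining both directions with Lemma \ref{polytime} yields the theorem.
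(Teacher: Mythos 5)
Your proposal is correct and follows essentially the same route as the paper: it cites Lemma \ref{polytime} for polynomial-time constructibility of $G''$, Lemma \ref{redfrstdir} for the direction that unsatisfiability of $\theta$ forces $G''$ not to be a PVG, and Lemma \ref{mainmovelem2} for converting the canonical embedding $\psi$ into a visibility embedding when $\theta$ is satisfiable. The additional commentary you give on the supporting lemmas is a faithful summary of the paper's own argument rather than a different proof.
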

\begin{proof}
 Given a 3-SAT formula $\theta$, the construction of the corresponding graph $G''$ takes polynomial time, due to Lemma \ref{polytime}.
 The graph $G''$ is a PVG if and only if $\theta$ is satisfiable, due to Lemma \ref{redfrstdir} and Lemma \ref{mainmovelem2}.
 Hence the recognition problem for PVGs in NP-hard.
\end{proof}
\begin{cor}
 The reconstruction problem for PVGs in NP-hard.
\end{cor}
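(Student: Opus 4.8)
The plan is to obtain the corollary directly from the preceding theorem by exploiting the fact that reconstruction is at least as hard as recognition: any procedure that, on input a graph $G$, actually produces a visibility embedding of $G$ can be converted into a procedure that decides whether $G$ is a PVG at all. Since the theorem already shows recognition to be NP-hard, transferring that hardness to reconstruction is the goal.

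First I would reuse the reduction underlying the theorem. Given a 3-SAT formula $\theta$, construct the reduction graph $G''$ in polynomial time by Lemma \ref{polytime}. By the theorem, $G''$ is a PVG if and only if $\theta$ is satisfiable. The crucial observation is that $G''$ admits a visibility embedding \emph{exactly} when it is a PVG, so a reconstruction procedure applied to $G''$ can succeed only on the satisfiable instances; this is what lets reconstruction decide satisfiability.

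Next I would make the decision step explicit and robust against the non-PVG case. Suppose reconstruction could be carried out in polynomial time. Run it on $G''$ to obtain a candidate point set $P$, then compute the point visibility graph of $P$ in $O(|P|^2)$ time using the algorithm of Chazelle et al.\ or Edelsbrunner et al.\ cited earlier, and compare the result with $G''$. I would declare $\theta$ satisfiable if and only if the computed PVG equals $G''$. This verification is what guards against unsatisfiable instances: when $\theta$ is unsatisfiable no point set realizes $G''$, so whatever the reconstruction step returns (if anything) will fail the comparison. Consequently a polynomial-time reconstruction algorithm would yield a polynomial-time decision procedure for 3-SAT, which establishes the NP-hardness of the reconstruction problem.

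The only point that needs care is conceptual rather than computational: reconstruction is a search (function) problem rather than a decision problem, and a visibility embedding might in principle require coordinates of superpolynomial bit-length. I expect this to be the main thing a reader worries about, but it is not an obstacle to the claim. The corollary asserts only NP-hardness, a lower bound, and does not assert membership in NP; the verification-based argument above derives a polynomial-time decision procedure from any assumed polynomial-time solver, and this implication holds regardless of how large a genuine embedding would have to be.
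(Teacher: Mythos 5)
Your proposal is correct and matches the argument the paper intends: the corollary is stated without proof precisely because it follows from the theorem by the standard observation that a polynomial-time reconstruction procedure, combined with the $O(n^2)$ PVG-computation algorithm of Chazelle et al.\ or Edelsbrunner et al.\ as a verifier, would decide recognition of $G''$ and hence 3-SAT in polynomial time. Your explicit verification step (guarding against garbage output on unsatisfiable instances) and your remark that the coordinate bit-length issue does not affect a hardness lower bound are exactly the right ways to make the omitted argument rigorous.
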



\section{Concluding remarks} \label{conclrem}
In this paper we have proved that
the recognition and reconstruction problems for point visibility graphs, are
NP-hard. 
On the other hand, we know that the recognition problem for point visibility graphs is in PSPACE \cite{recogpvg-2014}.
It has been pointed out by Ghosh and Goswami \cite{prob-ghosh} that the recognition problem for point visibility graphs,
and to show whether the problem lies in NP, are still open. 


\section{Acknowledgements} 
The author would like to thank Jean-Lou De Carufel,
Anil Maheshwari and Michiel Smid for the many discussions that 
helped to structure the paper.
The author would also like to thank Amitava Bhattacharya,  
and Prahladh Harsha for their valuable suggestions. 
The author is specially thankful to the anonymous referees and Subir Kumar Ghosh 
 for their suggestions which have improved the presentation of the paper significantly.
\bibliographystyle{plain}
\bibliography{vis}
\end{document}